\newcommand{\dom}{\operatorname{dom}}
\newcommand{\ar}{\operatorname{ar}}
\newcommand{\Nat}{\mathbb{N}}
\renewcommand{\epsilon}{\varepsilon}
\newcommand{\restrict}{\mathord{\restriction}}
\newcommand{\aut}[1]{\mathcal{#1}}
\newcommand{\Rel}{\mathcal{R}}
\newcommand{\struct}[1]{\mathfrak{#1}}
\newcommand{\FO}[1]{\mathsf{FO}\ifthenelse{\equal{#1}{}}{}{_{#1}}}
\newcommand{\MSO}[1]{\mathsf{MSO}\ifthenelse{\equal{#1}{}}{}{_{#1}}}
\newcommand{\Int}{\mathcal{I}}
\newcommand{\signW}[1]{\mathsf{W}#1}
\newcommand{\signT}[1]{\mathsf{T}#1}
\newcommand{\Pre}{\operatorname{Pre}}
\newcommand{\trees}[2][\empty]{T_{#2\ifthenelse{\equal{#1}{\empty}}{}{,#1}}}
\newcommand{\prefix}{\prec}
\newcommand{\prefixeq}{\preceq}
\renewcommand{\phi}{\varphi}
\newcommand{\cmp}{\operatorname{cmp}\nolimits}
\newcommand{\killenumspace}{\vspace{-0.5ex}}
\newcommand{\run}[3]{#1\in [#2,#3]}
\newcounter{mysection}
\newtheorem{lemma*}{Lemma}[mysection]
\begin{document}

\title{Word Automaticity of Tree Automatic Scattered Linear Orderings Is Decidable}
\author{Martin Huschenbett}
\institute{Institut f\"ur Theoretische Informatik, Technische Universit\"at Ilmenau, Germany\\\email{martin.huschenbett@tu-ilmenau.de}}
\maketitle

\begin{abstract}
\noindent A tree automatic structure is a structure whose domain can be encoded by a regular tree language such that each relation is recognisable by a finite automaton processing tuples of trees synchronously. Words can be regarded as specific simple trees and a structure is word automatic if it is encodable using only these trees. The question naturally arises whether a given tree automatic structure is already word automatic. We prove that this problem is decidable for tree automatic scattered linear orderings. Moreover, we show that in case of a positive answer a word automatic presentation is computable from the tree automatic presentation.
\end{abstract}

%\keywords{bla}

%\linenumbers
\section{Introduction}

%How can computers cope with infinite structures? More precisely, how can infinite structures be stored, manipulated, and queried by means of finite memory and time? This is maybe the most important question of computer science at all.
%
%One possible answer goes back to the 1960s.

The fundamental idea of automatic structures can be traced back to the 1960s when B\"uchi, Elgot, Rabin, and others used finite automata to provide decision procedures for the first-order theory of Presburger arithmetic $(\Nat;+)$ and several other logical problems. Hodgson generalised this idea to the concept of \emph{automaton decidable} first-order theories. Independently of Hodgson and inspired by the successful employment of finite automata and their methods in group theory, Khoussainov and Nerode \cite{KN95} initiated the systematic investigation of \emph{automatic structures}. Recalling the efforts from the 1960s, Blumensath \cite{Blu99} extended this concept notion beyond finite automata to finite automaton models recognising infinite words, finite trees, or infinite trees.

Basically, a countable relational structure is \emph{tree automatic} or \emph{tree automatically presentable} if its elements can be encoded by finite trees in such a way that its domain and its relations are recognisable by finite automata processing either single trees or tuples of trees synchronously. A structure is \emph{word automatic} if its elements can be encoded using only specific simple trees which effectively represent words. In contrast to the more general concept of \emph{computable structures} and based on the strong closure properties of recognisability, automatic structures provide pleasant algorithmic features. In particular, they possess decidable first-order theories.

Due to this latter fact, the concept of automatic structures gained a lot attention which led to noticeable progress (cf.~\cite{BGR11,Rub08}). Automatic presentations were found for many structures, some structures where shown to be tree but not word automatic, for instance Skolem arithmetic $(\Nat;\times)$, whereas other structures, like the random graph, were proven to be neither word nor tree automatic. For some classes of structures it was even possible to characterise its automatic members, for example an ordinal is word automatic respectively tree automatic precisely if it is less than $\omega^\omega$ respectively $\omega^{\omega^\omega}$. Certain extensions of first-order logic were shown to preserve decidability of the corresponding theory. The question whether two automatic structures are isomorphic turned out to be highly undecidable in general as well as for some restricted classes of structures. At the same time, the isomorphism problem for word automatic ordinals was proven to be decidable. Last but not least, the different classes of automatic structures was characterised by means of interpretations in universal structures.

Due to the fact that word automaticity is a special case of tree automaticity, the question naturally arises whether a given tree automatic structure is already word automatic. As far as we know, this problem was neither solved in general nor for any restricted class of structures. For that reason, we investigate the respective question for scattered linear orderings in this paper. Actually, we prove the corresponding problem to be decidable and our main result is as follows:

\begin{theorem}
\label{thm:main}
Given a tree automatic presentation $\aut P$ of a scattered linear ordering $\struct L$, it is decidable whether $\struct L$ is word automatic. In case $\struct L$ is word automatic, one can compute a word automatic presentation of $\struct L$ from $\aut P$.
\end{theorem}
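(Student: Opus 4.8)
The driving invariant is the \emph{finite-condensation rank} $\rank(\struct L)$: iterate the finite condensation $\cmp$, which collapses every maximal finite interval (archimedean class) to a point, and take the least $\alpha$ with $\cmp^{\alpha}(\struct L)$ a one-point order. Two structural facts organise the argument, and I would set them up first. On the one hand, a word automatic scattered linear ordering has \emph{finite} finite-condensation rank, whereas a tree automatic one has rank strictly below $\omega^{\omega}$ (the scattered-order analogues of Delhommé's bounds $\omega^{\omega}$ and $\omega^{\omega^{\omega}}$ for automatic ordinals). On the other hand, among scattered linear orderings of \emph{finite} rank, word automaticity is governed by purely local data of the rank decomposition --- boundedly many infinite archimedean types at each level, arranged in a word-regular rather than merely tree-regular fashion --- and I would isolate the precise condition as a separate structural lemma. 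Given these, the decision procedure is: compute $\rank(\struct L)$ from $\aut P$; if it is infinite, answer ``no''; if it is some $n<\omega$, test the finite-rank criterion; in the positive case, build a word automatic presentation.

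For the first step I would show that the finite condensation of a tree automatic scattered linear ordering is again tree automatic, \emph{effectively}. Although ``$[x,y]$ is finite'' is not first-order, it is definable in $\FO{}$ extended by the quantifier $\exists^{\infty}$, and tree automatic structures are effectively closed under such definitions. The archimedean classes are the classes of the resulting $\FO{}(\exists^{\infty})$-definable equivalence; selecting from each class its $\llex$-least element yields a regular tree language $R$ --- a transversal --- and $\cmp(\struct L)$ is isomorphic to $R$ equipped with the restriction of the (automatic) ordering, hence tree automatic, with presentation computable from $\aut P$. Iterating produces tree automatic presentations of all $\cmp^{k}(\struct L)$ uniformly; running in parallel the (decidable) test ``$\cmp^{k}(\struct L)$ has at most one element'' semidecides $\rank(\struct L)<\omega$ together with its value. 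To decide finiteness in general --- and thereby settle also the infinite-rank case --- I would invoke the machinery for computing ordinal ranks of well-founded relations presented by tree automata (a pumping analysis of the synchronous product automaton), which outputs $\rank(\struct L)$ as an ordinal notation below $\omega^{\omega}$.

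When $\rank(\struct L)=n<\omega$, the $n$-fold condensation exhibits $\struct L$ as an $n$-level tower: at each level a rank-one linear ordering --- hence, archimedean class by archimedean class, a finite sum, $\omega$, $\omega^{\ast}$, or $\zeta$ --- labelled by the types occurring below it. Here $\struct L$ is word automatic precisely when, at every level, only finitely many \emph{infinite} lower types occur and the induced labelling of the index order (restricted to the positions carrying infinite types) is realised by a finite word automaton; both requirements are boundedness-and-eventual-periodicity properties of the runs of the given automaton and are decidable. In the positive case one assembles a word automatic presentation bottom-up, using that a word-regular sum of uniformly word automatic orderings indexed by a set that is finite, $\omega$, $\omega^{\ast}$, or $\zeta$ is again word automatic, effectively. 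In the negative case the very same pumping argument that refutes finite rank, now applied within a single level, exhibits either an unbounded family of infinite archimedean types or an unbounded width, certifying non-word-automaticity.

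The main obstacle is the finite-rank structural lemma, in both directions and with effectivity. \emph{Necessity} of the criterion is the genuinely hard, Delhommé-type ingredient: from a hypothetical word automaton for $\struct L$ one must extract, by a Ramsey/pumping argument on accepting runs, an unbounded family of incomparable suborderings whenever the criterion fails --- and this must be arranged so that failure of the criterion is detectable by a terminating inspection of $\aut P$. \emph{Sufficiency} is more constructive but still delicate: one must convert the tree-regular bookkeeping of the rank decomposition into word-regular bookkeeping without breaking synchronicity, the key being that finiteness of the infinite-type set at each level lets one trade ``position in a subtree'' for ``position in a word together with a bounded colour'', carried out uniformly through all $n$ levels. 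A secondary obstacle is the effective-condensation claim, namely that $\FO{}(\exists^{\infty})$-definability and the regularity of the $\llex$-least transversal really do survive iteration.
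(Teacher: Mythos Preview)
Your route is quite different from the paper's and, as written, carries a real gap together with an unnecessary detour. The paper does not compute the FC-rank at all. Instead it isolates a purely syntactic property of the domain tree language: call it \emph{slim} if there is a uniform bound on the number of nodes per level, and \emph{fat} otherwise. Three facts then give the theorem: slimness of a regular tree language is decidable by a simple cycle analysis on the automaton (Theorem~\ref{thm:slim_decidable}); every tree automatic structure with slim domain is effectively word automatic, via a level-by-level encoding of bounded-width trees as words (Theorem~\ref{thm:slim_TA_are_WA}); and a scattered linear ordering with fat domain has infinite FC-rank --- a Ramsey-plus-pumping argument embeds some product from $\mathcal{N}_r$ for every $r$ --- hence is not word automatic by Khoussainov--Rubin--Stephan (Theorem~\ref{thm:fat_TA_ord_are_not_WA}). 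The decision procedure is simply ``test slimness of the domain''.

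The step you flag as the main obstacle, the ``finite-rank structural lemma'', is in fact vacuous: combining the last two facts above, a tree automatic scattered linear ordering with finite FC-rank already has slim domain and is therefore word automatic, with no further criterion to check and with the word presentation coming straight from the encoding of Theorem~\ref{thm:slim_TA_are_WA}. Your level-by-level analysis of archimedean types is not needed. What then remains in your plan is to decide whether the FC-rank is finite, and here the sketch has a genuine gap. The semidecision of finiteness by iterated $\FO{}(\exists^{\infty})$-definable condensation is sound, but for the infinite-rank side you invoke ``machinery for computing ordinal ranks of well-founded relations presented by tree automata''. The FC-rank is not the rank of a well-founded relation on the domain; it is the length of a transfinite sequence of quotients, and recasting it as such a rank --- or independently establishing and exploiting a bound such as rank $<\omega^{\omega}$ for all tree automatic scattered orders --- is exactly the nontrivial content you would still have to supply. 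The paper bypasses this entirely by making slimness of the given presentation, not the rank of the abstract order, the decidable invariant.
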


\noindent Since every well-ordering is scattered, this result still holds if $\struct L$ is assumed to be an ordinal. The proof of Theorem~\ref{thm:main} splits into three parts. First, we introduce the notion of \emph{slim} tree languages and prove this property to be decidable (Theorem~\ref{thm:slim_decidable}). Second, we show that a slim domain is sufficient for a tree automatic structure to be word automatic (Theorem~\ref{thm:slim_TA_are_WA}). Last, we demonstrate that this condition is also necessary in case of scattered linear orderings (Theorem~\ref{thm:fat_TA_ord_are_not_WA}). Altogether, Theorem~\ref{thm:main} follows from the three mentioned theorems.\footnote{Proofs of all seemingly unproven lemmas as well as the interpretations from Section~\ref{sec:slim_TA_are_WA} can be found in the appendix.}

\section{Background}

In this section we recall the necessary notions of logic, automatic structures (cf.~\cite{BGR11,Rub08}), tree automata (cf.~\cite{GS97}), and linear orderings. We agree that the natural numbers $\Nat$ include $0$ and that $[m,n]=\{m,m+1,\dotsc,n\}\subseteq\Nat$ for all $m,n\in\Nat$.

\paragraph{Logic.}

A \emph{(relational) signature} $\tau=(\Rel,\ar)$ is a finite set $\Rel$ of \emph{relation symbols} together with a map $\ar\colon\Rel\to\Nat$ assigning to each $R\in\Rel$ its \emph{arity} $\ar(R)\geq 1$. A $\tau$-structure $\struct A=\bigl(A;(R^{\struct A})_{R\in\Rel}\bigr)$ consists of a set $A=\dom(\struct A)$, its \emph{domain}, and an $\ar(R)$-ary relation $R^{\struct A}\subseteq A^{\ar(R)}$ for each $R\in\Rel$. \emph{First order logic} $\FO\tau$ over $\tau$ is defined as usual, including an equality predicate. A \emph{sentence} is a formula without free variables. Writing $\phi(\bar x)$ means that all free variables of the formula $\phi$ are among the entries of the tuple $\bar x=(x_1,\dotsc,x_n)$. The set $\phi^{\struct A}$ is comprised of all $\bar a\in A^n$ satisfying $\struct A\models\phi(\bar a)$, where the latter is defined as usual.

\paragraph{Automatic Structures.}

The set of all \emph{(finite) words} over an alphabet $\Sigma$ is $\Sigma^\star$, the \emph{empty word} is $\epsilon$, and the \emph{length} of $w$ is $|w|$. Subsets of $\Sigma^\star$ are called \emph{languages} and $L\subseteq\Sigma^\star$ is \emph{regular} if it can be \emph{recognised} by some (non-deterministic) finite automaton.

Let $\Box\not\in\Sigma$ be a new symbol and $\Sigma_\Box=\Sigma\cup\{\Box\}$. For $n\geq 1$ consider an $n$-tuple $\bar w=(w_1,\dotsc,w_n)\in(\Sigma^\star)^n$ of words with $w_i=a_{i,1}a_{i,2}\dotsc a_{i,m_i}$ for all $\run i1n$. Let $m=\max\{m_1,\dotsc,m_n\}$ and $a_{i,j}=\Box$ for $\run j{m_i+1}m$. The \emph{convolution} of $\bar w$ is the word $\otimes\bar w=\bar a_1\dotsc\bar a_m\in(\Sigma_\Box^n)^\star$ with $\bar a_j=(a_{1,j},\dotsc,a_{n,j})\in\Sigma_\Box^n$ for all $\run j1m$. An $n$-ary relation $R\subseteq(\Sigma^\star)^n$ is \emph{automatic} if the language $\otimes R\subseteq(\Sigma_\Box^n)^\star$, which is comprised of all $\otimes\bar w$ with $\bar w\in R$, is regular.

A $\tau$-structure $\struct A$ with $\dom(\struct A)\subseteq\Sigma^\star$ is \emph{(word) automatic} if $\dom(\struct A)$ is regular and $R^{\struct A}$ is automatic for all $R\in\Rel$. A \emph{(word) automatic presentation} of $\struct A$ is a tuple $\bigl(\aut A_{\dom};(\aut A_R)_{R\in\Rel}\bigr)$ of finite automata such that $\aut A_{\dom}$ recognises $\dom(\struct A)$ and $\aut A_R$ recognises $\otimes R^{\struct A}$. Abusing notation, we call any structure $\struct B$ which is isomorphic to some word automatic structure $\struct A$ also \emph{(word) automatic}.

\paragraph{Tree Automata.}

A \emph{tree domain} is a non-empty, finite, and prefix-closed subset $D\subseteq\{0,1\}^\star$ satisfying $u0\in D$ iff $u1\in D$ for all $u\in D$. A \emph{tree} over $\Sigma$ is a map $t\colon D\to\Sigma$ where $\dom(t)=D$ is a tree domain. The set of all trees is denoted by $\trees\Sigma$ and its subsets are called \emph{(tree) languages}. For some $t\in\trees\Sigma$ and $u\in\dom(t)$ the \emph{subtree} of $t$ \emph{rooted} at $u$ is the tree $t\restrict u\in\trees\Sigma$ defined by
\begin{equation*}
	\dom(t\restrict u) = \Set{ v\in\{0,1\}^\star | uv\in\dom(t) }
	\quad\text{and}\quad
	(t\restrict u)(v) = t(uv)\,.
\end{equation*}
%For all $a\in\Sigma$ and $t_0,t_1\in\trees\Sigma$ there is a unique tree $t\in\trees\Sigma$ satisfying $t(\epsilon)=a$, $0,1\in\dom(t)$, $t\restrict 0=t_0$, and $t\restrict 1=t_1$, which we denote by $a(t_0,t_1)$. The unique tree $t\in\trees\Sigma$ with $\dom(t)=\{\epsilon\}$ and $t(\epsilon)=a$ is denoted with $\single a$. Now, every tree $t\in\trees\Sigma$ is either of the shape $a(t_0,t_1)$ or one of the $\single a$, where $a$, $t_0$, and $t_1$ are uniquely determined.

\noindent A \emph{(deterministic bottom-up) tree automaton} $\aut A=(Q,\iota,\delta,F)$ over $\Sigma$ consists of a finite set $Q$ of \emph{states}, a \emph{start state function} $\iota\colon\Sigma\to Q$, a \emph{transition function} $\delta\colon \Sigma\times Q\times Q\to Q$, and a set $F\subseteq Q$ of \emph{accepting states}. For each $t\in\trees\Sigma$ a state $\aut A(t)\in Q$ is defined recursively by $\aut A(t)=\iota\bigl(t(\epsilon)\bigr)$ if $\dom(t)=\{\epsilon\}$ and $\aut A(t)=\delta\bigl(t(\epsilon),\aut A(t\restrict 0),\aut A(t\restrict 1)\bigr)$ otherwise.
%\begin{equation*}
%	\A(t) = \begin{cases}
%		\iota\bigl(t(\epsilon)\bigr) & \text{if $\dom(t)=\{\epsilon\}$,}\\
%		\delta\bigl(t(\epsilon),\A(t\restrict 0),\A(t\restrict 1)\bigr) & \text{otherwise.}
%	\end{cases}
%\end{equation*}
The language \emph{recognised} by $\aut A$ is the set of all $t\in\trees\Sigma$ with $\aut A(t)\in F$. A language $L\subseteq\trees\Sigma$ is \emph{regular} if it can be recognised by some tree automaton.

The \emph{convolution} of $\bar t=(t_1,\dotsc,t_n)\in(\trees\Sigma)^n$ is the tree $\otimes\bar t\in\trees{\Sigma_\Box^n}$ defined by $\dom(\otimes\bar t)=\dom(t_1)\cup\dotsb\cup\dom(t_n)$ and $(\otimes\bar t)(u)=\bigl(t'_1(u),\dotsc,t'_n(u)\bigr)$, where $t'_i(u)=t_i(u)$ if $u\in\dom(t_i)$ and $t'_i(u)=\Box$ otherwise. A relation $R\subseteq(\trees\Sigma)^n$ is \emph{automatic} if the language $\otimes R\subseteq\trees{\Sigma_\Box^n}$ is regular.

\emph{Tree automatic structures} and \emph{tree automatic presentations} are defined like in the word automatic case, but based on trees and tree automata.

\paragraph{Linear Orderings.} A \emph{linear ordering} is a structure $\struct A=\bigl(A;<^{\struct A}\bigr)$ where $<^{\struct A}$ is a \emph{strict} linear order relation on $A$. The ordering $\struct A$ is \emph{scattered} if $(\mathbb{Q};<)$ cannot be embedded into $\struct A$. Obviously, every well-ordering is scattered. For any two linear orderings $\struct A$ and $\struct B$ we define another linear ordering $\struct A\cdot\struct B$ by $\dom(\struct A\cdot\struct B)=\dom(\struct A)\times\dom(\struct B)$ and $(a_1,b_1) <^{\struct A\cdot\struct B} (a_2,b_2)$ iff either $a_1 <^{\struct A} a_2$ or $a_1=a_2$ and $b_1 <^{\struct B} b_2$. Finally, if $\struct A_1$ can be embedded into $\struct B_1$ and $\struct A_2$ into $\struct B_2$, then $\struct A_1\cdot\struct A_2$ can be embedded into $\struct B_1\cdot\struct B_2$.

\section{Slim and Fat Tree Languages}
\label{sec:slim_and_fat}

In this section, we introduce the notion of slim tree languages and show that it is decidable whether the language recognised by a given tree automaton is slim.

\begin{definition}
The \emph{thickness} $\diameter(t)$ of a tree $t\in T_\Sigma$ is the maximal number of nodes on any level, i.e.,
\begin{equation*}
	\diameter(t) = \max \Set{ \bigl\vert\dom(t)\cap\{0,1\}^\ell\bigr\vert | \ell\geq 0 } \in\Nat\,.
\end{equation*}
For every $K\geq 1$ the set of all $t\in\trees\Sigma$ with $\diameter(t)\leq K$ is denoted by $\trees[K]\Sigma$. A~tree language $L\subseteq\trees\Sigma$ is \emph{slim} if there exists some $K\geq 1$ such that $L\subseteq\trees[K]\Sigma$, otherwise $L$ is \emph{fat}.
\end{definition}

\noindent A tree automaton $\aut A$ is \emph{reduced} if for every state $q$ of $\aut A$ there is a tree $t\in\trees\Sigma$ with $\aut A(t)=q$. For every tree automaton $\aut A$ one can compute a reduced tree automaton which recognises the same language and has no more states than $\aut A$.

\begin{theorem}
\label{thm:slim_decidable}
Given a reduced tree automaton $\aut A$, it is decidable whether the tree language $L$ recognised by $\aut A$ is slim or fat. If $L$ is slim, then $L\subseteq\trees[2^{n-1}]\Sigma$, where $n$ is the number of states of $\aut A$.
\end{theorem}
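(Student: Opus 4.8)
The plan is to analyze the structure of the automaton's runs on trees and characterize when arbitrarily ``wide'' levels can occur. First I would observe that the thickness of a tree $t$ is witnessed by a level $\ell$ on which $t$ has many nodes; each such node $u\in\dom(t)\cap\{0,1\}^\ell$ carries a state $\aut A(t\restrict u)$, and since $\aut A$ is reduced, every state is reachable by some tree. The key pumping idea: if there is a tree whose level-by-level ``state profiles'' can be enlarged without bound, then $L$ is fat; otherwise a bound of the form $2^{n-1}$ should fall out. To make this precise I would pass from individual trees to \emph{contexts} (trees with a designated set of ``holes'' at the leaves) and track, for a context with holes at level $\ell$, the tuple of states appearing at those holes, read as a multiset or even just as a \emph{set} since states can repeat.

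The central combinatorial object I would introduce is a derivation of state-sets: starting from singletons $\{\iota(a)\}$ at the leaves and combining via $\delta$, one obtains for each tree $t$ and each level $\ell$ a set $S\subseteq Q$ of states occurring at level $\ell$. I would define a relation on subsets of $Q$ describing which ``wider'' configuration can be produced from a ``narrower'' one by expanding nodes (replacing a leaf-state by the two children that $\delta$ would merge into it, or more generally splicing in a context). Fatness is then equivalent to the existence of a cycle in this expansion process that strictly increases the number of nodes on some level while returning to a state-set it has already visited — a classical pumping configuration. Since there are only $2^n$ subsets of $Q$, and a finer analysis shows the relevant ``frontier configurations'' are really sets (not multisets) together with a bounded amount of extra bookkeeping, one gets that if no such increasing cycle exists then every level has at most $2^{n-1}$ nodes: a level with $2^{n-1}+1$ or more nodes forces, by a counting argument on the $n$ states and the tree structure, a repeated configuration along a root-to-level path that can be pumped. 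Decidability follows because the expansion relation on the finite set of configurations is computable, so one can just search for an increasing cycle.

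The main obstacle, I expect, is getting the exact bound $2^{n-1}$ rather than a weaker bound like $2^n$ or $n\cdot 2^n$. A naive argument that merely counts distinct subsets of $Q$ along a path gives $2^n$; shaving it to $2^{n-1}$ requires exploiting that the two subtrees feeding a node contribute state-sets whose union (together with the node's own state) is constrained, or that one of the two ``halves'' at each branching can be absorbed, effectively halving the count. I would handle this by setting up the induction on the height of the tree so that the inductive hypothesis bounds the thickness of each of the two principal subtrees by $2^{n-1}$ and then argues that the root level adds nothing and that the two subtrees' wide levels cannot both be simultaneously maximal unless a pumpable repetition already occurred — yielding $2^{n-1}$ overall. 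The rest, namely that a pumpable repetition genuinely produces unboundedly thick trees (so the characterization is tight and $L$ is really fat in that case), is routine: iterate the pumping context and track that the targeted level grows by at least one node each time.
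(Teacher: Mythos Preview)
Your decidability route---tracking level-wise state-sets and searching for an ``increasing cycle'' in an expansion relation on subsets of $Q$---is plausible but heavier than what the paper does. The paper never passes to the powerset. It builds a directed graph $G_{\aut A}$ on $Q$ itself, with an edge $(p,q)$ whenever $\delta(a,p,r)=q$ or $\delta(a,r,p)=q$ for some $a,r$; such an edge is called \emph{special} when the witness $r$ can be chosen so that infinitely many trees evaluate to $r$ (equivalently, $r$ is reachable from a cycle of $G_{\aut A}$). The characterisation is then: $L$ is fat iff $G_{\aut A}$ contains a cycle through a special edge from which some accepting state is reachable. This is a decidable property of an $n$-vertex graph and needs none of your subset machinery.

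You correctly flag the bound $2^{n-1}$ as the hard part, but the induction you sketch does not close. Bounding each principal subtree's thickness by $2^{n-1}$ and hoping ``the two wide levels cannot both be simultaneously maximal'' gives nothing: if both subtrees realise thickness $2^{n-1}$ on the same level the combined thickness is $2^n$, and you have not produced a pumpable repetition in that situation. The paper inducts on a different parameter, namely the number $|Q_t|$ of \emph{distinct} states occurring anywhere in the run on $t$. The inductive claim is: if $\diameter(t)>2^{m-1}$ and $|Q_t|\le m$, then the run on $t$ contains a cycle with a special edge. For the step, take a level $\ell$ with more than $2^{m-1}$ nodes and let $u$ be their longest common prefix; a counting argument gives $\ell-|u|\ge m$. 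If the state $t[u]=\aut A(t\restrict u)$ recurs at some proper descendant $v$ (say with $u0\prefixeq v$), the path from $v$ back to $u$ is a cycle, and its last edge is special because the sibling subtree $t\restrict u1$ has height at least $m-1$ and hence its root state sits past a cycle of $G_{\aut A}$. If $t[u]$ does not recur below $u$, then one of $t\restrict u0$, $t\restrict u1$ still has thickness exceeding $2^{m-2}$ but now uses at most $m-1$ distinct states (the state $t[u]$ has disappeared), and the induction hypothesis applies. The halving of the thickness threshold is paid for exactly by the decrement in the state count, and that is where $2^{n-1}$ rather than $2^n$ comes from.
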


\noindent For the rest of this section we fix a reduced tree automaton $\aut A=(Q,\iota,\delta,F)$. The proof of Theorem~\ref{thm:slim_decidable} essentially depends on an inspection of the directed graph $G_{\aut A}=(Q,E_{\aut A})$ with
\begin{equation}
\label{eq:def_E_A}
	(p,q)\in E_{\aut A} \quad\text{iff}\quad \exists a\in\Sigma,r\in Q\colon
			\delta(a,p,r)=q \text{ or } \delta(a,r,p)=q\,.
\end{equation}
Clearly, this graph is computable from $\aut A$. The lemma below is shown by applying the idea of pumping to tree automata. Therein, the \emph{height} $h(t)$ of a tree $t\in\trees\Sigma$ is the number
\begin{equation*}
	h(t) = \max \Set{ \vert u\vert | u\in\dom(t) } \in \Nat\,.
\end{equation*}

\begin{lemma}
\label{lemma:inflang_characterization}
For every $q\in Q$ the following are equivalent:
\begin{enumerate}[(1)]
\item there are infinitely many $t\in\trees\Sigma$ satisfying $\aut A(t)=q$,
\item there is a tree $t\in\trees\Sigma$ satisfying $h(t)\geq n$ and $\aut A(t)=q$, where $n=|Q|$,
\item $G_{\aut A}$ contains a cycle from which $q$ is reachable.
\end{enumerate}
\end{lemma}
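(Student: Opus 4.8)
The plan is to prove the three implications $(2)\Rightarrow(1)\Rightarrow(3)\Rightarrow(2)$, each by a fairly direct argument. The implication $(1)\Rightarrow(3)$ is the one where a pumping-style argument is needed. Suppose there are infinitely many trees mapped to $q$ by $\aut A$; since over the fixed alphabet $\Sigma$ there are only finitely many tree domains of any bounded height, the set $\{t : \aut A(t)=q\}$ being infinite forces these trees to have unbounded height. Pick any such $t$ with $h(t)\geq n=|Q|$. Then along a maximal branch $u_0=\epsilon\prefix u_1\prefix\dots\prefix u_h$ of $t$ (with $h=h(t)\geq n$) we record the states $\aut A(t\restrict u_0),\dots,\aut A(t\restrict u_h)$; there are $h+1>n$ of them, so two coincide, say $\aut A(t\restrict u_i)=\aut A(t\restrict u_j)=:p$ with $i<j$. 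Writing $\delta$ along the edges of the branch from $u_j$ up to $u_i$, each step $(r,s)\mapsto\delta(a,\cdot,\cdot)$ witnesses an edge of $G_{\aut A}$ in the sense of~\eqref{eq:def_E_A} (the child lying on the branch is the relevant endpoint), so we obtain a path in $G_{\aut A}$ from $p$ back to $p$, i.e.\ a cycle through $p$. Moreover, continuing to write $\delta$ along the branch from $u_i$ up to the root $\epsilon$ yields a path in $G_{\aut A}$ from $p$ to $\aut A(t)=q$, so $q$ is reachable from the cycle. This gives $(1)\Rightarrow(3)$, and incidentally the same branch argument directly yields $(2)\Rightarrow(3)$.

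For $(3)\Rightarrow(1)$, suppose $G_{\aut A}$ has a cycle $p=p_0\to p_1\to\dots\to p_k=p$ from which $q$ is reachable via a path $p=r_0\to r_1\to\dots\to r_\ell=q$. Because $\aut A$ is reduced, there is a tree $s$ with $\aut A(s)=p$. Each edge $(r,r')\in E_{\aut A}$ means there are $a\in\Sigma$ and some state $r''$ with $\delta(a,r,r'')=r'$ or $\delta(a,r'',r)=r'$; since $\aut A$ is reduced, pick a tree realising $r''$, and hang it as the appropriate sibling so that attaching $a$ above $r$ and that sibling produces a tree realising $r'$. Iterating this operation around the cycle any number $m\geq 0$ of times, starting from $s$, produces a tree realising $p$ again, but with strictly larger height at each turn around the cycle (each application adds at least one level); then applying it once along the path $r_0\to\dots\to r_\ell$ produces a tree realising $q$. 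Letting $m\to\infty$ yields infinitely many pairwise distinct trees (distinguished by height) all mapped to $q$, establishing $(1)$.

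Finally $(1)\Rightarrow(2)$ follows from the counting observation already used: if all trees mapping to $q$ had height $<n$, then they would all have tree domains contained in $\{0,1\}^{<n}$, of which there are only finitely many, and over each such domain there are only finitely many $\Sigma$-labellings, contradicting infinitude; hence some tree of height $\geq n$ maps to $q$. I expect the only real subtlety to be the bookkeeping in $(1)\Rightarrow(3)$ and $(3)\Rightarrow(1)$: one must be careful that an edge $(p,q)$ of $G_{\aut A}$ corresponds to a genuine ``pumpable'' step (the sibling subtree can always be supplied because $\aut A$ is reduced), and that going around the cycle strictly increases height so that the resulting trees are genuinely distinct. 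Neither presents a conceptual obstacle, so the lemma is essentially a careful formalisation of the tree pumping lemma.
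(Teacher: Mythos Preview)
Your argument is correct and follows essentially the same approach as the paper: the branch/pigeonhole step is exactly the paper's $(2)\Rightarrow(3)$, and your pumping construction using reducedness of $\aut A$ is the paper's $(3)\Rightarrow(1)$; the paper simply notes $(1)\Rightarrow(2)$ is trivial (your counting observation). One cosmetic slip: your opening sentence announces the cycle $(2)\Rightarrow(1)\Rightarrow(3)\Rightarrow(2)$, but what you actually prove is $(1)\Rightarrow(3)$, $(3)\Rightarrow(1)$, $(1)\Rightarrow(2)$, and $(2)\Rightarrow(3)$ --- this still yields the equivalence, but you should align the stated plan with the implications you carry out.
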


\noindent An edge $(p,q)\in E_{\aut A}$ is \emph{special} if in the definition of $E_{\aut A}$ in Eq.~\eqref{eq:def_E_A} the state $r\in Q$ can be chosen such that it satisfies the conditions of Lemma~\ref{lemma:inflang_characterization} (for $r$ in place of $q$). Since condition (3) is decidable, it is decidable whether an edge is special. The key idea for proving Theorem~\ref{thm:slim_decidable} is stated by the following lemma:

\begin{lemma}
\label{lemma:fatlang_characterization}
The following are equivalent:
\begin{enumerate}[(1)]
\item the tree language $L$ recognised by $\aut A$ is fat,
\item there is a tree $t\in L$ satisfying $\diameter(t)> 2^{n-1}$, where $n=|Q|$,
\item $G_{\aut A}$ contains a cycle including a special edge and from which $F$ is reachable.
\end{enumerate}
\end{lemma}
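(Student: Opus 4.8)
I would prove the cyclic chain of implications $(1)\Rightarrow(2)\Rightarrow(3)\Rightarrow(1)$. The implication $(1)\Rightarrow(2)$ is immediate from the definitions: if $L$ is fat then it is contained in no $\trees[K]\Sigma$, so in particular some $t\in L$ has $\diameter(t)>2^{n-1}$.

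For $(3)\Rightarrow(1)$ I would give an explicit pumping construction. Fix a cycle $C$ of $G_{\aut A}$ that contains a special edge $e$ and from which some $q_f\in F$ is reachable; let $e$ arise from a transition with witness state $r$ satisfying the conditions of Lemma~\ref{lemma:inflang_characterization}, so there are trees $R$ with $\aut A(R)=r$ of arbitrarily large height. Given $N\geq 1$, build a tree $t_N$ by running a ``spine'' that, read from the leaves upward, starts at a tree realising the first state of $C$, traverses $C$ exactly $N$ times, and then follows a path in $G_{\aut A}$ from $C$ to $q_f$; at each spine node the off-spine child is filled by a tree realising the state demanded by the relevant transition (such trees exist since $\aut A$ is reduced), except that at the $i$-th traversal of $e$ this off-spine child is a tree $R_i$ with $\aut A(R_i)=r$ and $h(R_i)\geq N\cdot|C|$. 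Then $\aut A(t_N)=q_f\in F$, so $t_N\in L$. The roots of $R_0,\dots,R_{N-1}$ sit at $N$ distinct depths lying within an interval of length at most $N\cdot|C|$, and each $R_i$ is tall enough to contain a node at the depth of the deepest of these roots; since these $N$ nodes lie in pairwise disjoint subtrees, that level of $t_N$ carries at least $N$ nodes, so $\diameter(t_N)\geq N$. As $N$ is arbitrary, $L$ is fat.

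The bulk of the work is $(2)\Rightarrow(3)$, which I would prove contrapositively. Call a state \emph{pumpable} if it satisfies the equivalent conditions of Lemma~\ref{lemma:inflang_characterization}, and for a state $s$ write $\mu(s)$ for the number of states of $G_{\aut A}$ from which $s$ is reachable, so $\mu(s)\leq n$. The heart of the argument is the claim: \emph{if no cycle of $G_{\aut A}$ carrying a special edge reaches $s$, then $\diameter(t)\leq 2^{\mu(s)-1}$ for every tree $t$ with $\aut A(t)=s$}. Granting this, $\neg(3)$ states precisely that for every $f\in F$ no special-edge-bearing cycle reaches $f$, whence every $t\in L$ has $\diameter(t)\leq 2^{\mu(\aut A(t))-1}\leq 2^{n-1}$; that is, $L\subseteq\trees[2^{n-1}]\Sigma$, which is $\neg(2)$.

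The claim is proved by induction on $\mu(s)$. Given $t$ with $\aut A(t)=s$ (the case of a one-node tree being trivial), consider its \emph{spine}: from the root, repeatedly descend into a child whose state lies in the strongly connected component $\cmp(s)$ of $s$, while exactly one child does. Since the hypothesis forbids special edges on cycles through $\cmp(s)$, every edge of $G_{\aut A}$ internal to $\cmp(s)$ is non-special; this forces that at no spine node do both children carry a state of $\cmp(s)$ (such a branching transition would exhibit a special internal edge lying on a cycle reaching $s$), and that every off-spine subtree maps to a non-pumpable state. A subtree mapping to a non-pumpable state uses only non-pumpable states, has pairwise distinct states along each root-to-leaf path, and hence has thickness at most $2^{\mu(s)-2}$ and reaches at most $\mu(s)-2$ levels below its root; moreover each such state has $\mu<\mu(s)$, so the induction hypothesis applies to it as well, and likewise to the (at most two) subtrees hanging at the node where the spine terminates, bounding the thickness below that node by $2^{\mu(s)-2}+2^{\mu(s)-2}=2^{\mu(s)-1}$. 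On any fixed level the nodes of $t$ decompose into at most one spine node together with the contributions of at most $\mu(s)-1$ consecutive off-spine subtrees, and these contributions decay geometrically; a telescoping estimate $1+\sum_{j=0}^{\mu(s)-2}2^{j}=2^{\mu(s)-1}$ then yields $\diameter(t)\leq 2^{\mu(s)-1}$. The step I expect to be most delicate is exactly this claim — reconciling the non-specialness of the internal edges of $\cmp(s)$ with the pumpability characterisation so as to pin down both the shallowness \emph{and} the thinness of the off-spine subtrees, and then carrying out the level-by-level bookkeeping, which is where the constant $2^{n-1}$ is produced.
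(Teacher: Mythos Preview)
Your proof is correct and, for $(2)\Rightarrow(3)$, takes a genuinely different route from the paper. The paper argues directly by induction on $|Q_t|$ (the number of states occurring in the run on $t$): given a thick tree, it locates the longest common prefix $u$ of a thick level and either exhibits the required cycle-with-special-edge immediately (when the state $t[u]$ reappears strictly below $u$, the other branch at $u$ being deep enough to make the relevant witness pumpable) or recurses into a child subtree of $u$ that avoids the state $t[u]$. Your contrapositive spine decomposition inducts instead on $\mu(s)$ and peels off the top strongly connected component. Both arguments produce the constant $2^{n-1}$; the paper's is shorter because it sidesteps the level-by-level bookkeeping, whereas yours yields a more explicit structural picture of the accepted trees. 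The one place in your sketch that needs a little more care than stated is the range of levels just below the spine terminal, where off-spine subtrees (height-bounded) and the two terminal subtrees (only thickness-bounded, via the inductive hypothesis) contribute simultaneously; the combined contribution still meets $2^{\mu(s)-1}$ once you trade the trivial level-$k$ bound $2^k$ on the terminal pair against the missing $2^k$ tail of the off-spine geometric sum. For $(3)\Rightarrow(1)$ the paper inducts on the number of special edges along a path rather than traversing one cycle repeatedly, but this is essentially the same construction.
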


\noindent The proof of this lemma works similar to the one of Lemma~\ref{lemma:inflang_characterization}. Since condition~(3) is decidable given $\aut A$ as input, Theorem~\ref{thm:slim_decidable} follows.

\section{Slim Tree Automatic Structures Are Word Automatic}
\label{sec:slim_TA_are_WA}

This section is devoted to the proof of the following theorem:

\begin{theorem}
\label{thm:slim_TA_are_WA}
Let $\struct A$ be a tree automatic structure such that $\dom(\struct A)$ is slim. Then, $\struct A$ is already word automatic and one can compute a word automatic presentation of $\struct A$ from a tree automatic presentation of $\struct A$.
\end{theorem}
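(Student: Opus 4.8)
The plan is to exhibit an explicit encoding of trees of bounded thickness by words, and to show that this encoding turns tree automata into word automata in a way that is uniformly computable. Fix $K\geq 1$ with $\dom(\struct A)\subseteq\trees[K]\Sigma$; such a $K$ is computable from the tree automatic presentation by Theorem~\ref{thm:slim_decidable} (take $K=2^{n-1}$ for $n$ the number of states of a reduced automaton for $\dom(\struct A)$). The key observation is that a tree $t$ with $\diameter(t)\leq K$ can be read level by level: for each level $\ell\geq 0$ the set $\dom(t)\cap\{0,1\}^\ell$ has at most $K$ elements, which we list in lexicographic order as $u_1\lexless\dotsb\lexless u_k$ with $k\leq K$, and we record the labels $t(u_1),\dotsc,t(u_k)$ together with enough book-keeping to reconstruct the parent–child relation between consecutive levels. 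Concretely, I would encode $t$ as a word $w_t$ over the alphabet $\Gamma_K=\bigl(\Sigma\cup\{\Box\}\bigr)^{K}\times B_K$, where the first component gives the (padded) list of labels on a level and $B_K$ is a finite set of ``branching patterns'' describing, for each of the (at most $K$) nodes on the current level, whether it has two children, and how the children on the next level are distributed among the parents. Since $\diameter(t)\leq K$, the whole level fits in a single letter, and the branching information is finite, so $w_t\in\Gamma_K^\star$ is well defined and the map $t\mapsto w_t$ is injective; moreover its image is a regular word language (a simple consistency check between successive letters).

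Having fixed the encoding, the next step is to transfer automata. Given a tree automaton $\aut B$ over $\Sigma_\Box^n$ recognising $\otimes R^{\struct A}$ for an $n$-ary relation $R$, I would build a word automaton recognising $\otimes\{(w_{t_1},\dotsc,w_{t_n})\mid (t_1,\dotsc,t_n)\in R^{\struct A}\}$. The crucial point is that the convolution of the $n$ tree encodings is, up to a regular relabelling, the level-by-level encoding of the tree convolution $\otimes(t_1,\dotsc,t_n)$, because all $t_i$ lie in $\trees[K]\Sigma$ and hence $\otimes\bar t\in\trees[K]{\Sigma_\Box^n}$: the thickness is controlled by the \emph{union} of the domains, which is still a tree domain of width at most... well, here one must be slightly careful, since $\dom(t_1)\cup\dotsb\cup\dom(t_n)$ may have width up to $nK$ rather than $K$. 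This is harmless: I simply run the whole construction with the bound $K'=nK$ in place of $K$, i.e.\ encode $\otimes\bar t\in\trees[K']{\Sigma_\Box^n}$ by a word over $\Gamma_{K'}$. A deterministic bottom-up tree automaton on a tree of thickness $\leq K'$ can be simulated by a word automaton reading $w_{\otimes\bar t}$ \emph{from the last letter to the first}, i.e.\ from the deepest level upward: the word automaton maintains, in its state, the tuple of $\aut B$-states at the (at most $K'$) nodes of the current level; upon reading the next letter (the level above, with its branching pattern and labels) it applies $\delta$ at each parent to the states of its children, obtaining the tuple of states one level up. Reading the first letter yields $\aut B(\otimes\bar t)$ at the root, which we accept iff it is accepting. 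Since word automata are closed under reversal, this gives a word automaton for the reversed-and-relabelled encoding, hence — composing with the regular relabelling and with an automaton for the set of valid encodings — a word automaton recognising exactly $\otimes R^{\struct A}$ in the word encoding. Applying this to $\dom(\struct A)$ (as a unary relation) and to each $R\in\Rel$ yields a word automatic presentation, and every step — computing $K$, then $K'$, building the relabelling and the validity automaton, and the bottom-up simulation — is effective, which gives the ``one can compute'' part.

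The main obstacle I anticipate is purely a matter of careful definition rather than deep mathematics: pinning down the branching-pattern alphabet $B_{K'}$ so that (a) it is finite, (b) $w_t$ determines $t$ and conversely valid words determine trees, and (c) the bottom-up simulation of $\delta$ really is local, i.e.\ the state of the word automaton after reading a suffix of $w_t$ depends only on that suffix. Point (c) forces the branching pattern attached to a level to describe the link to the level \emph{below} (the children side), so that when the word automaton reads a letter it already knows how to group the children-states it is currently holding under the new parents; getting the direction of this bookkeeping right, and matching it with the reversal of the word, is the fiddly part. A secondary subtlety is that $\otimes\bar t$ as defined in the paper has domain $\dom(t_1)\cup\dotsb\cup\dom(t_n)$, which need not satisfy the ``$u0\in D$ iff $u1\in D$'' condition of a tree domain pointwise for the individual $t_i$; but it does for the union, and the branching patterns over $\Sigma_\Box^n$ simply record, per node, which of the $n$ coordinates is ``present'', so the same machinery applies verbatim with $\Sigma$ replaced by $\Sigma_\Box^n$. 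I do not expect any obstruction from the structure $\struct A$ being merely isomorphic to something with slim domain: we first replace $\struct A$ by that isomorphic copy (which is what ``$\dom(\struct A)$ is slim'' means after the abuse-of-notation convention), and all of the above applies to it directly.
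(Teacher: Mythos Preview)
Your approach is sound and is essentially the automata-theoretic route that the paper explicitly sets aside in favour of a logic-based one (it even remarks that ``it is possible to show both propositions using automata''). The encodings coincide: the paper's $C(t)$ is your $w_t$ up to packing $K$ symbols into one letter. For preservation of regularity and of automaticity, however, the paper works entirely through $\MSO{}$-interpretations and the equivalence of regular and $\MSO{}$-definable; for $n$-ary relations it never encodes $\otimes\bar t$ directly, but interprets $\otimes\bar t$ inside an $n$-fold letter-replication $H\bigl(\otimes C(\bar t)\bigr)$ of the word convolution, going through an intermediate disjoint-union structure $\amalg\bar t$ equipped with a ``same level'' relation. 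Your route---encode $\otimes\bar t$ over $\Sigma_\Box^n$ with the larger bound $K'=nK$ and simulate the bottom-up tree automaton level by level on the reversed word---is more elementary and makes effectiveness obvious; the paper's route offloads the delicate ``which nodes of which $t_i$ coincide at this level'' bookkeeping onto a single $\MSO{}$-formula, at the price of heavier logical machinery.

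One point to tighten: the claim that $\otimes(w_{t_1},\dotsc,w_{t_n})$ and $w_{\otimes\bar t}$ differ only by a ``regular relabelling'' is not correct if taken as a letter-to-letter map. The $\ell$-th letter of $w_{\otimes\bar t}$ lists the nodes of $\bigcup_i\dom(t_i)\cap\{0,1\}^\ell$ in a \emph{single} lexicographic order, and to merge the $n$ separate level-lists one must know, for each node of the union, which of the $t_i$ contain it---information determined by the branching history above level~$\ell$, not by the $\ell$-th letters alone. What you actually have is a length-preserving finite-state transduction whose state records, for each of the at most $nK$ current nodes of $\otimes\bar t$, the set of coordinates $i$ with that node in $\dom(t_i)$. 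This does not break the argument (regular languages are closed under inverse images of such transductions), but you should either make the transducer explicit or bypass it by simulating the tree automaton directly on $\otimes(w_{t_1},\dotsc,w_{t_n})$, carrying both the interleaving pattern and the tuple of tree-automaton states in the word automaton's state. Note also that your transduction runs root-to-leaves while your simulation runs leaves-to-root; the composition still yields a regular language, but via nondeterministic guessing of the interleaving, verified at the root, rather than a single deterministic pass.
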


\noindent The idea of the proof is the following. Let $K\geq 1$ be such that ${\dom(\struct A)\subseteq\trees[K]\Sigma}$. We give an alphabet $\widehat\Sigma$ and a one-to-one map ${C\colon\trees[K]\Sigma\to\widehat\Sigma^\star}$, the \emph{encoding}, such that $C(L)$ is regular for all regular $L\subseteq\trees[K]\Sigma$ (Proposition~\ref{prop:C_preserves_regularity}) and $C(R)$ is automatic for all automatic relations $R\subseteq(\trees[K]\Sigma)^n$ (Proposition~\ref{prop:C_preserves_automaticity}). Thus, the structure $C(\struct A)$ is word automatic. A word automatic presentation of $C(\struct A)$ is computable since both propositions are effective and Theorem~\ref{thm:slim_decidable} allows for computing a suitable~$K$. Although it is possible to show both propositions using automata, it is much more convenient to accomplish this by means of logic.

\subsection{Monadic Second Order Logic}

\emph{Monadic second order logic} $\MSO\tau$ extends $\FO{\tau}$ by \emph{set variables}, which range over subsets of the domain and are denoted by capital letters, quantifiers for these variables, and the formula ``$x\in X$'' (cf.~\cite{Tho97}). Let $\tau=(\Rel,\ar)$ and $\tau'$ be two signatures. An \emph{($\MSO{}$-)interpretation} of a $\tau$\nobreakdash-structure~$\struct A$ in a $\tau'$\nobreakdash-structure~$\struct B$ is a pair $\langle f,\Int\rangle$ comprised of a one-to-one map ${f\colon\dom(\struct A)\to\dom(\struct B)}$ and a tuple $\Int=\bigl(\Delta;(\Phi_R)_{R\in\Rel}\bigr)$ of $\MSO{\tau'}$-formulae with free $\FO{}$-variables only such that $f\bigl(\dom(\struct A)\bigr)=\Delta^{\struct B}$ and $f\bigl(R^{\struct A}\bigr)=\Phi_R^{\struct B}$ for each $R\in\Rel$. In fact, $f$ induces an isomorphism between $\struct A$ and $\Int(\struct B)=\bigl(\Delta^{\struct B};(\Phi_R^{\struct B})_{R\in\Rel}\bigr)$. Replacing in an $\MSO\tau$-formula $\phi(\bar x)$ all symbols $R\in\Rel$ with $\Phi_R$ and relativising quantifiers to $\Delta$ yields an $\MSO{\tau'}$-formula $\phi^\Int(\bar x)$ satisfying $\struct A\models\phi(\bar a)$ iff $\struct B\models\phi^\Int\bigr(f(\bar a)\bigr)$ for all $\bar a\in A^n$.

For an alphabet $\Sigma$ the signature $\signW\Sigma$ consists  of one binary relation symbol~$\leq$ and a unary symbol $P_a$ for each $a\in\Sigma$. Every word $w=a_1a_2\dotsc a_{|w|}\in\Sigma^\star$ is regarded as a $\signW\Sigma$-structure with domain $\dom(w)=\{1,\dotsc,|w|\}$, $\leq^w$ being the natural order on $\dom(w)$, and $i\in P_a^w$ iff $a_i=a$. For fixed numbers $m,r\in\Nat$, relations like $x=y+m$ and $x\equiv r\,(\bmod\, m)$ are expressible in $\MSO{\signW\Sigma}$. The language \emph{defined} by an $\MSO{\signW\Sigma}$-sentence $\Phi$ is the set of all $w\in\Sigma^\star$ with $w\models\Phi$.

The signature $\signT\Sigma$ is similar to $\signW\Sigma$ but contains two binary symbols $S_0$ and $S_1$ instead of $\leq$. Each tree $t\in\trees\Sigma$ is considered as a $\signT\Sigma$-structure with domain $\dom(t)$, $(u,v)\in S_d^t$ iff $ud=v$ ($d=0,1$), and $u\in P_a^t$ iff $t(u)=a$. The language \emph{defined} by some $\MSO{\signT\Sigma}$-sentence $\Phi$ is the set of all $t\in\trees\Sigma$ with $t\models\Phi$.

The following theorem holds for word languages as well as for tree languages:

\begin{theorem}[cf.~\cite{Tho97}]
\label{thm:regular_MSO}
A language $L$ is regular iff it is definable in $\MSO{}$, and both conversions, from automata to formulae and vice versa, are effective.
\end{theorem}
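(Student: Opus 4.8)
This is the classical equivalence between recognisability and $\MSO{}$-definability --- the B\"uchi--Elgot--Trakhtenbrot theorem for words and the Doner--Thatcher--Wright theorem for trees --- so I would reconstruct the standard proof, which runs essentially uniformly in both cases. The plan is to establish the two implications separately, keeping every construction explicit so that effectiveness comes for free.

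\emph{From automata to formulae.} Given an automaton $\aut A$ with state set $Q=\{q_1,\dots,q_k\}$, I would use one set variable $X_i$ per state and write an $\MSO{}$-sentence $\Phi_{\aut A}=\exists X_1\dotsb\exists X_k\,\psi$ asserting that $(X_1,\dots,X_k)$ encodes an accepting run: the $X_i$ partition the domain; every leaf (resp.\ the first position) carrying a letter $a$ lies in $X_{\iota(a)}$; every non-leaf node carrying $a$ whose $S_0$- and $S_1$-children lie in $X_j$ and $X_{j'}$ lies in $X_{\delta(a,q_j,q_{j'})}$ (resp., every pair of consecutive positions respects the transition relation) --- a finite conjunction, as $Q$ and $\Sigma$ are finite; and the root (resp.\ the last position) lies in $\bigcup_{q\in F}X_q$. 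The empty word is handled by a separate disjunct. This $\Phi_{\aut A}$ is plainly computable from $\aut A$, and by construction its models are exactly the language recognised by $\aut A$.

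\emph{From formulae to automata.} Here I would induct on the structure of a formula $\phi(x_1,\dots,x_m,X_1,\dots,X_p)$, working over the extended alphabet $\Sigma\times\{0,1\}^{m+p}$ whose extra tracks record the values of the free variables, subject to the convention that the track of a first-order variable carries exactly one $1$. The atomic formulae $P_a(x)$, $S_d(x,y)$, $x\leq y$, $x=y$ and $x\in X$ each define a regular language over this alphabet; conjunction and negation are handled via closure of regular languages under intersection and complement; and $\exists x\,\phi$ as well as $\exists X\,\phi$ are handled by deleting the corresponding Boolean track, i.e.\ by projection, which preserves regularity. Composing these effective steps yields an automaton for $\phi$, and for a sentence one simply drops all tracks.

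The one genuinely delicate point --- and the only place where effectiveness might be doubted --- is the complementation step used for negation. For both word and tree automata it proceeds through determinisation: one first notes that the nondeterministic automaton arising from a projection has the same language as its powerset-determinised version (which is complete) --- this is where the unavoidable exponential blow-up enters, but it remains an effective, terminating step --- and then one complements by swapping $F$ with $Q\setminus F$. Everything else in the induction is bookkeeping about the free-variable tracks: re-padding when a quantifier removes a track, and enforcing the single-$1$ condition on first-order tracks.
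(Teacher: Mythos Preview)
Your sketch is a correct reconstruction of the classical B\"uchi--Elgot--Trakhtenbrot / Doner--Thatcher--Wright argument. Note, however, that the paper does not give its own proof of this theorem at all: it is stated with a reference to~\cite{Tho97} and used as a black box. So there is nothing to compare against --- your proposal simply supplies the standard textbook proof that the paper elected to cite rather than reproduce.
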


\pagebreak

\subsection{The Encoding and Preservation of Regularity}

\newcommand{\extree}{\begin{tikzpicture}[level distance=5mm,inner sep=0.2ex,every node/.style={minimum size=0}
	,level 1/.style={sibling distance=10mm}
	,level 2/.style={sibling distance=5mm}
	,level 3/.style={sibling distance=3mm}
	]
	\node {\texttt{a}}
		child {node {\texttt{b}}
			child {node {\texttt{c}}}
			child {node {\texttt{b}}
				child {node {\texttt{a}}}
				child {node {\texttt{c}}}
			}
		}
		child {node {\texttt{c}}
			child {node {\texttt{b}}}
			child {node {\texttt{a}}}
		};
\end{tikzpicture}}

\setlength{\intextsep}{0ex}
\setlength{\columnsep}{1.5em}

\begin{wrapfigure}{r}{2.1cm}
\centering
\hspace{2mm}
\begin{tikzpicture}[level distance=5mm,inner sep=0.2ex,every node/.style={minimum size=0}
	,level 1/.style={sibling distance=10mm}
	,level 2/.style={sibling distance=5mm}
	,level 3/.style={sibling distance=3mm}
	]
	\node {\texttt{a}}
		child {node {\texttt{b}}
			child {node {\texttt{c}}}
			child {node {\texttt{b}}
				child {node {\texttt{a}}}
				child {node {\texttt{c}}}
			}
		}
		child {node {\texttt{c}}
			child {node {\texttt{b}}}
			child {node {\texttt{a}}}
		};
\end{tikzpicture}
\vspace{-4ex}
\caption{The tree $t_{\mathrm{ex}}$.}
\label{fig:extree}
\end{wrapfigure}
\emph{For the rest of this section fix the $K\geq 1$ from above.} The first objective is to give the encoding $C\colon\trees[K]\Sigma\to\widehat\Sigma^\star$, where $\$$ is a new symbol and ${\widehat\Sigma=\Sigma\times\{0,1\}\cup\{\$\}}$. For a tree $t\in\trees[K]\Sigma$ of height $m=h(t)$ its \emph{encoding} $C(t)=\sigma_0\sigma_1\dotso\sigma_m$ is made up of $m+1$ blocks $\sigma_0,\dotsc,\sigma_m\in\widehat\Sigma^K$ describing the individual levels of $t$. More specifically, $\sigma_\ell$ consists of the labels of the $\ell$-th level from left to right, each enriched by a bit stating whether the corresponding node possesses children, and is padded up to length $K$ by $\$$ symbols. For example, the tree $t_{\mathrm{ex}}\in\trees{\{\mathtt{a},\mathtt{b},\mathtt{c}\}}$ in Figure~\ref{fig:extree} on the right satisfies $\diameter(t_{\mathrm{ex}})=4$ and is, under the assumption $K=5$, encoded by the word
\begin{equation*}
	C(t_{\mathrm{ex}}) = \langle\mathtt{a},\!1\rangle\$\$\$\$\,\langle\mathtt{b},\!1\rangle\langle\mathtt{c},\!1\rangle\$\$\$\,\langle\mathtt{c},\!0\rangle\langle\mathtt{b},\!1\rangle\langle\mathtt{b},\!0\rangle\langle\mathtt{a},\!0\rangle\$\,\langle\mathtt{a},\!0\rangle\langle\mathtt{c},\!0\rangle\$\$\$\,.
\end{equation*}

\noindent Formally, for each $\run\ell0m$ let $u_{\ell,1},\dotsc,u_{\ell,s_\ell}$ be the lexicographic enumeration (w.r.t. ${0<1}$) of $\dom(t)\cap\{0,1\}^\ell$. For $\run r1{s_\ell}$ we let $c_{\ell,r}=1$ if $u_{\ell,r}$ is an inner node, i.e. $u_{\ell,r}\{0,1\}\subseteq\dom(t)$, and $c_{\ell,r}=0$ if $u_{\ell,r}$ is a leaf. Finally, we put
\begin{equation*}
	\sigma_\ell=\langle t(u_{\ell,1}),c_{\ell,1}\rangle\langle t(u_{\ell,2}),c_{\ell,2}\rangle\dotso\langle t(u_{\ell,s_\ell}),c_{\ell,s_\ell}\rangle\$^{K-s_\ell}\,.
\end{equation*}
%We observe that
%\begin{align}
%	\label{eq:C_obs_1}
%	s_0&=1\,, &
%	s_{\ell+1}&=\sum\nolimits_{0\leq r<s_\ell} c_{\ell,r}\quad(0\leq\ell<m)\,, &
%	\sum\nolimits_{0\leq r<s_m} c_{m,r}&=0\,.
%\end{align}
%Moreover, for all $u_{\ell,r}$ with $c_{\ell,r}=1$ and $d\in\{0,1\}$ it holds that
%\begin{equation}
%	\label{eq:C_obs_2}
%	u_{\ell,r}d = u_{\ell+1,2\cdot s+d}\quad\Bigl(s=\sum\nolimits_{0\leq r'<r}c_{\ell,r'}\Bigr)\,.
%\end{equation}
%The \emph{encoding} of $t$ is the word $C(t)=\alpha_0\alpha_1\dotso\alpha_{(m+1)\cdot K-1}\in\widehat\Sigma^\star$ with $\alpha_{\ell\cdot K+r}=\bigl(t(u_{\ell,r}),c_{\ell,r}\bigr)$ for all $0\leq\ell\leq m$ and $0\leq r<s_\ell$, and $\alpha_p=\$$ in all other cases.
The main tool for studying the map $C\colon\trees[K]\Sigma\to\widehat\Sigma^\star$ is the following lemma:

\begin{lemma}
\label{lemma:interpretation_C}
For all $t\in\trees[K]\Sigma$ there is an $\MSO{}$-interpretation $\langle f_C,\Int_C\rangle$ of $t$ in~$C(t)$ such that $\Int_C$ does not depend on $t$.
\end{lemma}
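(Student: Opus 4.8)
The goal is to exhibit, for each $t\in\trees[K]\Sigma$, an $\MSO{}$-interpretation $\langle f_C,\Int_C\rangle$ of the $\signT\Sigma$-structure $t$ inside the $\signW{\widehat\Sigma}$-structure $C(t)$, with the tuple $\Int_C=\bigl(\Delta;(\Phi_{S_0},\Phi_{S_1},(\Phi_{P_a})_{a\in\Sigma})\bigr)$ fixed once and for all. The natural choice of $f_C$ is dictated by the construction of $C$: the node $u_{\ell,r}\in\dom(t)\cap\{0,1\}^\ell$ is sent to the position of the block $\sigma_\ell$ in $C(t)$ that carries its label, namely $f_C(u_{\ell,r})=\ell K+r$. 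Since $C(t)=\sigma_0\sigma_1\dotso\sigma_m$ with each $\sigma_\ell\in\widehat\Sigma^K$, position $p=\ell K+r$ with $1\le r\le K$ lies in block number $\ell$ at offset $r$; the image of $\dom(t)$ under $f_C$ is exactly the set of positions not labelled~$\$$, so I set $\Delta(x)=\bigvee_{a\in\Sigma,c\in\{0,1\}}P_{\langle a,c\rangle}(x)$. Likewise $\Phi_{P_a}(x)=P_{\langle a,0\rangle}(x)\vee P_{\langle a,1\rangle}(x)$ is immediate.

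First I would note that block-arithmetic is $\MSO$-definable on words: for the fixed $K$, the predicates ``$x$ and $y$ lie in the same block'' ($\exists$ the block boundaries, or more elementarily $x\equiv y\pmod K$ together with $\lfloor x/K\rfloor$-comparisons, all of which are $\FO$ over $\leq$ for a fixed modulus $K$), ``$y$ lies in the block immediately after that of $x$'' ($x+K\le y<x+2K$ relative to block starts), and ``$x$ is the $r$-th element of its block'' are all expressible. Hence I can speak of ``the level of $x$'' and ``the horizontal position of $x$ within its level'' inside the formulas.

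The heart of the argument is defining the two successor relations $S_0,S_1$. Here I would use the bit $c$ carried by each symbol $\langle a,c\rangle$: a node is an inner node iff its symbol has second component~$1$, and the children of the inner nodes on level $\ell$, read left to right, are precisely the nodes on level $\ell+1$, read left to right, two per inner node. So $\Phi_{S_0}(x,y)$ should say: $x$ is an inner node (symbol has bit~$1$), $y$ lies in the block immediately following $x$'s block, and if $x$ is the $i$-th inner node among the domain-nodes of its level (counting from the left), then $y$ is the $(2i-1)$-th domain-node of its level; $\Phi_{S_1}$ is the same with $2i$ in place of $2i-1$. ``Being the $i$-th inner node of one's level'' and ``being the $(2i-1)$-th node of one's level'' are not bounded modular conditions, so the clean way to express the required matching is to quantify a set $Z$ that realises the bijection: $\Phi_{S_0}(x,y)$ holds iff there exists $Z$ containing exactly the inner nodes of $x$'s level up to and including $x$ and exactly the domain-nodes of $y$'s level up to and including $y$, such that $Z$ restricted to $x$'s level and $Z$ restricted to $y$'s level have sizes differing by the right factor — or, more robustly in $\MSO$, such that $Z$ pairs up the two initial segments consistently with the left-to-right orders. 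Equivalently, one asserts the existence of a set $Z$ that is ``closed'' under the step ``from the $j$-th inner node of level $\ell$ and its two level-$(\ell+1)$ children, go to the $(j-1)$-th inner node and its children,'' anchored at $x$ and $y$; this is a standard $\MSO$ back-and-forth / least-fixed-point encoding. The one genuinely delicate point is to make sure this matching formula is uniform in~$t$ — it is, because $K$ is fixed and the only data used are the $\leq$-order, the labels, and the bits, none of which refer to~$t$ beyond what $C(t)$ already records.

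Once $\Delta$, $\Phi_{P_a}$, $\Phi_{S_0}$, $\Phi_{S_1}$ are in hand, I would verify the two interpretation conditions: that $f_C$ is injective (clear, as distinct nodes get distinct positions) with $f_C(\dom(t))=\Delta^{C(t)}$ (clear from the padding convention), and that $(u,v)\in S_d^t$ iff $C(t)\models\Phi_{S_d}\bigl(f_C(u),f_C(v)\bigr)$, which reduces to the combinatorial fact that the left-to-right enumeration of level $\ell+1$ consists, block by block, of the pairs of children of the left-to-right enumeration of the inner nodes of level $\ell$ — exactly the content of the definition of $C$. The main obstacle is thus isolated in the previous paragraph: writing the child-matching between consecutive levels as a single $\MSO{\signW{\widehat\Sigma}}$ formula that does not mention $t$; everything else is bookkeeping about blocks of fixed width $K$, which is first-order over $\leq$. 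Since $t$ was arbitrary in $\trees[K]\Sigma$ and the tuple $\Int_C$ we produced used only $K$, this establishes the lemma.
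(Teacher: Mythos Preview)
Your overall architecture matches the paper exactly: the same map $f_C(u_{\ell,r})=\ell K+r$, the same domain formula $\Delta(x)\equiv\neg P_\$(x)$, the same $P_a$ formulae, and the same combinatorial observation that the $d$-child of the $s$-th inner node on level~$\ell$ is the $(2s-1+d)$-th node on level~$\ell+1$.

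The one place where you diverge from the paper is the formula for $S_d$, and here there is a small but genuine slip. You write that ``being the $i$-th inner node of one's level'' is ``not a bounded modular condition'' and therefore reach for second-order set quantification to realise the matching. But it \emph{is} bounded: every level of $t\in\trees[K]\Sigma$ has at most $K$ nodes, hence at most $\lfloor K/2\rfloor$ inner nodes, so the rank $s$ of $x$ among the inner nodes of its block satisfies $1\le s\le K/2$. The paper exploits exactly this: it lets $z$ be the first position of $x$'s block, takes a finite disjunction over $s=1,\dotsc,\lfloor K/2\rfloor$, and for each fixed $s$ uses the first-order counting quantifier $\exists^{=s}$ to say ``there are exactly $s$ positions in $[z,x]$ whose symbol has child-bit~$1$'' together with the arithmetic constraint $y=z+K+2s-2+d$. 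The whole interpretation is thus first-order, with no set quantifiers at all.

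Your alternative sketches (a set $Z$ whose two halves ``have sizes differing by the right factor'', or a back-and-forth pairing) would amount to expressing rank-equality between two subsets of positions, which is \emph{not} $\MSO{}$-definable over words in general (it would yield equicardinality, hence non-regular languages like $\{w:|w|_a=|w|_b\}$). It becomes definable here only because the sets in question have size $\le K$---but once you use that bound, the finite disjunction is both simpler and precisely what the paper does. So the fix is just to replace the second-order detour by the case split over $s\le K/2$; everything else in your plan is correct and coincides with the paper's proof.
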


\begin{proof}
Observe that for each inner node $u$ of $t$ the children of $u$ are the $(2s-1)$\nobreakdash-th and $2s$-th node on the next level, where $s$ is the number of inner nodes from left up to $u$ on its level. Formally, for an inner node $u_{\ell,r}$ we have $u_{\ell,r}d=u_{\ell+1,2s-1+d}$, where $d\in\{0,1\}$ and $s=c_{\ell,1}+\dotsb+c_{\ell,r}$. Based on this observation, one can give an interpretation $\langle f_C,\Int_C\rangle$ of $t$ in $C(t)$ such that $f_C(u_{\ell,r})=\ell\cdot K+r$.\qed
\end{proof}

\noindent As a first consequence, we obtain $t\cong\Int_C(t)=\Int_C(t')\cong t'$, and hence $t=t'$, for all $t,t'\in\trees[K]\Sigma$ with $C(t)=C(t')$. Thus, the encoding $C$ is one-to-one. The proof of Proposition~\ref{prop:C_preserves_regularity} is mainly based on Lemma~\ref{lemma:interpretation_C} and Lemma~\ref{lemma:characterisation_image_C} below.

\begin{proposition}
\label{prop:C_preserves_regularity}
Let $L\subseteq\trees[K]\Sigma$ be a regular language. Then, the language $C(L)\subseteq\widehat\Sigma^\star$ is also regular and one can compute a finite automaton recognising $C(L)$ from a tree automaton recognising $L$.
\end{proposition}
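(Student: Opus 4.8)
The plan is to characterise the image $C(\trees[K]\Sigma)\subseteq\widehat\Sigma^\star$ and the image $C(L)$ of a regular $L$ by $\MSO{}$-sentences, and then invoke Theorem~\ref{thm:regular_MSO} to conclude regularity. First I would establish Lemma~\ref{lemma:characterisation_image_C} (announced just before the proposition): the set of \emph{legal encodings}, i.e.\ words of the form $\sigma_0\sigma_1\dotso\sigma_m$ with each block in $\widehat\Sigma^K$, is easily seen to be a regular language $E$, since the block structure, the syntax $\langle a,c\rangle$-then-$\$$-padding within a block, and the prefix-closure/branching conditions (a node has children iff its bit is $1$; the number of $1$-bits on level $\ell$ is exactly half the number of non-$\$$ entries on level $\ell+1$; level $0$ has exactly one non-$\$$ entry; the last non-empty level has all bits $0$) are all $\MSO{}$-definable over $\signW{\widehat\Sigma}$ using the available modular-counting and $x=y+m$ predicates together with the within-block position $x\equiv r\,(\bmod\,K)$. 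So $C(\trees[K]\Sigma)=E$ is regular.

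Next, given a regular $L\subseteq\trees[K]\Sigma$, Theorem~\ref{thm:regular_MSO} yields an effective $\MSO{\signT\Sigma}$-sentence $\Phi$ defining $L$. I would then apply the interpretation $\langle f_C,\Int_C\rangle$ of Lemma~\ref{lemma:interpretation_C}, whose formula tuple $\Int_C$ does not depend on $t$: relativising $\Phi$ along $\Int_C$ produces an $\MSO{\signW{\widehat\Sigma}}$-sentence $\Phi^{\Int_C}$ with the property that for every $t\in\trees[K]\Sigma$ one has $t\models\Phi$ iff $C(t)\models\Phi^{\Int_C}$. Conjoining this with the sentence defining $E$, I obtain an $\MSO{\signW{\widehat\Sigma}}$-sentence $\Psi = \Psi_E\wedge\Phi^{\Int_C}$ which is satisfied by a word $w\in\widehat\Sigma^\star$ exactly when $w\in E$ and $w=C(t)$ for some $t\models\Phi$, i.e.\ exactly when $w\in C(L)$. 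Hence $C(L)$ is $\MSO{}$-definable, and by the effective direction of Theorem~\ref{thm:regular_MSO} one can compute a finite automaton for $C(L)$ from a tree automaton for $L$; all intermediate constructions (the automaton-to-formula conversion, the substitution defining $\Phi^{\Int_C}$, the formula-to-automaton conversion) are effective, which gives the computability claim.

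The main obstacle is verifying that the interpretation $\Int_C$ is genuinely $\MSO{}$-expressible over $\signW{\widehat\Sigma}$, in particular defining the two successor relations $S_0,S_1$ of $t$ in terms of the linear order on $C(t)$. The combinatorial identity $u_{\ell,r}d = u_{\ell+1,2s-1+d}$ with $s=c_{\ell,1}+\dotsb+c_{\ell,r}$ requires, on the word side, expressing ``the position $\ell\cdot K+r$ has as its $d$-child the position $(\ell+1)\cdot K + 2s-1+d$'' where $s$ counts $1$-bits among positions $\ell K+1,\dotsc,\ell K+r$. Counting a set's cardinality is not directly first-order, but the key point is that we only need to relate two positions by a ``$2s-1+d$ vs.\ the $r$-th position'' correspondence, and this can be captured by a second-order guess of the matching between inner nodes on level $\ell$ and pairs of nodes on level $\ell+1$: existentially quantify a set $M$ encoding this level-wise pairing and assert in $\MSO{}$ that $M$ is order-preserving, covers exactly the inner nodes of level $\ell$ on one side and exactly all nodes of level $\ell+1$ (in consecutive pairs) on the other, and that domain boundaries are respected via the block congruences. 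Once this pairing is available, $S_0$ and $S_1$, the domain predicate $\Delta$ identifying which word positions correspond to actual tree nodes (the non-$\$$ positions), and the label predicates $P_a$ are all straightforward. I would also double-check the edge cases (the root block, empty subsequent levels, and $K=1$) so that $\Int_C$ is uniformly correct.
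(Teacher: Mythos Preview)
Your argument is correct and matches the paper's proof line for line: the paper calls your sentence $\Psi_E$ by the name $\Gamma_C$, obtains $\Phi$ defining $L$ from Theorem~\ref{thm:regular_MSO}, and then notes that $\Gamma_C\land\Phi^{\Int_C}$ defines $C(L)$, which is therefore regular (and effectively so). The one place you work harder than necessary is in constructing $\Int_C$: because $K$ is a \emph{fixed} constant, the count $s=c_{\ell,1}+\dotsb+c_{\ell,r}$ is bounded by $K/2$, so the successor relations are definable by a plain first-order disjunction $\bigvee_{1\le s\le K/2}\bigl(\exists^{=s}z'(\dotsc)\land y=z+K+2s-2+d\bigr)$---no second-order matching set is needed.
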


\begin{lemma}
\label{lemma:characterisation_image_C}
Let $\sigma\in\widehat\Sigma^\star$. There exists a tree $t\in\trees[K]\Sigma$ with $C(t)=\sigma$ iff $\sigma=\sigma_0\sigma_1\dotso\sigma_n$ for some $n\geq 0$ and $\sigma_0,\dotsc,\sigma_n\in\widehat\Sigma^K$ satisfying (a) and (b):
\begin{enumerate}[(a)]
\item $\sigma_\ell=\alpha_{\ell,1}\dotso\alpha_{\ell,s_\ell}\$^{K-s_\ell}$ for some $s_\ell\geq 1$ and $\alpha_{\ell,1},\dotsc,\alpha_{\ell,s_\ell}\in\Sigma\times\{0,1\}$ and for each $\run \ell0n$,
\item $s_0=1$, $s_{\ell+1}=2\cdot(c_{\ell,1}+\dotsb+c_{\ell,s_\ell})$ for $0\leq\ell<n$, and $c_{m,1}+\dotsb+c_{m,s_m}=0$, where $\alpha_{\ell,r}=\langle a_{\ell,r},c_{\ell,r}\rangle$.
\end{enumerate}
\end{lemma}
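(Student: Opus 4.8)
The plan is to prove the two directions of the biconditional separately, exploiting the explicit combinatorial description of $C$ given before the lemma. For the ``only if'' direction, suppose $\sigma = C(t)$ for some $t \in \trees[K]\Sigma$ of height $m$. By construction $\sigma = \sigma_0\sigma_1\dotso\sigma_m$ with each $\sigma_\ell \in \widehat\Sigma^K$ of the form described in (a), where $s_\ell$ is the number of nodes of $t$ on level $\ell$; in particular $s_\ell \geq 1$ because $\dom(t)$ is non-empty and prefix-closed, so every level up to $m$ is occupied. Condition (b) is then a direct translation of the structural properties of tree domains: $s_0 = 1$ since the root is the unique node on level $0$; the observation recorded in the proof of Lemma~\ref{lemma:interpretation_C} (that the children of the $r$-th node sit at positions $2s-1$ and $2s$ on the next level, $s = c_{\ell,1}+\dotsb+c_{\ell,r}$) shows that the number of nodes on level $\ell+1$ is exactly twice the number of inner nodes on level $\ell$, giving $s_{\ell+1} = 2(c_{\ell,1}+\dotsb+c_{\ell,s_\ell})$; and the top level $m$ consists only of leaves, so $c_{m,1}+\dotsb+c_{m,s_m} = 0$. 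The bit $c_{\ell,r}$ read off from $\alpha_{\ell,r}$ coincides with the inner-node indicator used in the definition of $C$, so everything matches.

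For the ``if'' direction, assume $\sigma = \sigma_0\dotso\sigma_n$ satisfies (a) and (b); the task is to reconstruct a tree $t$ with $C(t) = \sigma$. I would define $\dom(t)$ level by level by induction on $\ell$, simultaneously fixing the lexicographic enumeration $u_{\ell,1},\dotsc,u_{\ell,s_\ell}$ of level $\ell$. Level $0$ is $\{\epsilon\}$. Given level $\ell < n$ together with its bits $c_{\ell,1},\dotsc,c_{\ell,s_\ell}$, declare $u_{\ell,r}$ to be an inner node exactly when $c_{\ell,r} = 1$, and let its children be $u_{\ell,r}0$ and $u_{\ell,r}1$; collecting all these children in lexicographic order yields $2(c_{\ell,1}+\dotsb+c_{\ell,s_\ell}) = s_{\ell+1}$ nodes on level $\ell+1$, matching the block $\sigma_{\ell+1}$. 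Since $s_{\ell+1} \geq 1$ by (a), at least one node on level $\ell$ has children, so the construction does not stall prematurely; and $s_0 = 1$ starts it off correctly. Labels are assigned by $t(u_{\ell,r}) = a_{\ell,r}$. One must check that the resulting $\dom(t)$ is genuinely a tree domain: it is non-empty and finite by construction, prefix-closed because a node is only added as a child of a node already present, and it satisfies the ``$u0 \in D$ iff $u1 \in D$'' condition because children are always added in pairs. The final condition $c_{n,1}+\dotsb+c_{n,s_n} = 0$ guarantees that no node on level $n$ gets children, so the construction terminates with height exactly $n$. A short verification that running $C$ on this $t$ returns $\sigma$ — the enumeration $u_{\ell,1},\dotsc,u_{\ell,s_\ell}$ and the bits $c_{\ell,r}$ were built to agree with the definition of $C$ — completes this direction.

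The main obstacle, and the only point requiring genuine care rather than bookkeeping, is verifying that the inductively built $\dom(t)$ is a legitimate tree domain and that the lexicographic enumeration of each level produced by the construction is the \emph{same} enumeration that the definition of $C$ would use on the reconstructed tree. Concretely, one has to confirm that listing the children of the inner nodes of level $\ell$ in the order ``children of $u_{\ell,1}$, then children of $u_{\ell,2}$, \dots'' coincides with the lexicographic order on $\{0,1\}^{\ell+1} \cap \dom(t)$; this follows because $u_{\ell,1} \lexless u_{\ell,2} \lexless \dotsb$ implies $u_{\ell,r}0 \lexless u_{\ell,r}1 \lexless u_{\ell,r+1}0$ whenever consecutive inner nodes are involved, but it should be stated explicitly. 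Everything else is a routine unwinding of the definitions of $C$, of tree domains, and of conditions (a) and (b), so the proof can be kept brief once this coherence point is nailed down.
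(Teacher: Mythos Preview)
Your argument is correct, but the converse direction takes a different route from the paper. You build $\dom(t)$ level by level directly inside $\{0,1\}^\star$, carrying along the lexicographic enumeration and checking afterwards that the tree-domain axioms and the coherence of the enumeration hold. The paper instead reuses the interpretation $\Int_C$ from Lemma~\ref{lemma:interpretation_C}: it forms the $\signT\Sigma$-structure $\struct T=\Int_C(\sigma)$ on the set of non-$\$$ positions of $\sigma$, then defines a map $f\colon\dom(\struct T)\to\{0,1\}^\star$ inductively (sending the first position to $\epsilon$ and each later position to $f(p)d$ for the uniquely determined predecessor $p$), and finally argues that $f$ is an isomorphism onto a tree $t$ with $C(t)=\sigma$. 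Your direct approach is more elementary and self-contained; the paper's approach has the advantage of making the converse a corollary of machinery already set up, which is consonant with the section's overall strategy of doing everything through $\Int_C$.

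One small omission: you should state explicitly that the reconstructed tree lies in $\trees[K]\Sigma$, i.e., that $\diameter(t)\leq K$. This is immediate because level $\ell$ has exactly $s_\ell$ nodes and condition~(a) forces $s_\ell\leq K$, but the lemma asks for $t\in\trees[K]\Sigma$, not merely $t\in\trees\Sigma$, so it deserves a sentence.
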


\begin{proof}
To see that $C(t)$ has the required shape, notice that (b) mainly reflects the relationship between the numbers of nodes on two adjacent levels. Conversely, if $\sigma\in\widehat\Sigma^\star$ is of the required shape, then there is a tree $t\in\trees\Sigma$ with $t\cong\Int_C(\sigma)$ and it turns out that $\diameter(t)\leq K$ and $C(t)=\sigma$.\footnote{More details on this can be found in Appendix~\ref{subsec:characterisation_image_C}.}\qed
\end{proof}

\begin{proof}[of Proposition~\ref{prop:C_preserves_regularity}]
Let $\Gamma_C$ be an $\MSO{\signW{\widehat\Sigma}}$-sentence which expresses the requirement on the shape of $\sigma$ from Lemma~\ref{lemma:characterisation_image_C}.
By Theorem~\ref{thm:regular_MSO}, there is an $\MSO{\signT\Sigma}$-sentence $\Phi$ defining $L\subseteq\trees[K]\Sigma$. Then, the $\MSO{\signW{\widehat\Sigma}}$-sentence $\Gamma_C\land\Phi^{\Int_C}$ defines $C(L)$ and, again by Theorem~\ref{thm:regular_MSO}, this language is regular. Finally, all employed constructions are effective.\qed
\end{proof}

\subsection{Preservation of Automaticity}

The purpose of this subsection is to complete the proof of Theorem~\ref{thm:slim_TA_are_WA}.

\begin{proposition}
\label{prop:C_preserves_automaticity}
Let $R\subseteq(\trees[K]\Sigma)^n$ be an automatic relation. Then, the relation $C(R)\subseteq(\widehat\Sigma^\star)^n$ is also automatic and one can compute a finite automaton recognising $\otimes C(R)$ from a tree automaton recognising $\otimes R$.
\end{proposition}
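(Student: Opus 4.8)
The plan is to mimic the proof of Proposition~\ref{prop:C_preserves_regularity}, but now working with the convolution of $n$ trees rather than a single tree, and translating from trees to words via the componentwise encoding. Concretely, given an automatic relation $R\subseteq(\trees[K]\Sigma)^n$, the relation $C(R)$ is obtained by applying $C$ to each component; so $\otimes C(R)$ is a word language over the convolution alphabet $(\widehat\Sigma_\Box^n)^\star$, and I want to define it by an $\MSO{}$-sentence over the word signature of that alphabet. The first step is to produce, from a word $\otimes(\sigma^1,\dotsc,\sigma^n)$ over $(\widehat\Sigma_\Box^n)^\star$, an $\MSO{}$-interpretation of the tree convolution $\otimes(t^1,\dotsc,t^n)$ over $\signT{\Sigma_\Box^n}$. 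The point is that each $\sigma^i=C(t^i)$ is already interpretable back as $t^i$ via Lemma~\ref{lemma:interpretation_C} with the \emph{same} interpretation $\Int_C$; the only subtlety is that the individual encodings $C(t^i)$ have different lengths (one block per level, so length $(h(t^i)+1)K$), whereas the convolution pads all of them to the maximal length with the fresh symbol $\Box$. So I first define a derived interpretation $\Int_C^{(i)}$ that, on the $i$-th track of $\otimes(\sigma^1,\dotsc,\sigma^n)$, reconstructs the domain of $t^i$ as a subset $\{1,\dotsc,(h(t^i)+1)K\}$ of the word's domain and interprets $S_0^{t^i},S_1^{t^i},P_a^{t^i}$ exactly as $\Int_C$ does, reading only the $i$-th component of each letter.

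Next I need to interpret the tree convolution $\otimes\bar t$ itself. Its domain is $\dom(t^1)\cup\dotsb\cup\dom(t^n)$, which I cannot directly read off a single track. Here I use the crucial combinatorial fact underlying $C$ and Lemma~\ref{lemma:interpretation_C}: the node $u_{\ell,r}$ of $t^i$ sits at position $\ell\cdot K+r$ in $C(t^i)$, but this numbering is \emph{not} canonical across the $t^i$, since the $r$-th node on level $\ell$ of $t^i$ need not be the $r$-th node on level $\ell$ of $t^j$. The standard fix — the same one implicit in the definition of tree convolution — is to encode each node by its address in $\{0,1\}^{\star}$; and the address of $u_{\ell,r}$ is itself recoverable in $\MSO{}$ from $C(t^i)$ by the recursion "$u_{\ell,r}d=u_{\ell+1,2s-1+d}$" already used in the proof of Lemma~\ref{lemma:interpretation_C}. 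However, rather than chase addresses explicitly, the cleaner route is: since $\Int_C^{(i)}$ already gives me each $t^i$ as an $\MSO{}$-interpreted structure inside the word, and since the operation of taking the convolution of $n$ given $\signT{\Sigma}$-structures and producing a $\signT{\Sigma_\Box^n}$-structure (identifying nodes by address and filling in $\Box$) is itself $\MSO{}$-definable — it is a first-order operation on disjoint unions together with the tree-isomorphism type, and MSO over trees can test "same address" via the standard encoding of the $n$-fold amalgam — I can compose these interpretations to obtain a single $\MSO{}$-interpretation $\langle f,\Int\rangle$ of $\otimes\bar t$ in $\otimes\bar\sigma$ that does not depend on $\bar t$. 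This composition of interpretations is routine: MSO-interpretations compose, so $\Int = \Int_{\mathrm{conv}}$ applied to the tuple $(\Int_C^{(1)},\dotsc,\Int_C^{(n)})$ is again an MSO-interpretation, computable from the data.

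With this in hand the proof finishes exactly as before. Let $\Gamma$ be an $\MSO{}$-sentence over $\signW{\widehat\Sigma_\Box^n}$ asserting that the input word is a genuine convolution $\otimes\bar\sigma$ with each $\sigma^i$ in the image of $C$ (this is Lemma~\ref{lemma:characterisation_image_C} applied on each track, plus the synchronisation condition that the $\Box$-padding on track $i$ begins exactly at position $(h(t^i)+1)K+1$ and the non-$\Box$ length is a multiple of $K$) — all of which is clearly expressible. By Theorem~\ref{thm:regular_MSO} there is an $\MSO{\signT{\Sigma_\Box^n}}$-sentence $\Psi$ defining the regular tree language $\otimes R\subseteq\trees{\Sigma_\Box^n}$. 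Then $\Gamma\land\Psi^{\Int}$ is an $\MSO{\signW{\widehat\Sigma_\Box^n}}$-sentence defining $\otimes C(R)$, which is therefore regular by Theorem~\ref{thm:regular_MSO}; since every step is effective, a finite automaton recognising $\otimes C(R)$ is computable from one recognising $\otimes R$. Together with Proposition~\ref{prop:C_preserves_regularity} applied to $\dom(\struct A)$ and to each $R^{\struct A}$, and with a $K$ supplied by Theorem~\ref{thm:slim_decidable}, this yields the promised word automatic presentation of $C(\struct A)\cong\struct A$, completing the proof of Theorem~\ref{thm:slim_TA_are_WA}. The main obstacle is the middle step: making precise, and verifying that it is MSO-definable uniformly in $\bar t$, that the convolution operation on the reconstructed trees $t^i$ — in particular the identification of equally-addressed nodes across tracks whose "left-to-right indices" $r$ are incompatible — can be carried out inside the word model; everything else is a direct reprise of the single-tree argument.
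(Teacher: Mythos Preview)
Your proposal has a genuine gap at exactly the point you flag as ``the main obstacle'': the convolution $\otimes\bar t$ is in general \emph{not} interpretable in $\otimes C(\bar t)$, for the simple reason that it can have strictly more elements. An $\MSO{}$-interpretation in the paper's sense requires a one-to-one map $f\colon\dom(\otimes\bar t)\to\dom(\otimes C(\bar t))$, and no such map need exist. For a concrete instance take $K=2$, $n=2$, and let $t^1,t^2\in\trees[2]\Sigma$ both have height~$2$ with level widths $1,2,2$, but with $\dom(t^1)=\{\epsilon,0,1,10,11\}$ and $\dom(t^2)=\{\epsilon,0,1,00,01\}$. Then $\dom(\otimes\bar t)=\{\epsilon,0,1,00,01,10,11\}$ has $7$ elements, whereas $|\otimes C(\bar t)|=\max_i(h(t^i)+1)K=3\cdot 2=6$. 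The individual interpretations $\Int_C^{(i)}$ work fine, but they all live on the \emph{same} $6$ positions, and the ``convolution operation'' you want to apply afterwards must manufacture $7$ distinct domain elements out of $6$; no amount of $\MSO{}$-cleverness can do that with a one-dimensional interpretation.

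The paper confronts this head-on and repairs it by first passing through the length-multiplying morphism $H\colon(\widehat\Sigma_\Box^n)^\star\to(\widehat\Sigma_\Box^n)^\star$, $\bar\alpha_1\dotsm\bar\alpha_m\mapsto\bar\alpha_1^n\dotsm\bar\alpha_m^n$, which provides $nK$ positions per level---enough to host the disjoint union $\amalg\bar t$ of the $t^i$'s (each $t^i$ placed on positions congruent to $i$ modulo $n$). The paper then interprets $\otimes\bar t$ in $\amalg\bar t$ via an $\MSO{}$-formula $E(x,y)$ expressing ``same address'', and $\amalg\bar t$ in $H(\otimes C(\bar t))$ by combining the $\Int_C$'s on the separate residue classes. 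The resulting sentence defines $H(\otimes C(R))$, and regularity of $\otimes C(R)$ follows because inverse images under injective morphisms preserve regularity. Your outline becomes correct once you insert this blow-up step; without it, the claimed composed interpretation simply does not exist.
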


\noindent Basically, the key idea behind the proof is the same as for Proposition~\ref{prop:C_preserves_regularity} though it is more involved. Let $\bar t=(t_1,\dotsc,t_n)\in(\trees[K]\Sigma)^n$. Due to cardinality reasons, $\otimes\bar t$ is commonly not directly interpretable in $\otimes C(\bar t)$ but only in an $n$-fold copy of $\otimes C(\bar t)$. This is formalised by means of the one-to-one monoid morphism
\begin{equation*}
	H\colon(\widehat\Sigma_\Box^n)^\star\to(\widehat\Sigma_\Box^n)^\star,\bar\alpha_1\dotso\bar\alpha_m\mapsto\bar\alpha_1^n\dotso\bar\alpha_m^n\,.
\end{equation*}
The interpretation of $\otimes\bar t$ in $H\bigl(\otimes C(\bar t)\bigr)$ embraces two aspects which are better considered separately. Thus, we define an intermediate structure $\amalg\bar t$ which extends the disjoint union of the $t_i$'s on domain $\dom(\amalg\bar t)=\bigcup_{\run i1n} \{i\}\times\dom(t_i)$ by a binary relation $L^{\amalg\bar t}$, relating all $(i,u)$ and $(j,v)$ with $|u|=|v|$, and unary relations $Q_i^{\amalg\bar t}=\{i\}\times\dom(t_i)$ for each $\run i1n$. Altogether, we give several interpretations whose formulae naturally do not depend on the specific choice of~$\bar t$. An overview of the whole setting is depicted in Figure~\ref{fig:intepretations}.

\begin{figure}
\centering
\vspace{3ex}
\begin{tikzpicture}[->,semithick,inner sep=0.5mm]%,shorten <= 1mm,shorten >= 1mm]
	\node (conv) at (0cm,0cm) {$\otimes(t_1,\dotsc,t_n)$};
	\node (sum) at (4cm,0cm) {$\amalg(t_1,\dotsc,t_i,\dotsc,t_n)$};
	\node (sum') at (3.7cm,0cm) {\phantom{$\amalg(t_1,\dotsc,t_i,\dotsc,t_n)$}};
	\node (H) at (9.5cm,0cm) {$H\bigl(\otimes\bigl(C(t_1),\dotsc,C(t_n)\bigr)\bigr)$};
%	\node (H') at (9.5cm,0cm) {\phantom{$H\bigl(\otimes\bigl(C(t_1),$}};
	\node (C) at (6.6cm,1cm) {$C(t_i)$};
	
	\draw (conv) edge node[above] {$\langle f_\amalg,\Int_\amalg\rangle$} (sum)
	      (sum) edge node[above] {$\langle f_H,\Int_H\rangle$} (H)
	      (sum') edge[dashed] node[above,rotate=19] {$\langle f_{C,i},\Int_C\rangle$} (C)
	      (C) edge node[above,rotate=-19] {$\langle f_{\otimes,i},\Int_{\otimes,i}\rangle$} (H);
\end{tikzpicture}
\caption{Interpretations involved in proving Proposition~\ref{prop:C_preserves_automaticity}.}
\label{fig:intepretations}
\end{figure}
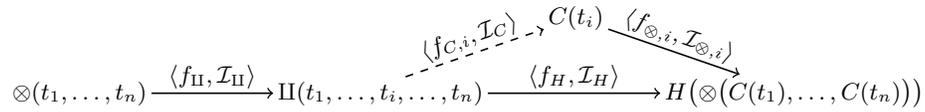

\paragraph{The Interpretation $\langle f_\amalg,\Int_\amalg\rangle$.} The main idea is to construct an $\MSO{}$-formula $E(x,y)$ with $\amalg\bar t\models E\bigl((i,u),(j,v)\bigr)$ iff $u=v$. To achieve this, consider for each $(i,u)\in\dom(\amalg\bar t)$ the set $\Pre(i,u)$ of all $(i,u')\in\dom(\amalg\bar t)$ where $u'$ is a prefix of $u$. For $(i,u),(j,v)\in\dom(\amalg\bar t)$ we have $u=v$ iff $|u|=|v|$ and for all $(i,u')\in\Pre(i,u)$ and $(j,v')\in\Pre(j,v)$ with $|u'|=|v'|>0$ the last symbols of $u'$ and $v'$ coincide. Since the set $\Pre\bigl((i,u),X\bigr)$ is definable in $\MSO{}$, we can express this characterisation in $\MSO{}$ as well. Heavily using the resulting formula $E$, one can construct an interpretation $\langle f_\amalg,\Int_\amalg\rangle$ of $\otimes\bar t$ in $\amalg\bar t$ such that $f_\amalg(u)=(i,u)$, where $i$ is minimal with $u\in\dom(t_i)$.

\paragraph{The Interpretations $\langle f_{\otimes,i},\Int_{\otimes,i}\rangle$.} For all $\run i1n$ and $\bar w\in(\widehat\Sigma^\star)^n$ one can easily give an interpretation $\langle f_{\otimes,i},\Int_{\otimes,i}\rangle$ of $w_i$ in $H(\otimes\bar w)$ such that $f_{\otimes,i}(p)=(p-1)\cdot n+i$.

\paragraph{The Interpretation $\langle f_H,\Int_H\rangle$.} For $\run i1n$ let $\langle f_{C,i},\Int_C\rangle$ be the interpretation of~$t_i$ in~$C(t_i)$ from Lemma~\ref{lemma:interpretation_C}. Since the $f_{\otimes,i}$'s have mutually disjoint images, the map $f_H\colon\dom(\amalg\bar t)\to\dom\bigl(H(\otimes C(\bar t))\bigr)$ with $f_H(i,u)=f_{\otimes,i}(f_{C,i}(u))$ is one-to-one. For $(i,u)\in\dom(\amalg\bar t)$ we get $|u|\cdot K< f_{C,i}(u)\leq\bigl(|u|+1\bigr)\cdot K$ and hence
\begin{equation*}
	|u|\cdot K\cdot n < f_H(i,u) \leq \bigl(|u|+1\bigr)\cdot K\cdot n\,.
\end{equation*}
Exploiting this observation for the formula $L^{\Int_H}$ and using $\Int_C$ and $\Int_{\otimes,i}$, one can construct formulae $\Int_H$ such that $\langle f_H,\Int_H\rangle$ is an interpretation of $\amalg\bar t$ in $H\bigl(\otimes C(\bar t)\bigr)$.

\begin{proof}[of Proposition~\ref{prop:C_preserves_automaticity}]
Let $\Gamma_H$ be an $\MSO{\signW{\widehat\Sigma_\Box^n}}$-sentence defining the language $H\bigl(\otimes(\widehat\Sigma^\star)^n\bigr)\subseteq(\widehat\Sigma_\Box^n)^\star$. If $\Phi$ defines $\otimes R$, then
\begin{equation*}
	\Gamma_H \land \bigwedge\nolimits_{\run i1n} \Gamma_C^{\Int_{\otimes,i}} \land (\Phi^{\Int_\amalg})^{\Int_H}
\end{equation*}
defines $H\bigl(\otimes C(R)\bigr)$. Since $H$ is a one-to-one monoid morphism, $\otimes C(R)$ is regular as well. Finally, all employed constructions are effective.\qed
\end{proof}

\section{Fat Tree Automatic Ordinals Are Not Word Automatic}
\label{sec:fat_TA_ord_not_WA}

The goal of this section is to give the last missing piece for the proof of Theorem~\ref{thm:main}, namely the following theorem:

\begin{theorem}
\label{thm:fat_TA_ord_are_not_WA}
Let $\struct L$ be a tree automatic scattered linear ordering such that $\dom(\struct L)$ is fat. Then, $\struct L$ is \emph{not} word automatic.
\end{theorem}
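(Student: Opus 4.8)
The plan is to derive a contradiction from assuming that a scattered linear ordering $\struct L$ with a fat tree automatic presentation is word automatic. The leverage comes from a quantitative comparison of how many elements a tree automatic versus a word automatic structure can "pack" into a bounded region, combined with the structure theory of scattered orderings (Hausdorff's hierarchy). Concretely, I would first recall that a word automatic linear ordering has finite \emph{FC-rank} (equivalently, finite Hausdorff rank): this is a known consequence of the growth-rate / pumping restrictions on automatic presentations of orderings, and the paper's notation $\rank$ suggests this invariant is precisely what will be used. By contrast, the fatness hypothesis on $\dom(\struct L)$ will be used to exhibit arbitrarily large finite "antichains of intervals" or, more precisely, a family of suborderings witnessing unbounded Hausdorff rank.

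The key steps, in order, would be: (i) Use Lemma~\ref{lemma:fatlang_characterization}: fatness means $G_{\aut A}$ contains a cycle through a special edge from which the accepting states are reachable. Unwinding this, for every $K$ there is a tree $t_K \in \dom(\struct L)$ with $\diameter(t_K) > K$, and moreover these can be chosen with a self-similar pumping structure — a "comb" of $K$ nested subtrees each of which can itself be independently varied among infinitely many values. (ii) Translate this combinatorial fact about trees into an order-theoretic statement: because the tree automaton for $<^{\struct L}$ reads convolutions synchronously, two trees that agree on a long initial segment and differ only deep inside one branch are ordered in a way controlled by finitely much local information; iterating the pumping on the fat branch produces, inside $\struct L$, a copy of an ordering of Hausdorff rank growing with $K$ (intuitively, each extra unit of thickness contributes one more level of $\omega$- or $\zeta$-type condensation). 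Since $K$ is unbounded, $\struct L$ has infinite FC-rank. (iii) Invoke the theorem that every word automatic scattered linear ordering has finite FC-rank; this contradicts (ii), so $\struct L$ is not word automatic.

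I would organize the write-up around two lemmas: one extracting from fatness a family of embeddings $\struct L_K \hookrightarrow \struct L$ with $\rank(\struct L_K) \ge K$ (or $\to \infty$), using the pumping/self-similarity in the tree automaton together with the $\struct A\cdot\struct B$ product construction from the Background section to build the $\struct L_K$ compositionally; and one recording that word automaticity bounds $\rank$. The composition lemma ``$\struct A_1\hookrightarrow\struct B_1$ and $\struct A_2\hookrightarrow\struct B_2$ imply $\struct A_1\cdot\struct A_2\hookrightarrow\struct B_1\cdot\struct B_2$'' is exactly the tool needed to promote one level of pumping into nested products, and then to observe $\rank(\struct A\cdot\struct B)$ grows appropriately.

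The main obstacle I expect is step (ii): converting raw thickness of a tree in the domain into genuine order-theoretic complexity of $\struct L$. Fatness per se only says the \emph{domain} language is unbounded in thickness; it says nothing a priori about the order relation, and one must rule out the possibility that all these "wide" trees are lumped together into a short interval of $\struct L$ (which would cost nothing in rank). The resolution must exploit that $<^{\struct L}$ is tree automatic: I would argue that along the pumped fat branch the automaton's finite state space forces a periodic ordering behaviour, so that varying the pumped copies realizes either a densely ordered set — impossible by scatteredness, and this is where the scattered hypothesis is essential — or an $\omega$/$\omega^*$/$\zeta$-indexed family, each nesting contributing a rank increment. Making this dichotomy precise, and in particular showing the scattered case genuinely forces unbounded rank rather than collapsing, is the technical heart of the argument; the rest is bookkeeping with interpretations and the known bound on word automatic orderings.
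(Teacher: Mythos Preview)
Your overall strategy matches the paper's: reduce to the Khoussainov--Rubin--Stephan bound on FC-rank (Theorem~\ref{thm:WA_ord_FC_is_finite}) by embedding, for every $r$, a product $\struct A_1\cdots\struct A_r$ of infinite linear orderings into $\struct L$, then observe that each infinite $\struct A_i$ contains a copy of $(\Nat;<)$ or $(\Nat;>)$, so some member of $\mathcal{N}_r$ embeds and the rank is infinite.

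Where your sketch diverges from the paper is exactly your step~(ii), and there is a gap there. You propose a periodicity/dichotomy argument along a single pumped ``fat branch'' (dense versus $\omega/\omega^*/\zeta$), invoking scatteredness to kill the dense case. The paper does not argue this way. Given a tree $T\in\dom(\struct L)$ with $\diameter(T)\ge r\cdot 2^n$, it locates $r$ pairwise incomparable positions $u_1,\dots,u_r$ in $T$, each carrying a subtree of height $\ge n$; by Lemma~\ref{lemma:inflang_characterization} each slot can be filled independently from an infinite supply of subtrees without changing $\aut A(T)$. The crucial tool you are missing is \emph{Ramsey's theorem}: for each coordinate $i$, colour unordered pairs $\{s,t\}$ of candidate subtrees by the set $\bigl\{\state{s,s},\state{t,t},\state{s,t},\state{t,s}\bigr\}$ and pass to an infinite monochromatic $A_i$. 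On $A_i$ the state $\state{s,t}$ is then determined by $\cmp_{\struct A_i}(s,t)$ and conversely, so the induced order on $A_1\times\cdots\times A_r$ satisfies the hypotheses of a short combinatorial lemma (Lemma~\ref{lemma:ordering}) forcing it to be isomorphic to some lexicographic product $\struct A_{\pi(1)}\cdots\struct A_{\pi(r)}$. Scatteredness is \emph{not} used in this step; it enters only at the end, in the implication ``every $\mathcal{N}_r$ embeds $\Rightarrow$ infinite FC-rank''.

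Your dichotomy heuristic is not wrong in spirit, but as stated it does not go through: finitely many automaton states do not by themselves force the order on an infinite pumped family to be dense or of one of the simple discrete types --- a priori any order type compatible with a finite colouring of pairs can appear, and ``periodic behaviour'' along one branch does not obviously compose across $r$ independent slots into a genuine product. Ramsey plus Lemma~\ref{lemma:ordering} is precisely what tames both issues.
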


\noindent The theorem below states the necessary condition on word automatic linear orderings we use to show non-automaticity:

\begin{theorem}[Khoussainov, Rubin, Stephan \cite{KRS03}]
\label{thm:WA_ord_FC_is_finite}
If $\struct L$ is a word automatic linear ordering, then its \textsl{FC}-rank is finite.
\end{theorem}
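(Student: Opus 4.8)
\noindent The plan is to deduce the theorem from Theorem~\ref{thm:WA_ord_FC_is_finite} by contraposition: it suffices to show that a tree automatic scattered linear ordering $\struct L$ whose domain is fat has \emph{infinite} \textsl{FC}-rank. I will use that $\rank$ is monotone under passing to suborderings: if $\struct A$ is a subordering of $\struct B$, then every finite interval of $\struct B$ restricts to a finite interval of $\struct A$, so each class of $\cmp(\struct A)$ is a union of restrictions to $\dom(\struct A)$ of classes of $\cmp(\struct B)$; consequently $\cmp(\struct A)$ embeds into $\cmp(\struct B)$ (a monotone surjection of linear orderings admits an order-preserving section), and pushing this through the transfinite iteration gives $\rank(\struct A)\le\rank(\struct B)$. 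Hence it is enough to exhibit, for every $m\ge 1$, a subordering of $\struct L$ of \textsl{FC}-rank at least $m$.

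To obtain such suborderings I extract a combinatorial skeleton from fatness. Fix a tree automatic presentation of $\struct L$ and let $\aut A=(Q,\iota,\delta,F)$ be a reduced tree automaton for $\dom(\struct L)$; by Lemma~\ref{lemma:fatlang_characterization} the graph $G_{\aut A}$ contains a cycle $C$ through a special edge and from which $F$ is reachable. Let $r\in Q$ be the state witnessing specialness, so $S=\Set{t\in\trees\Sigma | \aut A(t)=r}$ is infinite by Lemma~\ref{lemma:inflang_characterization}. For each $m\ge 1$, running around $C$ exactly $m$ times --- between a subtree reaching a fixed vertex $p_0$ of $C$ and a path from $p_0$ to $F$ --- describes a tree-shaped context with $m$ distinguished ``slots'', one per turn, each of which may independently be filled by an arbitrary element of $S$ while all other sibling subtrees stay fixed. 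Filling the slots according to $\bar s=(s_1,\dotsc,s_m)\in S^m$ yields a tree $T_m(\bar s)\in\dom(\struct L)$; distinct $\bar s$ give distinct trees, and since the construction only substitutes subtrees into contexts, the tree-domain conditions hold automatically.

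It remains to analyse $<^{\struct L}$ on $\mathcal F_m=\Set{T_m(\bar s) | \bar s\in S^m}$. Let $\aut B$ be a deterministic tree automaton recognising $\otimes{<^{\struct L}}$. For $\bar s,\bar s'\in S^m$ the convolution $\otimes\bigl(T_m(\bar s),T_m(\bar s')\bigr)$ agrees with a fixed tree on the common spine and carries $\otimes(s_j,s_j')$ in the $j$-th slot, so the bottom-up run of $\aut B$ is determined by the slot states $\beta_j=\aut B\bigl(\otimes(s_j,s_j')\bigr)$, and whether $T_m(\bar s)<^{\struct L}T_m(\bar s')$ is a fixed function of $(\beta_1,\dotsc,\beta_m)$. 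Colouring the one- and two-element subsets of $S$ by the $\aut B$-states of the corresponding diagonal and off-diagonal convolutions and applying Ramsey's theorem yields an infinite homogeneous $S'\subseteq S$; enumerating $S'$ in order type $\omega$, each $\beta_j$, and hence the verdict on $T_m(\bar s)$ versus $T_m(\bar s')$, depends only on the sign vector $\bigl(\operatorname{sgn}(s_j,s_j')\bigr)_{j=1}^{m}\in\{-,0,+\}^{m}$. A routine argument using that $<^{\struct L}\restriction\mathcal F_m$ is a scattered strict total order then forces this coordinate-wise sign-invariant order on $(S')^m$ to be a lexicographic product of $m$ copies of $\omega$ or its reverse $\omega^\star$: each coordinate in isolation gives an $\omega$- or $\omega^\star$-chain, and transitivity turns the induced tie-breaking relation among the coordinates into a linear priority order. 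Such a product has \textsl{FC}-rank exactly $m$. Thus $\struct L$ has suborderings of arbitrarily large finite \textsl{FC}-rank, so $\rank(\struct L)$ is infinite, and by Theorem~\ref{thm:WA_ord_FC_is_finite} the ordering $\struct L$ is not word automatic.

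The hard part will be the third paragraph. One has to set up the Ramsey colouring so that a \emph{single} homogeneous set simultaneously homogenises all $m$ slots --- possible precisely because the colour of a pair $\{s,s'\}$ does not refer to any particular slot --- and then carry out the classification of coordinate-wise sign-invariant strict total orders on $(S')^m$ as lexicographic products, keeping track of the orientation of each coordinate and verifying, via suitable three-term chains, that the induced priority relation on coordinates is a genuine strict linear order. The remaining ingredients --- the unwinding of the cycle in the second paragraph, the monotonicity of $\rank$, and the \textsl{FC}-rank computation for a lexicographic product --- are routine.
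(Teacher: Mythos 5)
Your proposal does not prove the statement in question. The statement is the result of Khoussainov, Rubin and Stephan that every \emph{word automatic} linear ordering has finite \textsl{FC}-rank --- an external theorem that the paper only cites from \cite{KRS03} and never proves. What you have written is instead a proof sketch of Theorem~\ref{thm:fat_TA_ord_are_not_WA} (fat tree automatic scattered linear orderings are not word automatic), and your very first and very last sentences invoke Theorem~\ref{thm:WA_ord_FC_is_finite} as a black box. Read as a proof of Theorem~\ref{thm:WA_ord_FC_is_finite} itself, the argument is therefore circular, and more importantly it never engages with the actual content of that theorem: nowhere do you take a word automatic presentation of a linear ordering and bound its \textsl{FC}-rank. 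A genuine proof would have to argue about finite automata on words --- for instance via the counting argument of \cite{KRS03}, which bounds the rank in terms of the number of states of the presenting automata --- and none of that appears in your text.

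As a side remark, the material you did write tracks the paper's own proof of Theorem~\ref{thm:fat_TA_ord_are_not_WA} fairly closely in spirit (pumping a fat domain to get $r$ independent slots, a Ramsey homogenisation per slot, and a classification of the resulting product order via Lemma~\ref{lemma:ordering}), though the paper obtains the slots from a wide level of a single tree rather than by unwinding a special cycle $m$ times, and it only needs each factor to contain a copy of $(\Nat;<)$ or $(\Nat;>)$ rather than a full lexicographic classification. But that is an answer to a different question than the one posed.
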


\noindent Actually, we do not need any details on the \textsl{FC}-rank (finite condensation rank) besides the fact that every scattered linear ordering $\struct L$, having the property that for each $r\geq 1$ at least one linear ordering from
\begin{equation*}
	\mathcal{N}_r = \Set{ \struct A_1\cdot\struct A_2\dotsm\struct A_r |
		\struct A_1,\dotsc,\struct A_r\in\bigl\{(\Nat;<),(\Nat;>)\bigr\} }
\end{equation*}
can be embedded into $\struct L$, has infinite \textsl{FC}-rank. The main idea of the proof is as follows:

\begin{lemma}
\label{lemma:fat_TA_ord_have_high_rank}
Let $\struct L=(L;<)$ be a tree automatic scattered linear ordering, $(\aut A;\aut A_<)$ an automatic presentation of $\struct L$, $n$ the number of states of $\aut A$, and $r\geq 1$. \linebreak[1] If there exists some tree $t\in L$ with $\diameter(t)\geq r\cdot 2^n$, then there are infinite linear orderings $\struct A_1,\dotsc,\struct A_r$ such that $\struct A_1\cdot\struct A_2\dotsm\struct A_r$ can be embedded into $\struct L$.
\end{lemma}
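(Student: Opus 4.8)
The plan is to exploit the fatness witness $t\in L$ with $\diameter(t)\geq r\cdot 2^n$ together with a pumping argument on the tree automaton $\aut A$ in order to produce, inside $\struct L$, a nested family of copies of $(\Nat;<)$ or $(\Nat;>)$ whose product embeds into $\struct L$. The combinatorial core is that a wide tree accepted by $\aut A$ must contain, at some level, at least $r\cdot 2^n$ nodes; by pigeonhole at least $r$ of these nodes carry the same state $q$, and more importantly the subtrees hanging below them can be independently pumped. The idea is then to turn ``$r$ pumpable positions stacked at one level'' into ``an $r$-fold product of infinite linear orderings,'' because each independent pump contributes one factor isomorphic to $(\Nat;<)$ or $(\Nat;>)$, and the order in which the pumps are nested is governed by how their positions compare in the global order $<^{\struct L}$.

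\medskip

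First I would fix the tree $t\in L$ with a level $\ell$ carrying $s\geq r\cdot 2^n$ nodes $u_1\prefix_{\mathrm{lex}}\dots\prefix_{\mathrm{lex}} u_s$, and record the states $q_j=\aut A(t\restrict u_j)$. Along the path from the root to each $u_j$ there is a context (a tree with a hole) into which $t\restrict u_j$ plugs; because all $u_j$ lie on the same level, these contexts share the same ``above level $\ell$'' part but split below. The key step is an iterated pumping lemma: I want to find states and contexts so that for each of $r$ chosen positions $j_1<\dots<j_r$ there is a context $c_k$ and a loop (a context that maps the state at the hole to itself) $p_k$ such that plugging $p_k^{m}$ for any $m\geq 0$ still yields a tree in $L$, and crucially these $r$ loops can be pumped \emph{simultaneously and independently}. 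Independence is the delicate point: pumping at position $j_k$ must not destroy the pumpability at the other positions. I expect this to follow by choosing the loops deep enough and disjoint enough — since the $u_j$ are incomparable in the prefix order for distinct $j$, the loops can be localised in the pairwise-disjoint subtrees $t\restrict u_{j_k}$, so they genuinely do not interfere. Here the bound $r\cdot 2^n$ (rather than just $r\cdot n$) is what lets me simultaneously control both a state repeated on the level and a state repeated on a vertical path below, so that a single level supplies $r$ vertically pumpable slots.

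\medskip

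With the $r$ independent pumps in hand, consider the map sending a tuple $(m_1,\dots,m_r)\in\Nat^r$ to the element of $L$ coded by the tree obtained from $t$ by inserting $p_k^{m_k}$ at slot $k$ for each $k$. This is an injection from $\Nat^r$ into $L$. The next step is to read off how $<^{\struct L}$ behaves on its image. Since $\aut A_<$ is a synchronous automaton and the $r$ slots occupy disjoint subtrees, comparing two such coded trees reduces — via a pumping/periodicity argument on $\aut A_<$ applied slotwise — to a lexicographic comparison of the tuples $(m_1,\dots,m_r)$ under some ordering of each coordinate; scatteredness of $\struct L$ forces each coordinate comparison to be eventually monotone, hence eventually either increasing (a copy of $(\Nat;<)$) or decreasing (a copy of $(\Nat;>)$). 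Concretely one argues that for fixed values of the other coordinates the function $m_k\mapsto{}$``the coded element'' is, for $m_k$ large, order-preserving or order-reversing onto a suborder of $\struct L$; if it were neither eventually, one could embed $(\mathbb Q;<)$, contradicting scatteredness. Thus after discarding finitely many small values in each coordinate we obtain an embedding of $\struct A_1\cdot\struct A_2\dotsm\struct A_r$ into $\struct L$ with each $\struct A_i\in\{(\Nat;<),(\Nat;>)\}$ — in particular each $\struct A_i$ is infinite, as required. By the composition property of $\cdot$ recalled in the Background, it suffices to embed each factor and respect the lexicographic priority, which is exactly what the disjoint-slot structure gives.

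\medskip

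The main obstacle I anticipate is making the ``simultaneous independent pumping'' rigorous while keeping every intermediate tree in the domain language $L$: one must choose the loops $p_k$ in the correct nesting order (the slot closest to the root pumped first, etc.) so that inserting a loop at one slot does not change which state arrives at another slot's hole, and one must verify that the accepting run survives all insertions. Equally fiddly is controlling $\aut A_<$ across the $r$-fold pumping: a priori the comparison automaton could behave non-periodically, but since it has finitely many states and the perturbations at distinct slots are in disjoint regions, a standard Ramsey-type/pumping argument on runs of $\aut A_<$ over the product $\Nat^r$ of parameters yields the eventual lexicographic-monotone behaviour. I would structure the write-up so that the tree-side pumping (producing the injection of $\Nat^r$) and the order-side analysis (identifying the induced order as a member of $\mathcal N_r$) are two clearly separated lemmas, with scatteredness invoked only in the second.
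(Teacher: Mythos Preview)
Your outline has the right architecture—locate $r$ independently variable slots in the fat witness tree and analyse the order induced on the resulting $r$-parameter family—but two steps are not under control, and the second is exactly the crux of the lemma.

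First, your extraction of $r$ pumpable slots from the bound $r\cdot 2^n$ is off. You pigeonhole on the \emph{states} carried by the $r\cdot 2^n$ nodes at level~$\ell$, but having many nodes in the same state is neither what you need nor what the bound delivers most naturally. The paper's move is cleaner: since every node at level $\ell-n$ has at most $2^n$ descendants at level~$\ell$, there must be at least $r$ nodes $u_1,\dots,u_r$ at level $\ell-n$ that have some descendant at level~$\ell$; hence each subtree $t\restrict u_i$ has height $\ge n$, and by Lemma~\ref{lemma:inflang_characterization} there are infinitely many replacement trees $s$ with $\aut A(s)=\aut A(t\restrict u_i)$. The $u_i$ are pairwise incomparable, so replacing these subtrees independently preserves the accepting run. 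This simultaneously explains the factor $2^n$ and dissolves your worry about nesting the pumps in the correct order.

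Second—and this is the genuine gap—you assert that the induced order on the $r$-parameter family is lexicographic, but that is precisely what requires proof. After the replacement construction one knows that $\bigstate{\struct T[\bar s],\struct T[\bar t]}$ is determined by the $r$ slotwise states $\state{s_i,t_i}$; this much is automatic from disjointness of the slots. But ``determined by the slotwise states'' is a long way from ``lexicographic in the slotwise comparisons''. The paper closes this in two moves: (i)~apply Ramsey to each coordinate so that, on the surviving infinite set $A_i$, the colour $\bigl\{\state{s,s},\state{t,t},\state{s,t},\state{t,s}\bigr\}$ is constant, forcing $\state{s,t}$ to be determined by $\cmp_{\struct A_i}(s,t)$ alone; and (ii)~invoke the purely order-theoretic Lemma~\ref{lemma:ordering}, which says that any linear order on a product that is determined by componentwise signs and agrees with each factor on single-coordinate changes is, up to a permutation of the factors, the lexicographic product. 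Your ``eventual monotonicity via scatteredness'' controls only single-coordinate behaviour (condition~(2) of Lemma~\ref{lemma:ordering}) and says nothing about how different coordinates interact; your appeal to a ``Ramsey-type/pumping argument on $\aut A_<$'' is on the right track for step~(i) but you are missing step~(ii) entirely. Note also that the paper's proof of this lemma does not use scatteredness at all—the $\struct A_i$ are merely required to be infinite—so the fact that you reach for it here is a symptom that the real mechanism has not yet been isolated.
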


\newcommand{\state}[1]{\llbracket #1\rrbracket_<}
\newcommand{\bigstate}[1]{\bigl\llbracket #1\bigr\rrbracket_<}

\noindent For any linear ordering $\struct A$ and all $a_1,a_2\in\dom(\struct A)$ we define $\cmp_{\struct A}(a_1,a_2)\in\{-1,0,1\}$ to be $-1$ if $a_1 <^{\struct A} a_2$, $0$ if $a_1=a_2$, and $1$ if $a_2<^{\struct A}a_1$. To simplify notation, we put $\state{s,t}=\aut A_<\bigl(\otimes(s,t)\bigr)$ for all $s,t\in\trees\Sigma$. Moreover, we assume w.l.o.g. that from $\state{s,t}$ one can deduce whether $s=t$ holds true. Then, $\cmp_{\struct L}(s,t)$ \emph{is determined by} $\state{s,t}$ for all $s,t\in L$, i.e., there is a map $f$ from the state set of~$\aut A_<$ to $\{-1,0,1\}$ such that $\cmp_{\struct L}(s,t)=f\bigl(\state{s,t}\bigr)$ for all $s,t\in L$.

\begin{proof}
Let $\struct T\in L$ be a tree and $\ell\geq n$ such that $\bigl|\dom(\struct T)\cap\{0,1\}^\ell\bigr|\geq r\cdot 2^n$. Thus, there exist at least $r$ mutually distinct $u\in\dom(\struct T)\cap\{0,1\}^{\ell-n}$ for which there is a $v\in\{0,1\}^n$ with $uv\in\dom(\struct T)$, say $u_1,\dotsc,u_r$. For $\bar t=(t_1,\dotsc,t_r)\in(\trees\Sigma)^r$ let $\struct T[\bar t]\in\trees\Sigma$ be the tree obtained from $\struct T$ by replacing for each $\run i1r$ the subtree rooted at $u_i$ with $t_i$. Then, $\aut A\bigl(\struct T[\bar t]\bigr)$ is determined by the $r$ states $\aut A(t_1),\dotsc,\aut A(t_r)$ for all $\bar t\in(\trees\Sigma)^r$. Moreover, for $\bar s\in(\trees\Sigma)^r$ the tree $\otimes\bigl(\struct T[\bar s],\struct T[\bar t]\bigr)$ is obtained from $\otimes(\struct T,\struct T)$ by replacing for each $\run i1r$ the subtree rooted at $u_i$ with $\otimes(s_i,t_i)$. Consequently, $\bigstate{\struct T[\bar s],\struct T[\bar t]}$ is determined by the $r$ states $\state{s_1,t_1},\dotsc,\state{s_r,t_r}$ for all $\bar s,\bar t\in(\trees\Sigma)^r$.

Observe that $h(\struct T\restrict u_i)\geq n$ for each $\run i1r$. Therefore, by Lemma~\ref{lemma:inflang_characterization} and Ramsey's theorem for infinite, undirected, finitely coloured graphs, there exists an infinite set $A_i\subseteq\trees\Sigma$ of trees $t\in\trees\Sigma$ with $\aut A(t)=\aut A(\struct T\restrict u_i)$ such that
\begin{equation*}
	c(s,t) = \bigl\{\state{s,s},\state{t,t},\state{s,t},\state{t,s}\bigr\}
\end{equation*}
is the same set $Q_i$ for all distinct $s,t\in A_i$. It turns out that $Q_i$ has exactly three elements and $\state{s,s}=\state{t,t}$ for all $s,t\in A_i$.

Now, put $A=A_1\times\dotsb\times A_r$. For each $\bar t\in A$ we have $\aut A\bigl(\struct T[\bar t]\bigr)=\aut A(\struct T)$ and hence $\struct T[\bar t]\in L$. We define a linear ordering $\struct A=\bigl(A;<^{\struct A}\bigr)$ by $\bar s <^{\struct A} \bar t$ iff $\struct T[\bar s] < \struct T[\bar t]$. By definition, $\struct A$ can be embedded into $\struct L$.

For $\run i1r$, $\bar a\in A$, and $t\in A_i$ we let $\bar a_{i/t}\in A$ be the tuple $\bar a$ with the $i$-th component replaced by $t$. Then, for all $\bar a,\bar b$ and $s,t\in A_i$ we obtain $\bigstate{\struct T[\bar a_{i/s}],\struct T[\bar a_{i/t}]}=\bigstate{\struct T[\bar b_{i/s}],\struct T[\bar b_{i/t}]}$ and hence $a_{i/s} <^{\struct A} a_{i/t}$ iff $b_{i/s} <^{\struct A} b_{i/t}$. Thus, defining a linear ordering $\struct A_i=\bigl(A_i;<^{\struct A_i}\bigr)$ by $s <^{\struct A_i} t$ iff $\bar a_{i/s} <^{\struct A} \bar a_{i/t}$ is independent from the specific choice of $\bar a\in A$. Clearly, $\cmp_{\struct A_i}(s,t)$ is determined by $\state{s,t}$ for all $s,t\in A_i$. Since $Q_i$ contains exactly three elements, $\state{s,t}$ is determined by $\cmp_{\struct A_i}(s,t)$ for all $s,t\in A_i$ as well. Hence, the linear orderings $\struct A$ and $\struct A_1,\dotsc,\struct A_r$ satisfy the condition of Lemma~\ref{lemma:ordering} below and consequently $\struct A_{\pi(1)}\dotsm\struct A_{\pi(r)}$ can be embedded into $\struct L$.\qed
\end{proof}

\begin{lemma}
\label{lemma:ordering}
Let $\struct A$ and $\struct A_1,\dotsc,\struct A_r$ be infinite linear orderings with $\dom(\struct A)=\dom(\struct A_1)\times\dotsb\times\dom(\struct A_r)$ and satisfying the following two conditions:
\killenumspace
\begin{enumerate}
\item[(1)] $\cmp_{\struct A}(\bar a,\bar b)$ is determined by $\cmp_{\struct A_1}(a_1,b_1),\dotsc,\cmp_{\struct A_r}(a_r,b_r)$ for all $\bar a,\bar b\in A$,
\item[(2)] if $\bar a,\bar b\in A$ differ only in the $i$-th component, then $\cmp_{\struct A}(\bar a,\bar b)=\cmp_{\struct A_i}(a_i,b_i)$.
\end{enumerate}
\killenumspace
Then, there exists a permutation $\pi$ of $\{1,\dotsc,r\}$ such that $\struct A$ is isomorphic to $\struct A_{\pi(1)}\cdot\struct A_{\pi(2)}\dotsm\struct A_{\pi(r)}$.
\end{lemma}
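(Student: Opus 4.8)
The plan is to extract the ``comparison function'' of $\struct A$ and show that it is forced to coincide with a lexicographic order for some ordering of the coordinates, from which the claimed isomorphism drops out immediately.

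First I would record the basic features of this function. By condition~(1) there is a map $F\colon\{-1,0,1\}^r\to\{-1,0,1\}$ with $\cmp_{\struct A}(\bar a,\bar b)=F\bigl(\cmp_{\struct A_1}(a_1,b_1),\dotsc,\cmp_{\struct A_r}(a_r,b_r)\bigr)$ for all $\bar a,\bar b\in\dom(\struct A)$; since every $\struct A_i$ is infinite, each tuple in $\{-1,0,1\}^r$ is realised on the right-hand side, so $F$ is totally defined. Because $<^{\struct A}$ is a strict linear order one reads off $F(0,\dotsc,0)=0$, $F(-\bar x)=-F(\bar x)$, and $F(\bar x)\neq 0$ whenever $\bar x\neq(0,\dotsc,0)$, while condition~(2) says exactly that $F$ restricted to the tuples supported on a single coordinate $i$ is the projection onto that coordinate. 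The goal is a linear order $i_1,\dotsc,i_r$ of $\{1,\dotsc,r\}$ such that $F(\bar x)=x_{i_k}$, where $k$ is least with $x_{i_k}\neq 0$. Once this is in hand, the bijection $\bar a\mapsto(a_{i_1},\dotsc,a_{i_r})$ is an isomorphism from $\struct A$ onto $\struct A_{i_1}\cdot\struct A_{i_2}\dotsm\struct A_{i_r}$, because $\cmp_{\struct A}$ then agrees with the lexicographic comparison in this order of coordinates; one takes $\pi(k)=i_k$.

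Next I would build a tournament on the coordinates and pin it down. For distinct $i,j$ declare $i\to j$ if $F$ takes value $1$ on the tuple having $1$ in coordinate $i$, $-1$ in coordinate $j$, and $0$ elsewhere; by $F(-\bar x)=-F(\bar x)$ and the fact that $F$ vanishes only at the origin, exactly one of $i\to j$, $j\to i$ holds, so $\to$ is a tournament. Two transitivity computations inside $\struct A$ determine it. Evaluating $<^{\struct A}$ along the chain from the tuple carrying $q_i,q_j$ down to the one carrying $p_i,p_j$ (two elements $p_i<q_i$ in coordinate $i$, two elements $p_j<q_j$ in coordinate $j$, fixed values elsewhere, changing one coordinate at a time) shows, via condition~(2) and transitivity, that the former exceeds the latter; hence $F$ takes value $1$ on the tuple with $1$ in both $i$ and $j$, so if $i\to j$ then $F$ restricted to the tuples supported on $\{i,j\}$ is lexicographic with coordinate $i$ leading. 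And if $i\to j\to k\to i$ held, then with two elements in each of $\struct A_i,\struct A_j,\struct A_k$ (the rest fixed) the three tuples that cycle through $(q_i,p_j,p_k)$, $(p_i,q_j,p_k)$, $(p_i,p_j,q_k)$ in coordinates $i,j,k$ satisfy $\cmp_{\struct A}(P,Q)=\cmp_{\struct A}(Q,R)=\cmp_{\struct A}(R,P)=1$, contradicting irreflexivity. So $\to$ is a tournament with no $3$-cycle, hence a transitive tournament, i.e.\ a linear order $i_1,\dotsc,i_r$ with $i_a\to i_b$ whenever $a<b$.

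It then remains to prove $F(\bar x)=x_j$ for every $\bar x\neq(0,\dotsc,0)$, where $j=i_k$ with $k$ least such that $x_{i_k}\neq 0$; I would induct on the number of nonzero entries of $\bar x$, the base case (at most one nonzero entry) being condition~(2). If $\bar x$ has at least two nonzero entries, fix a further nonzero coordinate $l^{\ast}\neq j$ (so $j\to l^{\ast}$ by the choice of $j$) and, replacing $\bar x$ by $-\bar x$ if necessary, assume $x_j=1$. Since $F$ does not depend on the chosen witnesses and $\struct A_j$ is infinite, I may pick $\bar a,\bar b\in\dom(\struct A)$ realising $\bar x$ whose $j$-th components satisfy $b_j<q<a_j$ for some third element $q\in\dom(\struct A_j)$. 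Let $Q\in\dom(\struct A)$ agree with $\bar a$ everywhere except that $Q_j=q$ and $Q_{l^{\ast}}=b_{l^{\ast}}$. Then $\cmp_{\struct A}(\bar a,Q)$ equals $F$ on a tuple supported on $\{j,l^{\ast}\}$ with $j$-entry $1$, hence equals $1$ by the pairwise analysis, so $\bar a>^{\struct A}Q$; and $\cmp_{\struct A}(Q,\bar b)$ equals $F$ on a tuple with $j$-entry $1$, with $l^{\ast}$-entry $0$, with strictly fewer nonzero entries than $\bar x$, and whose first nonzero coordinate in the order $i_1,\dotsc,i_r$ is still $j$, hence equals $1$ by the induction hypothesis, so $Q>^{\struct A}\bar b$. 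Transitivity gives $\bar a>^{\struct A}\bar b$, i.e.\ $F(\bar x)=1=x_j$, closing the induction and the proof.

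The step I expect to be the main obstacle is precisely this inductive step when $x_{l^{\ast}}$ has the sign opposite to $x_j$: in a product of two-element orderings the witnesses $\bar a$ and $\bar b$ can be order-adjacent, so no intermediate element can be found within the given data. This is exactly where infinitude of the $\struct A_i$ is indispensable---it allows one to re-realise $\bar x$ with a spare element $q$ strictly between $b_j$ and $a_j$ in coordinate $j$, whereupon the auxiliary tuple $Q$ furnishes the intermediate point, its two comparisons being supplied respectively by the pairwise-lexicographic fact coming from $j\to l^{\ast}$ and by the induction hypothesis. Everything else---the two transitivity computations and the verification that the coordinate bijection is an isomorphism onto the iterated product---is routine bookkeeping with $\cmp_{\struct A}$ and the definition of $\cdot$.
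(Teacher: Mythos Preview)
Your proposal is correct and complete; the inductive step with the auxiliary element $Q$ works exactly as you describe, and the three points where infinitude of the $\struct A_i$ enters (realising all sign vectors, finding $p<q$ in each factor, and inserting a third element between $b_j$ and $a_j$) are all identified correctly.

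The paper's proof follows the same overall outline---extract the permutation from how two-coordinate sign vectors are resolved, then show the order is lexicographic---but organises it differently. Instead of working with an abstract $F\colon\{-1,0,1\}^r\to\{-1,0,1\}$ and a tournament, the paper fixes concrete triples $x_i<_iy_i<_iz_i$, defines $\bar e^{(i)}=\bar x_{i/y_i}$, and obtains the permutation simply by ordering these $r$ elements inside $\struct A$; transitivity is then free (no separate ``no $3$-cycle'' argument is needed). For the main implication the paper argues directly that $\bar a<^{\mathrm{lex}}\bar b$ forces $\bar a<^{\struct A}\bar b$: it sets $g_i=\min(a_i,b_i)$, $h_i=\max(a_i,b_i)$, forms the tuples $\bar c^{(i)}=(g_1,\dotsc,g_i,h_{i+1},\dotsc,h_r)$ and $\bar d^{(i)}=(g_1,\dotsc,g_{i-1},h_i,g_{i+1},\dotsc,g_r)$, and proves $\bar c^{(i)}<\bar d^{(i)}$ by a downward induction on $i$ via a short chain of inequalities that uses the fixed third level $z$ once. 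Your induction on the number of nonzero entries, with a single intermediate point $Q$, is a cleaner and more modular variant of the same mechanism; the paper's version trades the explicit tournament analysis for a slightly more intricate chain of concrete tuples.
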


\noindent Finally, we are in a position to prove Theorem~\ref{thm:fat_TA_ord_are_not_WA}.

\begin{proof}[of Theorem~\ref{thm:fat_TA_ord_are_not_WA}]
Let $(\aut A;\aut A_<)$ be an automatic presentation of $\struct L$ and $n$ the number of states of $\aut A$. Since $\dom(\struct L)$ is fat, for any $r\geq 1$ there is a $t\in\dom(\struct L)$ with $\diameter(t)\geq r\cdot 2^n$. Let $\struct A_1,\dotsc,\struct A_r$ be the infinite linear orderings from Lemma~\ref{lemma:fat_TA_ord_have_high_rank}. For each $i\in[1,r]$ some ${\struct B_i\in\bigl\{ (\Nat;<),(\Nat;>) \bigr\}}$ can be embedded into $\struct A_i$. Then, $\struct B_1\cdot\struct B_2\dotsm\struct B_r\in\mathcal{N}_r$ can be embedded into $\struct A_1\cdot\struct A_2\dotsm\struct A_r$ and consequently into $\struct L$. Hence, $\struct L$ has infinite \textsl{FC}-rank and is, by Theorem~\ref{thm:WA_ord_FC_is_finite}, \emph{not} word automatic.\qed

\end{proof}

\section{Conclusions}

Altogether, we proved that is decidable whether a given tree automatic scattered linear ordering is already word automatic. Taking a closer look at the proof reveals that the problem is solvable nondeterministically in logarithmic space, provided the tree automaton recognising the domain is reduced.

The restriction to scattered linear orderings naturally rises the question whether this result holds true for general linear orderings. Unfortunately, this problem cannot be solved by means of our technique since the ordering $(\mathbb{Q};<)$ of the rationals admits a word automatic as well as a fat tree automatic presentation. As the Boolean algebra of finite and co-finite subsets of $\Nat$ shares this feature, the same pertains to an analogue of Theorem~\ref{thm:main} for Boolean algebras. In spite of that, we suggest trying to apply the technique to other classes of structures, such as groups, for which a necessary condition on its automatic members is known.

Finally, Theorem~\ref{thm:main} provides a decidable characterisation of all tree automatic ordinals $\alpha\geq\omega^\omega$. Finding such a characterisation for each $\omega^{\omega^k}$ with $k\in\Nat$ possibly turns out to be the main ingredient for showing that the isomorphism problem for tree automatic ordinals is decidable.

\nocite{Blu99}

\bibliography{watasloid}
\bibliographystyle{splncs_srt}

\clearpage

\begin{appendix}

\section{Proofs of Lemmas~\ref{lemma:inflang_characterization} and~\ref{lemma:fatlang_characterization}}

\noindent Recall that we fixed a reduced tree automaton $\aut A=(Q,\iota,\delta,F)$ and defined the graph $G_{\aut A}=(Q,E_{\aut A})$ by
\begin{equation*}
	\tag{\ref{eq:def_E_A}}
	(p,q)\in E_{\aut A} \quad\text{iff}\quad \exists a\in\Sigma,r\in Q\colon
			\delta(a,p,r)=q \text{ or } \delta(a,r,p)=q\,.
\end{equation*}
An edge $(p,q)\in E_{\aut A}$ was called \emph{special} if in Eq.~\eqref{eq:def_E_A} the state $r\in Q$ can be chosen such that it satisfies the conditions of Lemma~\ref{lemma:inflang_characterization}. Moreover, we denote the \emph{prefix relation} on $\{0,1\}^\star$ by $\prefixeq$, i.e., $u\prefixeq v$ if there is some $w\in\{0,1\}^\star$ such that $uw=v$.

To simplify notation, we put $t[u]=\aut A(t\restrict u)$ for each $t\in\trees\Sigma$ and $u\in\dom(t)$. In particular, $t[\epsilon]=\aut A(t)$. For all $u\in\dom(t)$ with $u0,u1\in\dom(t)$ we have $\delta\bigl(t(u),t[u0],t[u1]\bigr)=t[u]$ and hence $\bigl(t[ud],t[u]\bigr)\in E_{\aut A}$ for both $d=0$ and $d=1$. We denote these edges by $t[ud,u]$. Applying this argument repeatedly, for all $u,v\in\dom(t)$ with $u\prefixeq v$ we obtain a path from $t[v]$ to $t[u]$ of length\footnote{The length of a path is the number of its edges.} $|v|-|u|$ in $G_{\aut A}$, which we denote by $t[v,u]$.

Conversely, let $(p,q)\in E_{\aut A}$ be an edge in $G_{\aut A}$ and $t\in\trees\Sigma$ a tree with $\aut A(t)=p$. Then, there are $a\in\Sigma$ and $r\in Q$ such that, w.l.o.g., $\delta(a,p,r)=q$. Moreover, there is a tree $s\in\trees\Sigma$ with $\aut A(s)=r$. Then, the unique tree $t'\in\trees\Sigma$ with $t'(\epsilon)=a$, $t'\restrict 0=t$, and $t'\restrict 1=s$ satisfies $\aut A(t')=q$ and $t'\restrict 0=t$. Applying this argument repeatedly, we obtain for each path from $p$ to $q$ of length $m$ in $G_{\aut A}$ and any tree $t\in\trees\Sigma$ with $\aut A(t)=p$ another tree $t'\in\trees\Sigma$ and a position $u\in\dom(t')$ such that $\aut A(t')=q$, $|u|=m$, and $t'\restrict u=t$.

\setcounter{mysection}{3}
\setcounter{lemma*}{2}
\begin{lemma*}
For every $q\in Q$ the following are equivalent:
\begin{enumerate}[(1)]
\item there are infinitely many $t\in\trees\Sigma$ satisfying $\aut A(t)=q$,
\item there is a tree $t\in\trees\Sigma$ satisfying $h(t)\geq n$ and $\aut A(t)=q$, where $n=|Q|$,
\item $G_{\aut A}$ contains a cycle from which $q$ is reachable.
\end{enumerate}
\end{lemma*}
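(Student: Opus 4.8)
The plan is to prove the cycle of implications $(3)\Rightarrow(1)\Rightarrow(2)\Rightarrow(3)$, using the two translations between paths in $G_{\aut A}$ and trees processed by $\aut A$ that were set up in the preamble above.

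For $(3)\Rightarrow(1)$: suppose $G_{\aut A}$ contains a cycle $C$ and a path from some vertex $p$ on $C$ to $q$. Pick any tree $t_0$ with $\aut A(t_0)=p$; this exists since $\aut A$ is reduced. Traversing $C$ once from $p$ back to $p$ gives, by the ``conversely'' paragraph, a tree $t_1$ with $\aut A(t_1)=p$ and $h(t_1)>h(t_0)$ (the cycle has length $\geq 1$, so the new position $u$ has $|u|\geq 1$ and $t_1\restrict u=t_0$). Iterating, we obtain trees $t_0,t_1,t_2,\dots$ with $\aut A(t_k)=p$ and strictly increasing heights, hence infinitely many pairwise distinct trees mapping to $p$; finally, following the path from $p$ to $q$ turns each of them into a tree mapping to $q$, and since that construction only attaches material above an isomorphic copy of $t_k$, distinctness is preserved. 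So infinitely many trees map to $q$.

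For $(1)\Rightarrow(2)$: if only finitely many trees mapped to $q$, there would be a bound $N$ on their heights; but from a tree $t$ with $\aut A(t)=q$ and $h(t)<n$ one cannot immediately get a taller one, so instead argue contrapositively — assuming every tree $t$ with $\aut A(t)=q$ has $h(t)<n$, there are only finitely many such trees (a tree domain of height $<n$ is one of finitely many finite sets, and there are finitely many labellings), contradicting (1). Hence some tree $t$ with $\aut A(t)=q$ has $h(t)\geq n$.

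For $(2)\Rightarrow(3)$: take $t$ with $h(t)\geq n$ and $\aut A(t)=q$, and pick $v\in\dom(t)$ with $|v|=h(t)\geq n$. The path $t[v,\epsilon]$ visits the $h(t)+1\geq n+1$ states $t[v],t[v_{<h-1}],\dots,t[\epsilon]$ along the branch to $v$ (writing the prefixes of $v$); since $|Q|=n$, two of these states coincide, say $t[u]=t[u']$ for prefixes $u'\prefixeq u\prefixeq v$ with $u'\neq u$. Then $t[u,u']$ is a path from $t[u]$ to $t[u']=t[u]$, i.e.\ a cycle in $G_{\aut A}$, and $t[u',\epsilon]$ is a path from that cycle to $t[\epsilon]=q$. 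This proves (3).

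The main obstacle is purely bookkeeping: making precise that the ``conversely'' construction preserves distinctness of trees (so that infinitely many source trees yield infinitely many target trees) and correctly counting states along a branch to pull out the repeated state in $(2)\Rightarrow(3)$. The pumping argument itself is the standard one; no step should be genuinely difficult, and the height–versus–number-of-states counting in $(1)\Rightarrow(2)$ and $(2)\Rightarrow(3)$ is where one must be careful with off-by-one issues.
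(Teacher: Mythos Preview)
Your proof is correct and follows essentially the same approach as the paper: the paper also proves $(2)\Rightarrow(3)$ by pigeonhole on the branch path $t[u,\epsilon]$ and proves $(3)\Rightarrow(1)$ by pumping the cycle and extending along the path to $q$ via the ``conversely'' construction. The only cosmetic difference is that the paper phrases $(3)\Rightarrow(1)$ as ``for every $m$ there is a path of length $m$ ending in $q$, hence a tree of height $\geq m$ mapping to $q$'', which sidesteps the injectivity bookkeeping you flag at the end; your version with an explicit sequence $t_0,t_1,\dots$ and the observation that the extension to $q$ preserves distinctness is equally valid.
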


\begin{proof}
Trivially, (1)~implies~(2). It remains to show that (2)~implies~(3) and (3)~implies~(1).

\paragraph{To (2) implies (3).} Let $t\in\trees\Sigma$ be a tree with $h(t)\geq n$ and $\aut A(t)=q$. Consider some $u\in\dom(t)$ such that $|u|=n$. Then, $t[u,\epsilon]$ is a path of length $n$ ending in~$q$. Due to the pigeonhole principle, this path contains a cycle.

\paragraph{To (3) implies (1).} It suffices to show that for each $m\geq 0$ there is a tree $t\in\trees\Sigma$ with $\aut A(t)=q$ and $h(t)\geq m$. Thus, consider some $m\geq 0$. There is a path of length $m$ ending in $q$. Let $p\in Q$ be the first state of this path and $s\in\trees\Sigma$ a tree with $\aut A(s)=p$. Then, there are a tree $t\in\trees\Sigma$ and $u\in\dom(t)$ with $\aut A(t)=q$, $|u|=m$, and $t\restrict u=s$. In particular, $h(t)\geq m$.\qed
\end{proof}

\pagebreak

\begin{lemma*}
The following are equivalent:
\begin{enumerate}[(1)]
\item the tree language $L$ recognised by $\aut A$ is fat,
\item there is a tree $t\in L$ satisfying $\diameter(t)> 2^{n-1}$, where $n=|Q|$,
\item $G_{\aut A}$ contains a cycle including a special edge and from which $F$ is reachable.
\end{enumerate}
\end{lemma*}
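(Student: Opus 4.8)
\noindent The plan is to prove the cycle of implications $(1)\Rightarrow(2)\Rightarrow(3)\Rightarrow(1)$, mirroring the proof of Lemma~\ref{lemma:inflang_characterization}. The implication $(1)\Rightarrow(2)$ is immediate: $L$ being fat means $L\not\subseteq\trees[K]\Sigma$ for every $K\geq 1$, so instantiating $K=2^{n-1}$ produces a tree $t\in L$ with $\diameter(t)>2^{n-1}$. The implication $(3)\Rightarrow(1)$ is a pumping construction, and $(2)\Rightarrow(3)$ --- the technical heart --- combines a pigeonhole argument on the run of $\aut A$ with a counting argument responsible for the sharp constant $2^{n-1}$.

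For $(3)\Rightarrow(1)$ I would proceed as follows. Let $C$ be a cycle in $G_{\aut A}$ that contains a special edge $e=(p,q)$ and from which some $f\in F$ is reachable; let a pumpable state $r$ and a letter $a$ with $\delta(a,p,r)=q$ witness the specialness of $e$ (the case $\delta(a,r,p)=q$ is symmetric). Since $C$ is a cycle, $f$ is reachable from $p$, along a path $\pi$ say. For each $k\geq 1$ consider the walk in $G_{\aut A}$ that starts at $p$, runs around $C$ exactly $k$ times and then follows $\pi$; since $\aut A$ is reduced there is a tree evaluating to $p$, so by the correspondence between walks in $G_{\aut A}$ and partial runs established above this walk yields a tree $t_k$ with $\aut A(t_k)=f$, hence $t_k\in L$. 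Each of the $k$ traversals of $e$ in this walk creates a node one of whose subtrees may be chosen freely to evaluate to $r$; as $r$ is pumpable, Lemma~\ref{lemma:inflang_characterization} lets one pick these subtrees of any prescribed height. Choosing the heights so that all $k$ of them reach one and the same sufficiently deep level of $t_k$, that level carries at least $k$ nodes, so $\diameter(t_k)\geq k$. As $k$ is arbitrary, $L$ is fat.

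For $(2)\Rightarrow(3)$ I would start from a tree $t\in L$ with $\diameter(t)>2^{n-1}$ and fix a level $\ell$ carrying more than $2^{n-1}$ nodes. Writing $t[u]=\aut A(t\restrict u)$, recall that for $u\prefixeq v$ the states $t[v],\dots,t[u]$ read from $v$ up to $u$ form a path of length $|v|-|u|$ in $G_{\aut A}$, and that --- since $t\in L$ --- every $t[u]$ reaches $F$ in $G_{\aut A}$. Let $S$ be the finite subtree of all $u\in\dom(t)$ with a descendant on level $\ell$; its leaves are precisely the level-$\ell$ nodes, so $S$ has more than $2^{n-1}$ leaves. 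An elementary counting fact about binary trees --- if more than $2^{k-1}$ of the nodes are leaves, then some root-to-leaf path passes through at least $k$ nodes having two children --- then gives a path in $S$ from the root to a level-$\ell$ node through at least $n$ nodes $b_1\prefix\dots\prefix b_n$ at each of which both children still reach level $\ell$. Along this path some state repeats, and the idea is to secure a repetition $t[u']=t[u'']$ with $u'\prefix u''$ enclosing one of the $b_i$ whose ``off-path'' child roots a subtree reaching level $\ell$ and hence, being tall, evaluating to a pumpable state: the segment of the run from $u''$ up to $u'$ is then a cycle in $G_{\aut A}$ using the special edge coming from $b_i$, and reachability of $F$ is automatic.

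I expect this last argument to be the main obstacle, and in the crude form just described it only yields a bound like $\diameter(t)>2^{n}\Rightarrow(3)$: an off-path subtree reaching level $\ell$ need not be tall enough to certify a pumpable state, and pumpability may only show up through a cycle in $G_{\aut A}$ that is not realised inside $t$ itself. The clean route to the sharp $2^{n-1}$ is the contrapositive. Assuming $G_{\aut A}$ contains no cycle that both uses a special edge and reaches $F$, one shows by induction over the run of $\aut A$ on an arbitrary $t\in L$ --- controlling at each node $u$ a potential built from the set $\{t[v]:u\prefixeq v\}$ of states occurring below $u$ together with the remaining depth --- that every branching capable of widening a subtree strictly decreases this potential, because a pumpable state on its off-side would render the corresponding edge special and so (as $t[u]$ reaches $F$) forbid $t[u]$ from recurring below $u$; consequently no level of $t$ carries more than $2^{n-1}$ nodes, contradicting fatness. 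Turning this inductive bookkeeping into a clean argument, together with the counting lemma on binary trees, is where the real work lies.
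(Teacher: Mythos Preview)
Your $(3)\Rightarrow(1)$ is correct; the paper phrases the same pumping idea as an induction on the number of special edges traversed along a walk to $F$, but the content is identical.

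The real gap is in $(2)\Rightarrow(3)$, and neither sketch closes it. In the direct approach the problem is not merely the constant: even with $n{+}1$ branching nodes in $S$ the pigeonhole repetition $t[b_i]=t[b_j]$ may occur entirely among nodes close to level~$\ell$, where the off-path subtrees are too shallow to certify pumpability, so the resulting cycle in $G_{\aut A}$ need not contain any special edge. In the contrapositive sketch you write that ``a pumpable state on its off-side would render the corresponding edge special and so forbid $t[u]$ from recurring''; that implication is fine, but you never argue that the off-side \emph{is} pumpable at every width-increasing branching, and in general it is not --- only at branchings still far enough above level~$\ell$. The paper supplies exactly the missing organising device: it runs an induction on $m$ with $|Q_t|\le m$ (your potential) and at each stage considers only the longest common prefix $u$ of the fat level $U$. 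From $|U|>2^{m-1}$ one gets $\ell-|u|\ge m$, so the path $t[w,u]$ for any $w\in U$ below $u1$ has length $\ge m$ through at most $m$ states and hence contains a cycle in $G_{\aut A}$; this makes $t[u1]$ pumpable and the edge $(t[u0],t[u])$ special. Then a clean dichotomy: if $t[u]$ recurs at some $v$ with $u\prec v$, the cycle $t[v,u]$ contains that special edge and we are done; if not, $t[u]\notin Q_{t\restrict ud}$ for $d\in\{0,1\}$, one of $t\restrict u0,t\restrict u1$ still has thickness $>2^{m-2}$, and the induction hypothesis for $m{-}1$ applies. Your contrapositive becomes a proof once you restrict attention to this first branching node and add the dichotomy.
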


\begin{proof}
Trivially, (1)~implies~(2). It remains to show that (2)~implies~(3) and (3)~implies~(1).

\paragraph{To (2) implies (3).} Using induction on $m\geq 0$ we show the following: For every tree $t\in\trees\Sigma$ with $\diameter(t)>2^{m-1}$ and $|Q_t|\leq m$, where
\begin{equation*}
	Q_t = \bigl\{\; t[u] \bigm| u\in\dom(r) \;\bigr\}\,,
\end{equation*}
there are $u_1,u_2\in\dom(t)$ such that $u_1\prefix u_2$ and $t[u_2,u_1]$ is a cycle containing a special edge.

For $m=0$ there is nothing to show. Thus, let $m>0$. Consider an $\ell\geq0$ such that $|U|>2^{m-1}$ for $U=\dom(t)\cap\{0,1\}^\ell$. Let $u\in\dom(t)$ be the longest common prefix of all positions in $U$. Clearly, $\ell\geq |u|+m$. There are two cases:
\begin{enumerate}
\item There is a $v\in\dom(t)$ with $u\prefix v$ and $t[u]=t[v]$. W.l.o.g., we assume $u0\prefixeq v$. By the choice of $u$, there is some $w\in U$ with $u1\prefixeq w$. The path $t[w,u]$ has length $\ell-|u|\geq m$ and hence contains a cycle. Since $t[u1]$ lies on or after this cycle, i.e., $t[u1]$ satisfies condition~(3) of Lemma~\ref{lemma:inflang_characterization}. Thus, the edge $t[u0,u]\in E_{\aut A}$, which is contained in the cycle $t[v,u]$, is special.
\item There is no $v\in\dom(t)$ with $u\prefix v$ and $t[u]=t[v]$. In particular, $2\leq|Q_t|\leq m$. Since $\diameter(t\restrict u)\geq |U|>2^{m-1}$, we have $\diameter(t\restrict u0)>2^{m-2}$ or $\diameter(t\restrict u1)>2^{m-2}$. W.l.o.g., assume $\diameter(s)>2^{m-2}$ for $s=t\restrict u0$. We have $t[u]\not\in Q_s$ and hence $|Q_s|<|Q_t|\leq m$. By the induction hypothesis, there are $v,w\in\dom(s)$ such that $v\prefix w$ and $s[w,v]$ is a cycle containing a special edge. The claim follows from $u0v\prefix u0w$ and $t[u0w,u0v]=s[w,v]$.
\end{enumerate}

\noindent Finally, consider some $t\in L(\aut A)$ with $\diameter(t)>2^{n-1}$. Obviously, $|Q_t|\leq|Q|=n$. From the cycle $t[u_2,u_1]$ we can reach a state from $F$ along the path $t[u_1,\epsilon]$.

\paragraph{To (3) implies (1).} Using induction on $m\geq 0$ we show that if there is a path containing $m$ special edges and which ends in some $q\in Q$, then there is a tree $t\in\trees\Sigma$ with $\aut A(t)=q$ and $\diameter(t)>m$. Due to the cycle there are paths containing arbitrarily many special edges and which end in $F$. Thus, condition~(1) will follow.

For $m=0$ any tree $t\in\trees\Sigma$ with $\aut A(t)=q$ trivially satisfies $\diameter(t)>0$. Thus, consider $m>0$. Let $(p,r)\in E_{\aut A}$ be the last special edge in the path. By the induction hypothesis, there is a tree $s\in\trees\Sigma$ with $\aut A(s)=p$ and $\diameter(s)> m-1$. Let $\ell\geq 0$ be such that $|\dom(s)\cap\{0,1\}^\ell|=\diameter(s)$. Moreover, there are $a\in\Sigma$ and $p'\in Q$ such that, w.l.o.g., $\delta(a,p,p')=r$ and $p'$ satisfies the conditions of Lemma~\ref{lemma:inflang_characterization}. Thus, there is a tree $s'\in\trees\Sigma$ such that $h(s')\geq\ell$. Then, the unique tree $t'\in\trees\Sigma$ with $t'(\epsilon)=a$, $t'\restrict 0=s$, and $t'\restrict 1=s'$ satisfies $\aut A(t')=r$. Since there is a path from $r$ to $q$, say it has $n$ edges, there are $t\in\trees\Sigma$ and $u\in\dom(t)$ such that $\aut A(t)=q$, $|u|=n$, and $t\restrict u=t'$. From the construction of $t$ we obtain
\begin{multline*}
	\dom(t)\cap\{0,1\}^{n+1+\ell} \supseteq u\bigl(\dom(t')\cap\{0,1\}^{1+\ell}\bigr) \\
		= u0\bigl(\dom(s)\cap\{0,1\}^\ell\bigr)\cup u1\bigl(\dom(s')\cap\{0,1\}^\ell\bigr)\,.
\end{multline*}
Since the union on the right hand side is disjoint and the set $\dom(s')\cap\{0,1\}^\ell$ is not empty, we have
\begin{equation*}
	\bigl|\dom(t)\cap\{0,1\}^{n+1+\ell}\bigr| \geq
		\bigl|\dom(s)\cap\{0,1\}^\ell\bigr| + \bigl|(\dom(s')\cap\{0,1\}^\ell\bigr| > m\,,
\end{equation*}
i.e., $\diameter(t)>m$.\qed
\end{proof}

\clearpage

\section{Interpretations and Formulae from Section \ref{sec:slim_TA_are_WA}}

\subsection{The Interpretation $\langle f_C,\Int_C\rangle$}

\setcounter{mysection}{4}
\setcounter{lemma*}{2}

\begin{lemma*}
For all $t\in\trees[K]\Sigma$ there is an $\MSO{}$-interpretation $\langle f_C,\Int_C\rangle$ of $t$ in~$C(t)$ such that $\Int_C$ does not depend on $t$.
\end{lemma*}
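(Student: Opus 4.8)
The plan is to define $f_C\colon\dom(t)\to\dom(C(t))$ by $f_C(u_{\ell,r})=\ell\cdot K+r$, as already announced, and then to write down a single tuple $\Int_C=\bigl(\Delta;(\Phi_{P_a})_{a\in\Sigma},\Phi_{S_0},\Phi_{S_1}\bigr)$ of $\MSO{\signW{\widehat\Sigma}}$-formulae that realises the interpretation for every $t\in\trees[K]\Sigma$ simultaneously. Observe that in $C(t)$ the positions $\ell\cdot K+1,\dotsc,\ell\cdot K+s_\ell$ carry exactly the labels $\langle t(u_{\ell,r}),c_{\ell,r}\rangle$ of the nodes on level $\ell$, while every other position carries $\$$; hence the image of $f_C$ is precisely the set of non-$\$$ positions. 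Accordingly I would set $\Delta(x)=\neg P_\$(x)$ and, for each $a\in\Sigma$, $\Phi_{P_a}(x)=P_{\langle a,0\rangle}(x)\lor P_{\langle a,1\rangle}(x)$. Injectivity of $f_C$ is immediate from its definition, and $\Delta$, $\Phi_{P_a}$ obviously do not mention $t$. Thus the only real work is in the two successor formulae $\Phi_{S_0}$ and $\Phi_{S_1}$.

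For these I would exploit the combinatorial identity recorded in the short proof: if $u_{\ell,r}$ is an inner node, then $u_{\ell,r}d=u_{\ell+1,2s-1+d}$ for $d\in\{0,1\}$ and $s=c_{\ell,1}+\dotsb+c_{\ell,r}\ge1$. Pushing this through $f_C$, it says that from $p=f_C(u_{\ell,r})$ the two children occupy the $(2s-1)$-th and $2s$-th positions of the block following $p$'s block, where $s$ is the number of positions $p'\le p$ in $p$'s block whose label bit is $1$. To turn this into $\MSO{\signW{\widehat\Sigma}}$ I need three auxiliary $t$-independent building blocks: (i) the predicate ``$p\equiv1\pmod K$'' marking the start of a block --- available because, as the text notes, congruences modulo a fixed constant are $\MSO{}$-definable over words --- from which ``$x,y$ lie in the same block'' and ``$y$'s block directly follows $x$'s block'', written $\operatorname{nextblock}(x,y)$, are routine; (ii) for each fixed $j\in\{1,\dotsc,K\}$ a formula $\operatorname{pos}_j(y)$ saying ``$y$ is the $j$-th position of its block'', which is first-order using (i) and addition of the constant $j-1$; and (iii) for each fixed $s\le\lfloor K/2\rfloor$ a formula $\operatorname{cnt}_s(x)$ saying ``there are exactly $s$ label-bit-$1$ positions $p'\le x$ inside $x$'s block'', which is again first-order because this count never exceeds $K$.

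With these in hand I would put, for $d\in\{0,1\}$,
\begin{equation*}
\Phi_{S_d}(x,y)\;\equiv\;\operatorname{inner}(x)\,\land\,\operatorname{nextblock}(x,y)\,\land\!\!\bigvee_{1\le s\le\lfloor K/2\rfloor}\!\!\bigl(\operatorname{cnt}_s(x)\land\operatorname{pos}_{2s-1+d}(y)\bigr),
\end{equation*}
where $\operatorname{inner}(x)=\bigvee_{a\in\Sigma}P_{\langle a,1\rangle}(x)$. The conjunct $\operatorname{inner}(x)$ makes $\Phi_{S_d}$ false at $\$$-positions and at leaves, which is correct; for an inner node $x=u_{\ell,r}$ exactly one $s$ satisfies $\operatorname{cnt}_s(x)$, and then $2s\le2(c_{\ell,1}+\dotsb+c_{\ell,s_\ell})=s_{\ell+1}$ by Lemma~\ref{lemma:characterisation_image_C}(b), so the target position $2s-1+d$ of the next block really is the node $u_{\ell+1,2s-1+d}=u_{\ell,r}d$. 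Hence $\Phi_{S_d}^{C(t)}=f_C\bigl(S_d^t\bigr)$, and together with the earlier clauses this shows that $\langle f_C,\Int_C\rangle$ is an $\MSO{}$-interpretation of $t$ in $C(t)$ with $\Int_C$ depending only on the fixed constant $K$, not on $t$. I expect step (iii) to be the only subtle point: a naive way of locating the children would compare the size of an a priori unbounded set of inner nodes against a position, something $\MSO{}$ over words cannot do; the boundedness of $K$ is exactly what lets that comparison be replaced by a finite disjunction of first-order conditions.
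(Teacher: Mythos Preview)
Your proposal is correct and follows essentially the same approach as the paper: the paper also sets $f_C(u_{\ell,r})=\ell K+r$, uses $\Delta(x)=\neg P_\$(x)$ and $P_a^{\Int_C}(x)=P_{\langle a,0\rangle}(x)\lor P_{\langle a,1\rangle}(x)$, and builds $S_d^{\Int_C}$ from the very same combinatorial identity via an existentially-quantified block-start position, a counting formula $\exists^{=s}$ for the inner nodes up to $x$, and a finite disjunction over $1\le s\le K/2$. Your auxiliary predicates $\operatorname{nextblock}$, $\operatorname{pos}_j$, $\operatorname{cnt}_s$ are just a repackaging of the paper's $\phi(x,z)$ and $\psi_{d,s}(x,y,z)$; the only cosmetic difference is that the paper additionally conjoins $\Delta_C(y)$ in the successor clause, whereas you argue (correctly) that the target position is automatically non-$\$$.
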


\noindent The formulae $\Int_C=\bigl(\Delta_C;(S_d^{\Int_C})_{d\in\{0,1\}},(P_a^{\Int_C})_{a\in\Sigma}\bigr)$ are as follows:
\begin{align*}
	\Delta_C(x) &= \neg P_\$(x) \\
	S_d^{\Int_C}(x,y) &= \bigvee\nolimits_{a\in\Sigma} P_{(a,1)}(x)\land\Delta_C(y)\land\exists z\Bigl(\phi(x,z)\land
		\bigvee\nolimits_{1\leq s\leq\frac{K}{2}} 	\psi_{d,s}(x,y,z)\Bigr) \\
	P_a^{\Int_C}(x) &= \Delta_C(x)\land\bigl(P_{(a,0)}(x)\lor P_{(a,1)}(x)\bigr) \\
	\phi(x,z) &= z\equiv 1\,(\bmod\,K)\land z\leq x<z+K \\
	\psi_{d,s}(x,y,z) &= \exists^{=s} z'\Bigl(z\leq z'\leq x\land\bigvee\nolimits_{a\in\Sigma} P_{(a,1)}(z')\Bigr)\land
		y = z+K+2s-2+d
	\tag{$1\leq s\leq\frac{K}{2}$}
\end{align*}

\subsection{Proof of Lemma~\ref{lemma:characterisation_image_C}}
\label{subsec:characterisation_image_C}

\setcounter{mysection}{4}
\setcounter{lemma*}{4}

\begin{lemma*}
Let $\sigma\in\widehat\Sigma^\star$. There exists a tree $t\in\trees[K]\Sigma$ with $C(t)=\sigma$ iff $\sigma=\sigma_0\sigma_1\dotso\sigma_n$ for some $n\geq 0$ and $\sigma_0,\dotsc,\sigma_n\in\widehat\Sigma^K$ satisfying (a) and (b):
%\killenumspace
\begin{enumerate}[(a)]
\item $\sigma_\ell=\alpha_{\ell,1}\dotso\alpha_{\ell,s_\ell}\$^{K-s_\ell}$ for some $s_\ell\geq 1$ and $\alpha_{\ell,1},\dotsc,\alpha_{\ell,s_\ell}\in\Sigma\times\{0,1\}$ and for each $\run \ell0n$,
\item $s_0=1$, $s_{\ell+1}=2\cdot(c_{\ell,1}+\dotsb+c_{\ell,s_\ell})$ for $0\leq\ell<n$, and $c_{m,1}+\dotsb+c_{m,s_m}=0$, where $\alpha_{\ell,r}=\langle a_{\ell,r},c_{\ell,r}\rangle$.
\end{enumerate}
\end{lemma*}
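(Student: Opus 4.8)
The plan is to prove both directions of the equivalence separately, using the encoding map $C$ and the combinatorial relationship between adjacent levels of a tree in $\trees[K]\Sigma$.

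For the forward direction, suppose $t\in\trees[K]\Sigma$ with $h(t)=m$ and $\sigma=C(t)$. By construction, $\sigma=\sigma_0\sigma_1\dotso\sigma_m$ where each $\sigma_\ell\in\widehat\Sigma^K$ has the form $\langle t(u_{\ell,1}),c_{\ell,1}\rangle\dotso\langle t(u_{\ell,s_\ell}),c_{\ell,s_\ell}\rangle\$^{K-s_\ell}$ with $s_\ell=|\dom(t)\cap\{0,1\}^\ell|\geq 1$, so (a) holds with $n=m$ and $a_{\ell,r}=t(u_{\ell,r})$. For (b): level $0$ consists only of the root, so $s_0=1$. For $0\leq\ell<m$, I would observe that the nodes on level $\ell+1$ are exactly the children of the inner nodes on level $\ell$; since every inner node has exactly two children (tree domains are ``full''), and there are $c_{\ell,1}+\dotsb+c_{\ell,s_\ell}$ inner nodes on level $\ell$, we get $s_{\ell+1}=2(c_{\ell,1}+\dotsb+c_{\ell,s_\ell})$. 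Finally, every node on the last level $m$ is a leaf (otherwise $h(t)>m$), so $c_{m,1}+\dotsb+c_{m,s_m}=0$. This direction is essentially bookkeeping.

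For the converse, suppose $\sigma=\sigma_0\dotso\sigma_n$ satisfies (a) and (b). I would build a tree $t$ level by level. Define $t$ on level $\ell$ to have $s_\ell$ nodes; the crucial point established in the proof of Lemma~\ref{lemma:interpretation_C} is that for an inner node $u_{\ell,r}$ (i.e.\ $c_{\ell,r}=1$), its two children should be $u_{\ell+1,2s-1}$ and $u_{\ell+1,2s}$ where $s=c_{\ell,1}+\dotsb+c_{\ell,r}$ is the number of inner nodes up to and including position $r$ on level $\ell$. Condition (b) guarantees that as $r$ ranges over $\{1,\dotsc,s_\ell\}$ with $c_{\ell,r}=1$, the index $2s$ ranges exactly over $\{2,4,\dotsc,s_{\ell+1}\}$, so this assignment uses up precisely the $s_{\ell+1}$ nodes of level $\ell+1$, each exactly once, and always as a $\{0,1\}$-pair. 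Hence the resulting $\dom(t)$ is prefix-closed with the ``$u0\in D$ iff $u1\in D$'' property, i.e.\ a genuine tree domain; and the condition $c_{m,1}+\dotsb+c_{m,s_m}=0$ (with $m=n$) ensures no node on the last level demands children, so $h(t)=n$ is consistent. Label $t(u_{\ell,r})=a_{\ell,r}$. Then $\diameter(t)=\max_\ell s_\ell$, and one has to check $\max_\ell s_\ell\leq K$: this follows because each $\sigma_\ell\in\widehat\Sigma^K$ and $\sigma_\ell$ contains $s_\ell$ non-$\$$ symbols, so $s_\ell\leq K$. By construction $u_{\ell,r}$ is an inner node of $t$ iff $c_{\ell,r}=1$, so encoding $t$ recovers exactly the bits $c_{\ell,r}$ and labels $a_{\ell,r}$, giving $C(t)=\sigma$. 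Alternatively, and more in the spirit of the excerpt, one can simply set $t$ to be a tree with $t\cong\Int_C(\sigma)$ (well-defined since (a),(b) make $\Delta_C^\sigma$, the relativised successor relations, etc.\ describe a legitimate tree structure) and verify $\diameter(t)\leq K$ and $C(t)=\sigma$ directly.

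The main obstacle is the converse direction: one must verify that the level-by-level construction actually yields a valid \emph{tree domain}, in particular that the child-assignment dictated by (b) is well-defined (each level-$(\ell+1)$ node is the child of exactly one level-$\ell$ node) and that it respects the ``children come in $0/1$ pairs'' requirement. The counting identity $s_{\ell+1}=2(c_{\ell,1}+\dotsb+c_{\ell,s_\ell})$ is exactly what makes this work, so the heart of the argument is translating that arithmetic condition into the structural claim; everything else — that $\diameter(t)\leq K$, that the labels match, that $C(t)=\sigma$ — is routine. For brevity I would, as the authors do, relegate the detailed verification to the appendix and state only that $\Int_C(\sigma)$ defines the desired tree.
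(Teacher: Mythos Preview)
Your proposal is correct and follows essentially the same approach as the paper. The forward direction is identical, and for the converse both you and the paper hinge on the same combinatorial observation---that the children of the $r$-th node on level $\ell$ sit at positions $2s-1$ and $2s$ on level $\ell+1$ with $s=c_{\ell,1}+\dotsb+c_{\ell,r}$; the paper packages this as an explicit map $f\colon T\to\{0,1\}^\star$ from positions in $\Int_C(\sigma)$ to tree nodes, while your primary write-up builds the tree domain directly level by level, but these are the same construction viewed from opposite ends.
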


\begin{proof}
To see that $C(t)$ has the required shape, choose $n=m$, $\sigma_\ell$ and $s_\ell$ as in the construction of $C(t)$, and $\alpha_{\ell,r}=\langle t(u_{\ell,r}),c_{\ell,r}\rangle$. Then, condition (a) is trivially met, whereas (b) is satisfied since each tree has exactly one node on the zeroth level, on each other level twice as many nodes as inner nodes on the previous level, and no inner nodes on the last level.

Conversely, consider some $\sigma=\alpha_1\dotso\alpha_{(n+1)\cdot K}\in\widehat\Sigma^\star$ of the required shape. Let $\struct T=\Int_C(\sigma)$ and $T=\dom(\struct T)$. First, we observe that
\begin{equation*}
	T = \Set{ \ell\cdot K+r | \run\ell0n,\run r1{s_\ell} }
\end{equation*}
and put $(a_p,c_p)=\alpha_p$ for all $p\in T$. For $p=\ell\cdot K+r,q\in T$ and $d\in\{0,1\}$ with $\run\ell0n$ and $\run r1{s_\ell}$ we have $(p,q)\in S_d^{\struct T}$ iff $c_p=1$ and $q=(\ell+1)\cdot K+2s-1+d$, where $s=c_{\ell \cdot K+1}+\dotsb+c_p$. Notice that this reflects the introductory observation in the proof of Lemma~\ref{lemma:interpretation_C}.

Second, we construct a map $f\colon T\to\{0,1\}^\star$. We define the value $f(q)$ by induction on $q\in T$. The resulting map will satisfy $|f(q)|=\ell$ for all $q\in T$ with $\ell\cdot K< q\leq (\ell+1)\cdot K$. The first condition of (b) yields $q=1$ or $q>K$. We put $f(1)=\epsilon$. For $q>K$ there are unique $\ell,r\in\Nat$ with $0\leq\ell<n$ and $1\leq s\leq\frac{s_{\ell+1}}{2}$, and $d\in\{0,1\}$ such that $q=(\ell+1)\cdot K+2s-1+d$. From the second condition of (b) we conclude $s\leq c_{\ell\cdot K+1}+\dotsc+c_{\ell\cdot K+s_\ell}$. Thus, there is a least $\run p{\ell\cdot K+1}{\ell\cdot K+s_\ell}$ such that $s=c_{\ell\cdot K+1}+\dotsb+c_p$ and the minimality implies $c_p=1$. Since $p<q$, we are allowed to put $f(q)=f(p)d$.

A simple but tedious inspection of this construction shows for all $p,q\in T$ that $p<q$ iff $f(p)<_{\operatorname{llex}} f(q)$, where $<_{\operatorname{llex}}$ is the length-lexicographic order on~$\{0,1\}^\star$. In particular, $f$ is one-to-one. Obviously, the set $D=f(T)$ is non-empty and finite. Due to the construction of $f$ it is also prefix-closed. For $u0\in D$ we have $f^{-1}(u0)+1\in T$ and $u1=f\bigl(f^{-1}(u0)+1\bigr)\in D$. Similarly, $u1\in D$ implies $u0\in D$. Thus, $D$ is a tree domain. We define a tree $t\in\trees\Sigma$ by $\dom(t)=D$ and $t(u)=a_{f^{-1}(u)}$. It turns out that $f$ induces an isomorphism $f\colon\struct T\to t$ between $\signT\Sigma$-structures.

From the earlier remark on $|f(q)|$ we conclude $\diameter(t)\leq K$. Moreover, $h(t)=n$ and $f(\ell\cdot K+1),\dotsc,f(\ell\cdot K+s_\ell)$ is the lexicographic enumeration of $\dom(t)\cap\{0,1\}^\ell$ for all $0\leq\ell\leq n$. For $p\in T$ with $p\leq n\cdot K$ the specific choice of $f$ yields $c_p=1$ iff $f(p)\{0,1\}\subseteq\dom(t)$. For $p>n\cdot K$ this also holds true since the third condition of (b) implies $c_p=0$ and $|f(p)d|>n$ implies $f(p)d\not\in\dom(t)$ for $d\in\{0,1\}$. Altogether, we obtain $C(t)=\sigma$.\qed
\end{proof}

\subsection{The Sentence $\Gamma_C$}

The sentence $\Gamma_C$ is defined as follows:
\begin{align*}
	\Gamma_C &= \Phi_1\land\Phi_2\land\Phi_3\land\Phi_4\land\Phi_5 \\
	\Phi_1 &= \exists x\bigl(\forall y(y\leq x)\land x\equiv K\,(\bmod\, K)\bigr) \\
	\Phi_2 &= \forall x\bigl(x\equiv 1\,(\bmod\,K)\to\exists y\,\phi(x,y)\bigr) \\
	\Phi_3 &= \exists x\bigl(x=1\land\phi(x,x)\bigr) \\
	\Phi_4 &= \forall x_1\forall x_2\Bigl(x_1\equiv 1\,(\bmod\,K)\land x_2=x_1+K\to
		\bigvee\nolimits_{1\leq s\leq\frac{K}{2}} \psi_s(x_1,x_2)\Bigr) \\
	\Phi_5 &= \exists x\Bigl(x\equiv 1\,(\bmod\, K) \land \neg\exists z(z=x+K) \land \exists y\bigl(\phi(x,y)\land\chi_0(x,y)\bigr)\Bigr) \\
	\phi(x,y) &= x\leq y<x+K \land \forall z\bigl(x\leq z<x+K\to(P_\$(z)\leftrightarrow y<z)\bigr) \\
	\psi_s(x) &= \exists y_1\exists y_2\bigl(\phi(x_1,y_1)\land \phi(x_2,y_2)\land \chi_s(x_1,y_1)\land y_2=x_2+2s-1\bigr)
	\tag{$1\leq s\leq\frac{K}{2}$} \\
	\chi_s(x,y) &= \exists^{=s} z\Bigl(x\leq z\leq y \land \bigvee\nolimits_{a\in\Sigma} P_{(a,1)}(z)\Bigr)
	\tag{$0\leq s\leq\frac{K}{2}$}
\end{align*}
The purpose of the formulae $\Phi_i$ is as follows:
\begin{description}
\item[$\Phi_1$ --] the length $|\sigma|$ of $\sigma$ is a positive multiple of $K$, say $|\sigma|=(m+1)\cdot K$
\item[$\Phi_2$ --] $\sigma$ can be written as $\sigma=\sigma_0\dotso\sigma_m$ with $\sigma_0,\dotsc,\sigma_m\in\widehat\Sigma^K$ such that (a) is satisfied, therein $\phi(\ell\cdot K+1,p)$ holds precisely for $p=\ell\cdot K+s_\ell$
\item[$\Phi_3$ --] the first condition of (b) is satisfied
\item[$\Phi_4$ --] the second condition of (b) is satisfied
\item[$\Phi_5$ --] the third condition of (b) is satisfied
\end{description}

\subsection{The Formula $E$ and the Interpretation $\langle f_\amalg,\Int_\amalg\rangle$}

The formula $E(x_1,x_2)$ is defined as follows:
\newcommand{\Cl}{\operatorname{Cl}}
\begin{align*}
	E(x_1,x_2) &= L(x_1,x_2)\land\exists X_1\exists X_2\bigl(\phi_{\Pre}(x_1,X_1)\land\phi_{\Pre}(x_2,X_2)\land \psi(X_1,X_2)\bigr)\\
	\phi_{\Pre}(x,X) &= \phi_{\Cl}(x,X)\land\forall Y\bigl(\phi_{\Cl}(x,Y)\to \forall y(y\in X\to y\in Y)\bigr) \\
	\phi_{\Cl}(x,X) &= x\in X\land\forall y\forall z\bigl(S(y,z)\land z\in X\to y\in X\bigr) \\
	S(y,z) &= S_0(y,z)\lor S_1(y,z) \\
	\psi(X_1,X_2) &= \forall z_1\forall z_2\bigl(z_1\in X_1\land z_2\in X_2\land L(z_1,z_2)\land \bigl(\exists y\, S(y,z_1)\bigr)\to\chi(z_1,z_2)\bigr) \\
	\chi(z_1,z_2) &= \exists y_1\exists y_2\bigl(\bigl(S_0(y_1,z_1)\land S_0(y_2,z_2)\bigr)\lor\bigl(S_1(y_1,z_1)\land S_1(y_2,z_2)\bigr)\bigr)
\end{align*}
The ideas behind these formulae are the following:
\begin{description}
\item[$\phi_{\Cl}(x,X)$ --] the set $X$ contains $x$ and is closed under taking predecessors
\item[$\phi_{\Pre}(x,X)$ --] the set $X$ is the smallest one (w.r.t. inclusion) having the property $\phi_{\Cl}(x,X)$ and hence $\amalg\bar t\models\phi_{\Pre}\bigl((i,u),U\bigr)$ iff $U=\Pre(i,u)$
\item[$\psi(X_1,X_2)$ --] describes the condition on $\Pre(i,u)$ and $\Pre(j,v)$ in the characterisation of $|u|=|v|$ in terms of these two sets
%\item[$\chi(z_1,z_2)$ --] say that $z_1$ refers to $(i,u')$ and $z_2$ to $(j,v')$, then $\Psi$ describes that $u'\not=\epsilon$, $v'\not=\epsilon$, and the last letters of $u'$ and $v'$ are the same
\end{description}

\begin{lemma}
For all $\bar t\in(\trees\Sigma)^n$ there is an $\MSO{}$-interpretation $\langle f_\amalg,\Int_\amalg\rangle$ of~$\otimes\bar t$ in~$\amalg\bar t$ such that $f_\amalg(u)=(i,u)$, where $i$ is minimal with $u\in\dom(t_i)$, and $\Int_\amalg$ does not depend on $\bar t$.
\end{lemma}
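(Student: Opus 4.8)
The plan is to bootstrap the entire interpretation from the formula $E$ constructed earlier, which already settles the one genuinely hard point: deciding, inside the disjoint union $\amalg\bar t$ where the trees share no nodes at all, whether two elements $(i,u)$ and $(j,v)$ carry the \emph{same} word, i.e. $\amalg\bar t\models E\bigl((i,u),(j,v)\bigr)$ iff $u=v$. The map $f_\amalg$ is prescribed: it sends each $u\in\dom(\otimes\bar t)=\dom(t_1)\cup\dotsb\cup\dom(t_n)$ to its ``first copy'' $(i,u)$, where $i$ is least with $u\in\dom(t_i)$; it is one-to-one because the second component recovers $u$. What remains is to exhibit $\Int_\amalg$, referring only to $\Sigma$ and $n$, so that $f_\amalg$ becomes an interpretation.

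For the domain I would exploit that $(i,u)$ is a first copy exactly when $u\notin\dom(t_j)$ for all $j<i$, which via $E$ reads
\[
	\Delta_\amalg(x) \;=\; \bigvee\nolimits_{i=1}^{n}\Bigl(Q_i(x)\land\bigwedge\nolimits_{1\leq j<i}\neg\exists y\bigl(Q_j(y)\land E(x,y)\bigr)\Bigr)\,.
\]
A short check then gives $\Delta_\amalg^{\amalg\bar t}=f_\amalg\bigl(\dom(\otimes\bar t)\bigr)$: every $f_\amalg(u)$ satisfies $\Delta_\amalg$ by construction, and conversely any $(i,u)\models\Delta_\amalg$ has $u\in\dom(t_i)\subseteq\dom(\otimes\bar t)$ with $i$ least, hence equals $f_\amalg(u)$.

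For the relations the subtlety is that one edge or label of $\otimes\bar t$ may blend data from several of the $t_i$'s, whereas $f_\amalg$ keeps only the first copy; I would therefore use $E$ to ``hop'' from the first copy over to whatever copy lives in the tree carrying the relevant information. Concretely, for $d\in\{0,1\}$ and each $\bar a=(a_1,\dotsc,a_n)\in\Sigma_\Box^n$, I would put
\begin{align*}
	S_d^{\Int_\amalg}(x,y) &= \Delta_\amalg(x)\land\Delta_\amalg(y)\land\exists x'\exists y'\bigl(E(x,x')\land E(y,y')\land S_d(x',y')\bigr)\,,\\
	P_{\bar a}^{\Int_\amalg}(x) &= \Delta_\amalg(x)\land\bigwedge\nolimits_{i\colon a_i=\Box}\neg\exists y\bigl(Q_i(y)\land E(x,y)\bigr)\land\bigwedge\nolimits_{i\colon a_i\neq\Box}\exists y\bigl(Q_i(y)\land E(x,y)\land P_{a_i}(y)\bigr)\,.
\end{align*}
Since $S_d$ in $\amalg\bar t$ links only nodes of one $t_i$, verifying $S_d^{\Int_\amalg}$ amounts to showing that it holds of two first copies $(i_0,u)$ and $(i_1,v)$ exactly when $v=ud$ — in the forward direction one uses that $ud\in\dom(t_{i_1})$ already forces $u\in\dom(t_{i_1})$ — so that $(S_d^{\Int_\amalg})^{\amalg\bar t}=f_\amalg\bigl(S_d^{\otimes\bar t}\bigr)$; likewise $P_{\bar a}^{\Int_\amalg}$ holds of the first copy of $u$ exactly when $t_i(u)=a_i$ for every $i$ with $u\in\dom(t_i)$ and $a_i=\Box$ for every other $i$, i.e. exactly when $(\otimes\bar t)(u)=\bar a$, so that $(P_{\bar a}^{\Int_\amalg})^{\amalg\bar t}=f_\amalg\bigl(P_{\bar a}^{\otimes\bar t}\bigr)$. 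As all these formulae refer only to $\Sigma$ and $n$, $\Int_\amalg$ does not depend on $\bar t$, and the general fact about interpretations then yields that $f_\amalg$ induces the isomorphism $\otimes\bar t\cong\Int_\amalg(\amalg\bar t)$.

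The conceptually hard step — building $E$ and proving its ``iff'' semantics — is already behind us, so what is left is bookkeeping; the one place that genuinely needs care is the $\Delta_\amalg$-relativisation inside $S_d^{\Int_\amalg}$ and $P_{\bar a}^{\Int_\amalg}$. Without forcing \emph{both} arguments (resp. the single argument) to be first copies, a non-first copy $(i,u)$ with some $(j,u)\in\dom(t_j)$, $j<i$, would spuriously satisfy the existential body through an $E$-hop and so enlarge the interpreted relation beyond $f_\amalg$'s image; pinning these relativisations down exactly, and invoking the stated equivalence for $E$ in both directions (soundness for the positive conjuncts, completeness for the negated ones), is where the remaining delicacy lies.
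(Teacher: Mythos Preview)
Your proposal is correct and matches the paper's approach essentially verbatim: the domain and label formulae are the paper's with its auxiliary $\phi_{i,\alpha}$ unfolded (your conjunction in $P_{\bar a}^{\Int_\amalg}$ is what the paper clearly intends, despite a printed $\bigvee$). The only cosmetic difference is in $S_d^{\Int_\amalg}$, where the paper uses a single hop $\exists z\bigl(S_d(z,y)\land E(x,z)\bigr)$ instead of your two, taking $y'=y$ directly; both variants work via the same prefix-closure observation you identify.
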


\noindent The formulae $\Int_\amalg=\Bigl(\Delta_{\Int_\amalg};\bigl(S_d^{\Int_\amalg}\bigr)_{d\in\{0,1\}},\bigl({P_{\overline\alpha}}^{\Int_\amalg})_{\overline\alpha\in\widehat\Sigma_\Box^n}\Bigr)$ are as follows:
\begin{align*}
	\Delta_{\Int_\amalg}(x) &= \bigvee\nolimits_{1\leq i\leq n} \Bigl(Q_i(x)\land\bigwedge\nolimits_{1\leq j<i}\phi_{j,\Box}(x)\Bigr) \\
	S_d^{\Int_\amalg}(x,y) &= \Delta_{\Int_\amalg}(x)\land\Delta_{\Int_\amalg}(y)\land\exists z\bigl(S_d(z,y)\land E(x,z)\bigr) \\
	P_{(\alpha_1,\dotsc,\alpha_n)}^{\Int_\amalg}(x) &= \Delta_{\Int_\amalg}(x)\land\bigvee\nolimits_{1\leq i\leq n} \phi_{i,\alpha_i}(x) \\
	\phi_{i,a}(x) &= \exists y\bigl(Q_i(y)\land P_a(y)\land E(x,y)\bigr)
		\tag{$\run i1n,a\in\Sigma$} \\
	\phi_{i,\Box}(x) &= \neg\exists y\bigl(Q_i(y)\land E(x,y)\bigr)
		\tag{$\run i1n$}
\end{align*}

\subsection{The Interpretations $\langle f_{\otimes,i},\Int_{\otimes,i}\rangle $}

\begin{lemma}
\label{lemma:interpretation_otimes}
For each $\run i1n$ and all $\bar w\in(\widehat\Sigma^\star)^n$ there is an $\MSO{}$-interpre\-tation $\langle f_{\otimes,i},\Int_{\otimes,i}\rangle$ of $w_i$ in $H(\otimes\bar w)$ such that $f_{\otimes,i}(p)=(p-1)\cdot n+i$ and $\Int_{\otimes,i}$ does not depend on $\bar w$.
\end{lemma}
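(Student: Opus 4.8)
The final statement to prove is Lemma~\ref{lemma:interpretation_otimes}: for each $i$ and all $\bar w\in(\widehat\Sigma^\star)^n$, there is an $\MSO{}$-interpretation $\langle f_{\otimes,i},\Int_{\otimes,i}\rangle$ of $w_i$ in $H(\otimes\bar w)$ with $f_{\otimes,i}(p)=(p-1)\cdot n+i$, and $\Int_{\otimes,i}$ independent of $\bar w$.

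The plan is as follows. First I recall what the structure $H(\otimes\bar w)$ looks like: if $w_1,\dots,w_n$ have maximal length $m$, then $\otimes\bar w$ is a word of length $m$ over $\widehat\Sigma_\Box^n$, whose $p$-th letter is the tuple $(\beta_{1,p},\dots,\beta_{n,p})$ where $\beta_{j,p}$ is the $p$-th symbol of $w_j$ (or $\Box$ if $p>|w_j|$); applying $H$ replaces each letter $\bar\beta_p$ by the block $\bar\beta_p^{\,n}$, so $H(\otimes\bar w)$ has length $m\cdot n$ and position $(p-1)\cdot n+i$ carries the letter $\bar\beta_p=(\beta_{1,p},\dots,\beta_{n,p})$. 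Thus the map $f_{\otimes,i}\colon\dom(w_i)\to\dom(H(\otimes\bar w))$, $p\mapsto(p-1)\cdot n+i$, is well-defined and injective, since $p\le|w_i|\le m$ guarantees $(p-1)n+i\le mn$; its image consists exactly of the positions $q$ with $q\equiv i\pmod n$ at which the $i$-th coordinate of the letter is not $\Box$.

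Next I exhibit the formulae $\Int_{\otimes,i}=\bigl(\Delta_{\otimes,i};\leq^{\Int_{\otimes,i}},(P_a^{\Int_{\otimes,i}})_{a\in\widehat\Sigma}\bigr)$ over the signature $\signW{\widehat\Sigma_\Box^n}$. Using that $q\equiv i\pmod n$ is $\MSO$-expressible (for the fixed numbers $n,i$, as noted in the excerpt), I set $\Delta_{\otimes,i}(x)$ to say ``$x\equiv i\pmod n$ and $\bigvee_{a\in\widehat\Sigma}P_{\bar b}(x)$'' where the disjunction ranges over all letters $\bar b\in\widehat\Sigma_\Box^n$ whose $i$-th component lies in $\widehat\Sigma$ (i.e.\ is not $\Box$). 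The order is inherited: $\leq^{\Int_{\otimes,i}}(x,y)=\Delta_{\otimes,i}(x)\land\Delta_{\otimes,i}(y)\land x\le y$, which is correct because $f_{\otimes,i}$ is monotone. For each $a\in\widehat\Sigma$, $P_a^{\Int_{\otimes,i}}(x)=\Delta_{\otimes,i}(x)\land\bigvee_{\bar b}P_{\bar b}(x)$ where the disjunction ranges over the finitely many $\bar b\in\widehat\Sigma_\Box^n$ whose $i$-th component equals $a$. Then one checks directly that $f_{\otimes,i}\bigl(\dom(w_i)\bigr)=\Delta_{\otimes,i}^{H(\otimes\bar w)}$, that $f_{\otimes,i}$ preserves and reflects $\le$, and that $f_{\otimes,i}\bigl(P_a^{w_i}\bigr)=(P_a^{\Int_{\otimes,i}})^{H(\otimes\bar w)}$, which amounts to the definition of convolution and of $H$; none of these formulae mentions $\bar w$.

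I expect the only mild subtlety to be the endpoint bookkeeping: one must make sure that positions $q\equiv i\pmod n$ beyond $|w_i|\cdot n$ are excluded, which is exactly what the ``$i$-th component $\ne\Box$'' clause in $\Delta_{\otimes,i}$ does, and conversely that within range the $i$-th component is never $\Box$ (true since $w_i$ is a genuine word with no padding inside the convolution at positions up to $|w_i|$). Everything else is a routine verification that a block-duplication does not disturb the induced substructure, so there is no real obstacle; the proof is short once the formulae above are written down.
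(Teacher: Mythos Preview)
Your proposal is correct and essentially identical to the paper's own construction: the paper defines $\Delta_{\Int_{\otimes,i}}(x)$ as $x\equiv i\,(\bmod\, n)\land\neg\psi_{i,\Box}(x)$ where $\psi_{i,\alpha}(x)=\bigvee_{\bar\alpha\in\widehat\Sigma_\Box^n,\,\alpha_i=\alpha} P_{\bar\alpha}(x)$, inherits $\leq$, and sets $P_\alpha^{\Int_{\otimes,i}}(x)=\Delta_{\Int_{\otimes,i}}(x)\land\psi_{i,\alpha}(x)$ --- which is precisely your disjunction over letters whose $i$-th component equals $a$ (resp.\ is not $\Box$). Your discussion of the endpoint bookkeeping also matches exactly the role of the $\neg\psi_{i,\Box}$ clause.
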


\noindent The formulae $\Int_{\otimes,i}=\Bigl(\Delta_{\Int_{\otimes,i}};\leq^{\Int_{\otimes,i}},\bigl(P_\alpha^{\Int_{\otimes,i}}\bigr)_{\alpha\in\widehat\Sigma}\Bigr)$ are as follows:
\begin{align*}
	\Delta_{\Int_{\otimes,i}}(x) &= x\equiv i\,(\bmod\, n) \land \neg\psi_{i,\Box}(x) \\
	\leq^{\Int_{\otimes,i}}(x,y) &= \Delta_{\Int_{\otimes,i}}(x)\land\Delta_{\Int_{\otimes,i}}(y)\land x\leq y \\
	P_\alpha^{\Int_{\otimes,i}}(x) &= \Delta_{\Int_{\otimes,i}}(x) \land \psi_{i,\alpha}(x) \\
	\psi_{i,\alpha}(x) &= \bigvee\nolimits_{\overline\alpha=(\alpha_1,\dotsc,\alpha_n)\in\widehat\Sigma_\Box^n,\alpha_i=\alpha} P_{\overline\alpha}(x)\tag{$\alpha\in\widehat\Sigma_\Box$}
\end{align*}

\subsection{The Interpretation $\langle f_H,\Int_H\rangle$}

\begin{lemma}
\label{lemma:interpretation_H}
For all $\bar t\in(\trees[K]\Sigma)^n$ there is an $\MSO{}$-interpretation $\langle f_H,\Int_H\rangle$ of~$\amalg\bar t$ in $H\bigl(\otimes C(\bar t)\bigr)$ such that $\Int_H$ does not depend on $\bar t$.
\end{lemma}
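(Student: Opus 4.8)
The plan is to assemble $\langle f_H,\Int_H\rangle$ entirely from interpretations that are already at hand. For each $\run i1n$ I would first \emph{compose} the interpretation $\langle f_{C,i},\Int_C\rangle$ of $t_i$ in $C(t_i)$ from Lemma~\ref{lemma:interpretation_C} with the interpretation $\langle f_{\otimes,i},\Int_{\otimes,i}\rangle$ of $C(t_i)$ in $H\bigl(\otimes C(\bar t)\bigr)$ from Lemma~\ref{lemma:interpretation_otimes}: applying the substitution $\phi\mapsto\phi^\Int$ from the preliminaries to the formulae of $\Int_C$ with $\Int=\Int_{\otimes,i}$ yields an interpretation $\langle g_i,\mathcal{J}_i\rangle$ of $t_i$ in $H\bigl(\otimes C(\bar t)\bigr)$ with $g_i(u)=f_{\otimes,i}\bigl(f_{C,i}(u)\bigr)=f_H(i,u)$, and $\mathcal{J}_i$ does not depend on $\bar t$. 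That this composed pair really is an interpretation is immediate from the defining property of $\phi^\Int$ together with the injectivity of $f_{C,i}$ and $f_{\otimes,i}$. Since $g_i$ maps $\dom(t_i)$ into the positions congruent to $i$ modulo $n$, the images $g_1\bigl(\dom(t_1)\bigr),\dotsc,g_n\bigl(\dom(t_n)\bigr)$ are pairwise disjoint and their union is exactly $f_H\bigl(\dom(\amalg\bar t)\bigr)$.

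Next I would observe that $\amalg\bar t$ is nothing but the disjoint union of the $\signT\Sigma$-structures $t_1,\dotsc,t_n$, enriched by the markers $Q_i$ and the ``same level'' relation $L$. The disjoint-union part is routine: for $\run i1n$, $d\in\{0,1\}$ and $a\in\Sigma$ set
\begin{align*}
	\Delta_{\Int_H}(x) &= \bigvee\nolimits_{\run i1n}\Delta_{\mathcal{J}_i}(x), \\
	Q_i^{\Int_H}(x) &= \Delta_{\mathcal{J}_i}(x), \\
	S_d^{\Int_H}(x,y) &= \bigvee\nolimits_{\run i1n} S_d^{\mathcal{J}_i}(x,y), \\
	P_a^{\Int_H}(x) &= \bigvee\nolimits_{\run i1n} P_a^{\mathcal{J}_i}(x).
\end{align*}
Because the pieces are disjoint, in each disjunction at most one disjunct can ever hold, so these formulae define precisely the disjoint union together with the $Q_i$'s; in particular no spurious ``cross-component successor'' arises, since each $\mathcal{J}_i$ only relates positions inside the $i$-th piece.

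The one genuinely new ingredient is $L$, and this is the step I expect to be the main obstacle. Here I would exploit the bound $|u|\cdot K\cdot n<f_H(i,u)\leq\bigl(|u|+1\bigr)\cdot K\cdot n$ established above: it says that if $x=f_H(i,u)\in\Delta_{\Int_H}$ then $x$ falls into the block $\bigl(\ell\cdot Kn,(\ell+1)\cdot Kn\bigr]$ exactly for $\ell=|u|$, so two positions $x,y\in\Delta_{\Int_H}$ encode nodes on the same level iff they lie in the same such block. The latter holds iff there is no $z$ with $z\equiv 0\pmod{Kn}$ satisfying $x\leq z<y$ or $y\leq z<x$, and since congruence modulo $Kn$ is $\MSO{}$-expressible over words I would take
\begin{equation*}
	L^{\Int_H}(x,y)=\Delta_{\Int_H}(x)\land\Delta_{\Int_H}(y)\land\neg\exists z\Bigl(z\equiv 0\,(\bmod\,Kn)\land\bigl(x\leq z<y\lor y\leq z<x\bigr)\Bigr)\,.
\end{equation*}
The delicate point is exactly the verification that ``same block of size $Kn$'' faithfully captures ``same level'' on $\Delta_{\Int_H}$, where one has to mind the half-open shape of the blocks; this is where the bound does all the work. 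Everything else is bookkeeping --- that each composed pair $\langle g_i,\mathcal{J}_i\rangle$ is an interpretation, and that the disjoint-union formulae behave as intended. Since $\Int_C$, the $\Int_{\otimes,i}$, and the constant $K$ do not depend on $\bar t$, neither does $\Int_H$.
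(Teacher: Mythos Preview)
Your proposal is correct and follows essentially the same approach as the paper: compose $\Int_C$ with $\Int_{\otimes,i}$ componentwise, take the disjoint union for $\Delta$, $S_d$, $P_a$, $Q_i$, and handle $L$ via the block bound $|u|\cdot Kn<f_H(i,u)\leq(|u|+1)\cdot Kn$. The paper phrases the ``same block'' condition as $\exists z\bigl(z\equiv 1\,(\bmod\,Kn)\land z\leq x,y<z+Kn\bigr)$ rather than your ``no multiple of $Kn$ strictly between'', but the two are equivalent; your use of $\Delta_{\mathcal{J}_i}$ for $Q_i^{\Int_H}$ is in fact slightly cleaner than the paper's $\Delta_{\Int_{\otimes,i}}$, which still includes the $\$$-positions.
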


\noindent The formulae $\Int_H=\Bigl(\Delta_{\Int_H};\bigl(S_d^{\Int_H}\bigr)_{d\in\{0,1\}},\bigl(P_a^{\Int_H}\bigr)_{a\in\Sigma},L^{\Int_H},\bigl(Q_i^{\Int_H}\bigr)_{1\leq i\leq n}\Bigr)$ are as follows:
\begin{align*}
	\Delta_{\Int_H}(x) &= \bigvee\nolimits_{1\leq i\leq n} \Delta_{\Int_{\otimes,i}}(x) \\
	S_d^{\Int_H}(x,y) &= \bigvee\nolimits_{1\leq i\leq n} \bigl(\Delta_{\Int_{\otimes,i}}(x)\land\Delta_{\Int_{\otimes,i}}(y)\land (S_d^{\Int_C})^{\Int_{\otimes,i}}(x,y)\bigr) \\
	P_a^{\Int_H}(x) &= \bigvee\nolimits_{1\leq i\leq n} \bigl(\Delta_{\Int_{\otimes,i}}(x)\land (P_a^{\Int_C})^{\Int_{\otimes,i}}(x)\bigr) \\
	L^{\Int_H}(x,y) &= \Delta_{\Int_H}(x)\land\Delta_{\Int_H}(y)\land\exists z\bigl(z\equiv 1\,(\bmod\,{K\cdot n})\land z\leq x,y<z+K\cdot n\bigr) \\
	Q_i^{\Int_H}(x) &= \Delta_{\Int_{\otimes,i}}(x)
\end{align*}

\subsection{The Sentence $\Gamma_H$}

The sentence $\Gamma_H$ is defined as follows:
\begin{align*}
	\Gamma_H &= \Phi \land \bigwedge\nolimits_{0\leq i<n} \Psi_i \\
	\Phi &= \forall x\forall y\Bigl(x\equiv 1\,(\bmod\, n)\land x\leq y<x+n\to\bigvee\nolimits_{\alpha\in\widehat\Sigma_\Box^n}\bigl(P_\alpha(x)\land P_\alpha(y)\bigr)\Bigr) \\
	\Psi_i &= \forall x\forall y\bigl(x\leq y\land\psi_{i,\Box}(x)\to\psi_{i,\Box}(y)\bigr)
	\tag{$\run i1n$}
\end{align*}
where $\psi_{i,\Box}$ is the formula from the construction of $\Int_{\otimes,i}$ for $\run i1n$. The formula $\Phi$ defines the set $H\bigl((\widehat\Sigma_\Box^n)^\star\bigr)$, whereas $\Psi_i$ defines the set of all words $w\in(\widehat\Sigma_\Box^n)^\star$ with $\pi_i(w)\in\widehat\Sigma^\star\Box^\star$, where $\pi_i\colon (\widehat\Sigma_\Box^n)^\star\to(\widehat\Sigma_\Box)^\star$ is the projection to the $i$-th component. Thus, the conjunction $\bigwedge_{\run i1n} \Psi_i$ defines the language $\otimes(\widehat\Sigma^\star)^n$. Since $H\bigl(\otimes(\widehat\Sigma^\star)^n\bigr)=H\bigl((\widehat\Sigma_\Box^n)^\star\bigr)\cap\otimes(\widehat\Sigma^\star)^n$, $\Gamma_H$ defines this set.

\clearpage

\section{Proof of Lemma~\ref{lemma:ordering}}

\setcounter{mysection}{5}
\setcounter{lemma*}{3}

\begin{lemma*}
Let $\struct A$ and $\struct A_1,\dotsc,\struct A_r$ be infinite linear orderings with $\dom(\struct A)=\dom(\struct A_1)\times\dotsb\times\dom(\struct A_r)$ and satisfying the following two conditions:
\killenumspace
\begin{enumerate}
\item[(1)] $\cmp_{\struct A}(\bar a,\bar b)$ is determined by $\cmp_{\struct A_1}(a_1,b_1),\dotsc,\cmp_{\struct A_r}(a_r,b_r)$ for all $\bar a,\bar b\in A$,
\item[(2)] if $\bar a,\bar b\in A$ differ only in the $i$-th component, then $\cmp_{\struct A}(\bar a,\bar b)=\cmp_{\struct A_i}(a_i,b_i)$.
\end{enumerate}
\killenumspace
Then, there exists a permutation $\pi$ of $\{1,\dotsc,r\}$ such that $\struct A$ is isomorphic to $\struct A_{\pi(1)}\cdot\struct A_{\pi(2)}\dotsm\struct A_{\pi(r)}$.
\end{lemma*}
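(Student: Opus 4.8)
The plan is to reformulate the two hypotheses as a single combinatorial condition on a function $g\colon\{-1,0,1\}^r\to\{-1,0,1\}$, then to show that $g$ must be a lexicographic sign function for some linear order of the $r$ coordinates, and finally to read off the isomorphism by induction on $r$. \textbf{Step 1 (the function $g$).} Since each $\struct A_i$ is infinite, every tuple in $\{-1,0,1\}^r$ equals $\bigl(\cmp_{\struct A_1}(a_1,b_1),\dotsc,\cmp_{\struct A_r}(a_r,b_r)\bigr)$ for suitable $\bar a,\bar b\in A$, so condition~(1) yields a total function $g$ with $\cmp_{\struct A}(\bar a,\bar b)=g\bigl(\cmp_{\struct A_1}(a_1,b_1),\dotsc,\cmp_{\struct A_r}(a_r,b_r)\bigr)$. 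The basic properties of $\cmp$ give $g(-\bar x)=-g(\bar x)$ and $g(\bar x)=0$ iff $\bar x=\bar 0$, and condition~(2) gives $g(c\cdot e_i)=c$, where $e_i$ denotes the $i$-th unit tuple. Transitivity of $<^{\struct A}$ gives a composition property: if $\bar a,\bar b,\bar c\in A$ realise the patterns $\bar x$ for $(\bar a,\bar b)$ and $\bar y$ for $(\bar b,\bar c)$, then the pattern $\bar z$ of $(\bar a,\bar c)$ satisfies $z_i\in x_i\circ y_i$, where $x_i\circ y_i\subseteq\{-1,0,1\}$ is the set of $\cmp$-values compatible with $\cmp_{\struct A_i}(a_i,b_i)=x_i$ and $\cmp_{\struct A_i}(b_i,c_i)=y_i$; moreover $g(\bar x)=g(\bar y)=1$ forces $g(\bar z)=1$ (dually for $-1$), and conversely any $\bar z$ with $z_i\in x_i\circ y_i$ for all $i$ is realisable coordinatewise, again because the $\struct A_i$ are infinite.

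\textbf{Step 2 (a linear order on the coordinates).} For $i\neq j$ declare $i\to j$ iff $g(\bar x)=1$, where $x_i=1$, $x_j=-1$, and $x_k=0$ otherwise. Since $g(\bar x)=0$ would force the realisers to be equal although they differ in coordinate $i$, we have $g(\bar x)\in\{-1,1\}$, and $g(-\bar x)=-g(\bar x)$ shows $\to$ is a tournament. It is transitive: given $i\to j$ and $j\to k$, apply the composition property to the pattern with $(x_i,x_j,x_k)=(1,-1,0)$ (zero elsewhere) followed by the pattern with $(y_i,y_j,y_k)=(0,1,-1)$ (zero elsewhere), choosing the realisers so that $a_j=c_j$; the composite pattern then has $(z_i,z_j,z_k)=(1,0,-1)$ and is zero elsewhere, so $g(\bar z)=1$, i.e.\ $i\to k$. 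Hence $\to$ is a strict linear order on $\{1,\dotsc,r\}$; let $i^{\ast}$ be its maximum, so that $i^{\ast}\to j$ for every $j\neq i^{\ast}$.

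\textbf{Step 3 (the key lemma).} I claim $g(\bar x)=x_{i^{\ast}}$ whenever $x_{i^{\ast}}\neq 0$; by $g(-\bar x)=-g(\bar x)$ it suffices to show $g(\bar x)=1$ when $x_{i^{\ast}}=1$, which I prove by induction on the number $m$ of coordinates $j\neq i^{\ast}$ with $x_j\neq 0$. For $m=0$ this is $g(e_{i^{\ast}})=1$. For $m\geq 1$ pick such a $j$; let $\bar x'$ agree with $\bar x$ except that $x'_j=0$, so $g(\bar x')=1$ by induction, and let $\bar v$ be the pattern with $v_{i^{\ast}}=1$, $v_j=x_j$, and zeros elsewhere. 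Then $g(\bar v)=1$: if $x_j=-1$ this is $i^{\ast}\to j$, and if $x_j=1$ it follows by composing the pattern $e_{i^{\ast}}$ with the pattern $e_j$. Composing $\bar x'$ with $\bar v$ and choosing realisers suitably yields the pattern $\bar z=\bar x$, and $g(\bar x')=g(\bar v)=1$ forces $g(\bar x)=1$.

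\textbf{Step 4 (assembling the isomorphism).} Induct on $r$; the case $r=1$ is immediate from condition~(2). For $r\geq 2$, fix an arbitrary value of the $i^{\ast}$-th coordinate and forget it: this gives a linear ordering $\struct A'$ on $\prod_{j\neq i^{\ast}}\dom(\struct A_j)$ which, by condition~(1), is independent of the chosen value, and $\struct A'$ together with $(\struct A_j)_{j\neq i^{\ast}}$ again satisfies~(1) and~(2). By the induction hypothesis $\struct A'\cong\struct A_{\pi'(1)}\dotsm\struct A_{\pi'(r-1)}$ for some enumeration $\pi'$ of $\{1,\dotsc,r\}\setminus\{i^{\ast}\}$. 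Finally, $\bar a\mapsto\bigl(a_{i^{\ast}},(a_j)_{j\neq i^{\ast}}\bigr)$ is a bijection onto $\dom(\struct A_{i^{\ast}}\cdot\struct A')$, and by the key lemma it preserves order: for $a_{i^{\ast}}\neq b_{i^{\ast}}$ we have $\cmp_{\struct A}(\bar a,\bar b)=\cmp_{\struct A_{i^{\ast}}}(a_{i^{\ast}},b_{i^{\ast}})$, and for $a_{i^{\ast}}=b_{i^{\ast}}$ we have $\cmp_{\struct A}(\bar a,\bar b)=\cmp_{\struct A'}\bigl((a_j)_{j\neq i^{\ast}},(b_j)_{j\neq i^{\ast}}\bigr)$ — exactly the rule defining $\struct A_{i^{\ast}}\cdot\struct A'$. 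Hence $\struct A\cong\struct A_{i^{\ast}}\cdot\struct A_{\pi'(1)}\dotsm\struct A_{\pi'(r-1)}$, i.e.\ the claim holds with $\pi=(i^{\ast},\pi'(1),\dotsc,\pi'(r-1))$. The main obstacle is Steps~2 and~3: the delicate point is to choose the few tuples feeding each composition so that the pattern of $(\bar a,\bar c)$ collapses to exactly the pattern witnessing the next instance of $\to$ or of the lemma; once $g$ is pinned down as lexicographic, Step~4 is routine bookkeeping about lexicographic products.
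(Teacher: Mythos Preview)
Your proof is correct and takes a genuinely different route from the paper's. The paper works concretely with elements: it fixes $\bar x,\bar y,\bar z$ with $x_i<_i y_i<_i z_i$, reads off the permutation from the ordering of the tuples $\bar e^{(i)}=\bar x_{i/y_i}$ (which amounts to your tournament $\to$, since $\cmp_{\struct A}(\bar e^{(i)},\bar e^{(j)})=g(e_i-e_j)$), and then verifies the isomorphism in a single pass by an intricate backward induction showing $\bar c^{(i)}<\bar d^{(i)}$ for suitable interpolating tuples. You instead abstract the situation to the sign function $g\colon\{-1,0,1\}^r\to\{-1,0,1\}$, isolate transitivity of $\to$ and the dominance lemma $g(\bar x)=x_{i^\ast}$ as separate combinatorial facts, and then peel off the top coordinate by induction on $r$. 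Your approach is more modular and makes it transparent \emph{why} the order must be lexicographic (the key lemma pins $g$ down completely); the paper's approach avoids the induction on $r$ and the auxiliary function altogether but at the cost of a denser endgame. Both arguments exploit infiniteness of the $\struct A_i$ in the same way, namely to realise any prescribed coordinatewise $\cmp$-pattern (three elements per coordinate suffice). One small remark: in your Step~3 the phrase ``choosing realisers suitably'' undersells the situation, since in the composition of $\bar x'$ with $\bar v$ the resulting pattern is \emph{forced} to be $\bar x$ (every $x'_l\circ v_l$ is a singleton), so no choice is needed there.
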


\begin{proof}
Let $\struct A=(A;<)$ and $\struct A_i=(A_i;<_i)$ for each $\run i1r$. Consider $\bar a,\bar b\in A$ with $a_i \leq_i b_i$ for all $\run i1r$. If we put $\bar c^{(k)}=(b_1,\dotsc,b_k,a_{k+1},\dotsc,a_r)$ and apply condition (2) repeatedly, we obtain
\begin{equation*}
	\label{eq:star}
	\tag{$\star$}
	\bar a = \bar c^{(0)} \leq \bar c^{(1)} \leq \dotsb \leq \bar c^{(r)} = \bar b\,.
\end{equation*}

\noindent For the rest of this proof, fix $\bar x,\bar y,\bar z\in A$ such that $x_i <_i y_i <_i z_i$ for all $1\leq i\leq r$ and put $\bar e^{(i)}=\bar x_{i/y_i}$. Since the $\bar e^{(i)}$ are mutually distinct, there is a permutation $\pi$ of $\{1,\dots,r\}$ such that
\begin{equation*}
	\bar e^{(\pi(1))} > \bar e^{(\pi(2))} > \dotsb > \bar e^{(\pi(r))}\,.
\end{equation*}
Due to condition (1), this permutation does not depend on the specific choice of $\bar x$ and $\bar y$ as long as $x_i <_i y_i$ for all $\run i1r$. To see that this permutation has the desired property, we need show that the one-to-one correspondence
\begin{equation*}
	f\colon\struct A\to\struct A_{\pi(1)}\dotsm\struct A_{\pi(r)},
	(a_1,\dotsc,a_r)\mapsto (a_{\pi(1)},\dotsc,a_{\pi(r)})
\end{equation*}
is an isomorphism. However, to keep notation clear, we simply assume that $\pi$ is the identity.

Therefore, it suffices to show that $\bar a <^{\struct A_1\dotsm\struct A_r} \bar b$ implies $\bar a < \bar b$ for all $\bar a,\bar b\in A$. Thus, consider $\bar a,\bar b\in A$ satisfying the premise. Due to the definition of $\struct A_1\dotsm \struct A_r$, there exists a $\run k1r$ such that $a_k <_k b_k$ and $a_i=b_i$ for all $\run i1{k-1}$. For all $\run i1r$ we let $g_i=\min_{\struct A_i}(a_i,b_i)$, $h_i=\max_{\struct A_i}(a_i,b_i)$
\begin{gather*}
	\bar c^{(i)}=(g_1,\dotsc,g_i,h_{i+1},\dotsc,h_r)\in A\,,
\intertext{and}
	\bar d^{(i)}=(g_1,\dotsc,g_{i-1},h_i,g_{i+1},\dotsc,g_r)\in A\,.
\end{gather*}
Using induction, we show $c^{(i)} < d^{(i)}$ for all $\run i1r$ with $g_i\not=h_i$. Clearly, $g_i\not=h_i$ implies $g_i <_i h_i$. For the largest $i$ with this property, $c^{(i)} < d^{(i)}$ directly follows from condition (2). For all other $i$ with $g_i\not=h_i$ consider the least $j>i$ with $g_j\not=h_j$. We obtain the following chain of inequalities
\begin{equation*}
	\bar c^{(i)}_{i/x_i,j/y_j} = \bar c^{(j)}_{i/x_i,j/y_j} <
	\bar c^{(j)}_{i/y_i,j/x_j} < \bar d^{(j)}_{i/y_i,j/y_j} <
	\bar d^{(i)}_{i/z_i,j/x_j}\,,
\end{equation*}
where the first and last inequality are due to $\bar e^{(j)}<\bar e^{(i)}$ and condition~(1), whereas the second one is implied by the induction hypothesis $\bar c^{(j)}<\bar d^{(j)}$ and~(1). Finally, a last application of (1) yields $\bar c^{(i)} < \bar d^{(i)}$. Altogether, we obtain
\begin{equation*}
	\bar a \leq \bar c^{(k)} < \bar d^{(k)} \leq \bar b\,,
\end{equation*}
where both non-strict inequalities use \eqref{eq:star}.\qed
\end{proof}

\end{appendix}

\end{document}